\def\sym{\mathsf{sym}}
\def\Vand{\mathsf{Vand}}
\def\bias{\mathsf{bias}}
\newcommand{\summ}[1]{%
  \begingroup\noexpandarg
  \StrLen{#1}[\temp]%
  1^\top
  \ifnum\temp>1
    (#1)%
  \else
    #1%
  \fi 
  \endgroup
}
\def\toggleComments{0}
\newcommand{\chin}[1]{\footnote{{\bf \color{cyan}Chin}: {#1}}}
\newcommand{\manu}[1]{\footnote{{\bf \color{purple}Manu}: {#1}}}
\newcommand{\peter}[1]{\footnote{{\bf \color{green}Peter}: {#1}}}
\newcommand{\chin}[1]{}
\newcommand{\manu}[1]{}
\newcommand{\peter}[1]{}
\def\epsilon{\varepsilon}
\title{Pseudorandomness, symmetry, smoothing: II}
\author{%
Harm Derksen\thanks{Partially supported by NSF grant DMS 2147769.} \\
Northeastern University\\
ha.derksen@northeastern.edu
\and
Peter Ivanov\thanks{Supported by NSF grant CCF-2114116.}\\
Northeastern University\\
ivanov.p@northeastern.edu
\and
Chin Ho Lee\thanks{Work done in part at Harvard University, supported by Madhu Sudan’s and Salil Vadhan’s Simons Investigator Awards.} \\
North Carolina State University\\
chinho.lee@ncsu.edu	
\and
Emanuele Viola\thanks{Supported by NSF grant CCF-2114116.} \\
Northeastern University\\
viola@ccs.neu.edu
}
\begin{document}

\maketitle

\begin{abstract}
We prove several new results on the Hamming weight of bounded uniform and small-bias distributions.

We exhibit bounded-uniform distributions whose weight is anti-concentrated, matching existing concentration inequalities.  This construction relies on a recent result in approximation theory due to Erd\'eyi (Acta Arithmetica 2016).  In particular, we match the classical tail bounds, generalizing a result by Bun and Steinke (RANDOM 2015).
Also, we improve on a construction by Benjamini, Gurel-Gurevich, and Peled (2012).

We give a generic transformation that converts any bounded uniform distribution to a small-bias distribution that almost preserves its weight distribution.  Applying this transformation in conjunction with the above results and others, we construct small-bias distributions with various weight restrictions.  In particular, we match the concentration that follows from that of bounded uniformity and the generic closeness of small-bias and bounded-uniform distributions, answering a question by Bun and Steinke (RANDOM 2015).  Moreover, these distributions are supported on only a constant number of Hamming weights.

We further extend the anti-concentration constructions to small-bias distributions perturbed with noise, a class that has received much attention recently in derandomization.  Our results imply (but are not implied by) a recent result of the authors (CCC 2024), and are based on different techniques.  In particular, we prove that the standard Gaussian distribution is far from any mixture of Gaussians with bounded variance.
\end{abstract}

\section{Introduction and our results}
A distribution $D$ over $\{-1,1\}^n$ is \emph{$(\eps,k)$-biased} if for every $S \subseteq [n]$ of size $0 < |S| \le k$, we have $|\mathbb{E}[D^S]| \le \eps$, where $D^S := \prod_{i\in S} D_i$.  If $\eps = 0$ then any $k$ bits are uniform and $D$ is called \emph{$k$-wise uniform} or simply \emph{$k$-uniform}.  If $k = n$ then $D$ is called \emph{$\eps$-biased}.

The study of these distributions permeates and precedes theoretical computer science.
They were studied already in the 40's under the name of orthogonal arrays \cite{MR22821}, and are closely related to universal hash functions \cite{CaW79}, error-correcting codes (see e.g.~\cite{DBLP:journals/eccc/HatamiH23}), and in their modern guise were introduced in the works \cite{ABI86,ChorGoHaFrRuSm85,NaN90}.

Exploited in countless works, one of the most useful properties of such distributions $D$ is that the distribution of their Hamming weight 
\[
  \summ  D := \sum_{i=1}^n D_i
\]
is approximately the (centered) binomial distribution
\[
  B := \summ U,
\]
where $U$ is uniform in $\{-1,1\}^n$.  (For simplicity, in this work we focus on the simplest setting where the distributions are supported on $\pmo^n$.)

Yet, perhaps surprisingly, available bounds were loose or only applied to specific settings of parameters.  In this work, we prove several new results on the Hamming weight of bounded-uniform distributions, small-bias distributions, and an extension of the latter obtained by perturbing it with noise.  We discuss each of these in turn.

\subsection{Bounded uniformity}
One of the key properties of bounded uniform distributions is their tail bounds.  They have myriad applications in a wide range of areas in computer science, including hashing, load balancing, streaming algorithms, derandomization and cryptography.  For example, the work \cite{SSS95} titled
``Chernoff--Hoeffding bounds for applications with limited independence" has 500+ references according to Google scholar.

These bounds can be obtained from applying Markov's inequality on an upper bound of the higher moments of the Hamming weight $\summ D := \sum_{i=1}^n D_i$ of the distribution~\cite{SSS95,BR94}.  See Section 3.4 in \cite{DuP05} for details of this proof strategy and others.

\begin{fact} \label{fact:tail-bound}
  Let $D$ be a $(2k)$-uniform distribution on $\pmo^n$.
  For every integer $t > 0$, we have
  \[
  \Pr\bigl[ \abs{\summ D} \ge t \bigr]
    \le \sqrt{2} \left(\frac{2kn}{e t^2}\right)^k .
  \]
\end{fact}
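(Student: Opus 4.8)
The plan is to bound the $(2k)$-th moment of $\summ D$ and then apply Markov's inequality. Since $x\mapsto x^{2k}$ is nonnegative and nondecreasing on $[0,\infty)$,
\[
  \Pr\bigl[\,\abs{\summ D}\ge t\,\bigr]
    = \Pr\bigl[\,(\summ D)^{2k}\ge t^{2k}\,\bigr]
    \le \frac{\mathbb{E}\bigl[(\summ D)^{2k}\bigr]}{t^{2k}},
\]
so it suffices to prove $\mathbb{E}[(\summ D)^{2k}] \le \sqrt 2\,(2kn/e)^k$.

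The first step is to use $(2k)$-uniformity to reduce to the binomial. Expanding the power,
\[
  (\summ D)^{2k} \;=\; \sum_{i_1,\dots,i_{2k}\in[n]} D_{i_1}\cdots D_{i_{2k}},
\]
and every monomial on the right involves at most $2k$ distinct coordinates of $D$, so its expectation is unchanged if $D$ is replaced by the uniform $U$ (in both cases it is $1$ when every index occurs an even number of times and $0$ otherwise). Hence $\mathbb{E}[(\summ D)^{2k}] = \mathbb{E}[B^{2k}]$, and it remains to bound the $(2k)$-th moment of the centered binomial.

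The combinatorial heart is the estimate $\mathbb{E}[B^{2k}] \le (2k-1)!!\cdot n^k$. By the same expansion applied to $B=\summ U$, $\mathbb{E}[B^{2k}]$ equals the number of tuples $(i_1,\dots,i_{2k})\in[n]^{2k}$ in which every value occurs an even number of times. Every such tuple can be obtained by first choosing a perfect matching of the $2k$ positions into $k$ pairs (there are $(2k-1)!!$ of these) and then assigning to each pair a common value in $[n]$; this map is onto, which gives the bound $(2k-1)!!\cdot n^k$.

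Finally, $(2k-1)!! = (2k)!/(2^k k!)$, and a Stirling estimate gives $(2k-1)!! \le \sqrt 2\,(2k/e)^k$; substituting yields $\Pr[\abs{\summ D}\ge t] \le \sqrt 2\,(2kn/e)^k/t^{2k} = \sqrt 2\,(2kn/(et^2))^k$. The step requiring the most care is this last one: the inequality $(2k-1)!! \le \sqrt 2\,(2k/e)^k$ is tight up to lower-order factors (both sides equal $\sqrt 2\,(2k/e)^k(1+o(1))$), so a crude one-sided use of Stirling gives a constant strictly larger than $\sqrt 2$. One should instead use the two-sided bound $m! = \sqrt{2\pi m}\,(m/e)^m e^{\lambda_m}$ with $\frac{1}{12m+1}<\lambda_m<\frac{1}{12m}$, note that $\lambda_{2k}-\lambda_k<0$ for all integers $k\ge 1$, and conclude that the constant is in fact strictly below $\sqrt 2$.
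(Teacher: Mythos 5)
Your proof is correct and takes essentially the same route the paper alludes to (Markov on the $(2k)$-th moment, reducing to the binomial via $(2k)$-uniformity, bounding $\E[B^{2k}]$ by $(2k-1)!!\,n^k$ via pair partitions, then a two-sided Stirling/Robbins estimate to obtain $(2k-1)!! = \sqrt{2}(2k/e)^k e^{\lambda_{2k}-\lambda_k}$ with $\lambda_{2k}-\lambda_k<0$). The paper states this as a Fact citing \cite{SSS95,BR94} and \cite{DuP05} rather than proving it, so your writeup is a correct filling-in of that standard argument, including the right care about the constant.
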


Note that one can replace $k$ with any $k' \le k$ in the bound, as any $k$-uniform distribution is also $k'$-uniform.
Thus, this tail bound becomes non-trivial whenever $t \ge c\sqrt{n}$.
In this paper, as in \cite{moti}, every occurrence of ``$c$" denotes a possibly different positive real number.  The notation ``$c_x$" for parameter(s) $x$ indicates that this number may depend on $x$ and only on $x$. Replacing ``$c$'' with $O(1)$ everywhere is consistent with one common interpretation of the big-O notation.

When $t \le c\sqrt{n \log k}$, pseudorandomness results on $k$-uniformity against halfspaces~\cite{benjamini2012kwise,DGJSV-bifh,DKN10} imply that every $k$-uniform distribution puts roughly the same mass on $\Pr[ \summ D \ge t]$ as $B$.

\begin{theorem}[\cite{benjamini2012kwise,DGJSV-bifh,DKN10}] \label{thm:halfspaces}
  Let $D$ be a $k$-uniform distribution on $\pmo^n$.
  For every integer $t > 0$, we have $\abs{\Pr[ \summ D \ge t ] - \Pr[ B \ge t ]} \le c/\sqrt{k}$.
\end{theorem}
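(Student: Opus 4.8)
The plan is to prove this via the standard \emph{sandwiching polynomial} method, exploiting that the test function $\mathbb{1}[\summ x \ge t]$ is symmetric. The key starting point is: if $p$ is any univariate real polynomial of degree at most $k$, then $p(\summ x)$, after reducing $x_i^2 = 1$, is a multilinear polynomial each of whose monomials depends on at most $k$ coordinates, so $\mathbb{E}[p(\summ D)] = \mathbb{E}[p(\summ U)] = \mathbb{E}[p(B)]$; equivalently, $\summ D$ and $B$ have matching moments up to order $k$. Hence it suffices to produce, for each $t$, polynomials $p_\ell, p_u$ of degree $\le k$ with
\[
  p_\ell(s) \;\le\; \mathbb{1}[s \ge t] \;\le\; p_u(s) \qquad \text{for all } s \in \{-n,-n+2,\dots,n\}
\]
and $\mathbb{E}[(p_u - p_\ell)(B)] \le c/\sqrt k$: then $\Pr[\summ D \ge t]$ and $\Pr[B \ge t]$ both lie in the interval $[\,\mathbb{E}[p_\ell(B)],\,\mathbb{E}[p_u(B)]\,]$, whose length is at most $c/\sqrt k$.

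I would construct $p_u$ (and $p_\ell$ symmetrically) in two pieces. Set $R := C\sqrt{n\log k}$. First, include the term $(s/R)^{2m}$ with $2m \asymp \log k$ (and $\le k/2$): it is $\ge 1 \ge \mathbb{1}[s\ge t]$ whenever $\abs{s}\ge R$, and since $\mathbb{E}[B^{2m}] \le (cmn)^m$ one gets $\mathbb{E}[(B/R)^{2m}] \le c/\sqrt k$ for $C$ a large enough constant; this disposes of the tail and, in particular, of the case $\abs{t} > R$. On the bulk $[-R,R]$, rescale to $[-1,1]$ so the threshold sits at some $\tau \in [-1,1]$, and invoke the classical fact that $\mathrm{sgn}(\cdot-\tau)$ is approximable on $[-1,1]\setminus(\tau-\e,\tau+\e)$ to error $\delta$ by a polynomial of degree $O(\e^{-1}\log(1/\delta))$ that stays bounded by $1+\delta$ on all of $[-1,1]$. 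Taking the degree $\asymp k$ and $\delta := 1/\sqrt k$ forces a ``window'' of half-width $\e \asymp (\log k)/k$, i.e.\ half-width $\asymp R(\log k)/k = C\sqrt n\,(\log k)^{3/2}/k$ in the $s$-scale; as $B$ has pointwise mass $O(1/\sqrt n)$, the window carries $B$-mass $O((\log k)^{3/2}/k) \le c/\sqrt k$ (for $k$ above an absolute constant, the small-$k$ case being trivial). Outside the window the approximation error contributes $O(\delta) = O(1/\sqrt k)$; inside it the polynomial is $O(1)$ and the window mass is already accounted for; and a standard low-degree modification concentrated near $\tau$ (or a direct appeal to known one-sided approximations of the step function) upgrades the two-sided estimate to the one-sided sandwich above at the cost of another $O(1/\sqrt k)$. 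Summing the pieces gives $\mathbb{E}[(p_u - \mathbb{1}[\,\cdot \ge t])(B)] \le c/\sqrt k$, and likewise for $p_\ell$.

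The main obstacle is the quantitative approximation-theoretic step and, above all, correctly balancing the window half-width against the local density of the binomial: this trade-off is exactly what pins the rate at $1/\sqrt k$, and it is where the cited works \cite{benjamini2012kwise,DGJSV-bifh,DKN10} do the substantial work (some of which even fool arbitrary halfspaces, not merely symmetric ones), so I would either adapt their argument or cite it directly. Minor bookkeeping along the way: the discreteness and parity of the support $\{-n,-n+2,\dots,n\}$ versus the real interval, the reflection $\Pr[\cdot \ge t] = 1 - \Pr[\cdot \le t-2]$ (useful to reduce to $t\ge 0$), and checking that the degree budgets allotted to the three pieces of $p_u$ sum to at most $k$.
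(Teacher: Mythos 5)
This theorem is imported verbatim from \cite{benjamini2012kwise,DGJSV-bifh,DKN10}; the paper contains no proof of it, so there is no internal argument to compare against. Your overall strategy---reduce to univariate degree-$k$ sandwiching polynomials $p_\ell\le \mathbb{1}[\cdot\ge t]\le p_u$ via symmetry and moment-matching---is correct and is in the spirit of the cited works. But there is a genuine quantitative gap in how you split the polynomial into a low-degree tail term and a degree-$k$ bulk term.

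Concretely: you tune $R=C\sqrt{n\log k}$ so that $\E[(B/R)^{2m}]\le c/\sqrt k$ with $2m\asymp \log k$. That indeed controls the \emph{tail term}, but your bulk polynomial $q$ has degree $\asymp k\gg \log k$ and is only controlled on $[-R,R]$. By the Chebyshev extremal bound (\Cref{fact:chebyshev}), for $|s|>R$ the value $q(s)$ may be as large in magnitude as $(2|s|/R)^{k}$, which the degree-$(\log k)$ term $(s/R)^{2m}$ cannot dominate. So you have neither established the sandwich inequality $p_u(s)\ge \mathbb{1}[s\ge t]$ on $\{|s|>R\}$ (where $q$ may plunge negative), nor accounted for $q$'s contribution to $\E[p_u(B)]$ there; with your $R$ one has $\E[(2|B|/R)^{k}]\asymp (c\sqrt{k/\log k})^{k}$, which is astronomically large. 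The ``sanity check'' that something is off is internal to the paper itself: under your own accounting the window carries $B$-mass only $O((\log k)^{3/2}/k)$, and you could lower $\delta$ to $1/k$ to get total error $O((\log k)^{3/2}/k)$, \emph{beating} $1/\sqrt k$---but \Cref{thm:k-wise-tail-lb-middle} shows the $c/\sqrt k$ rate is attained, so such an improvement is impossible. The fix is that $R$ must be taken $\asymp \sqrt{kn}$ (so that the degree-$k$ Chebyshev growth of $q$ outside $[-R,R]$ is dominated and $p_u\ge\mathbb{1}$ there), and with that $R$ the degree-$k$ budget forces $\e\gtrsim 1/k$, hence window half-width $\gtrsim \sqrt{n/k}$ in the $s$-scale, which under the $O(1/\sqrt n)$ local density of $B$ carries mass $\asymp 1/\sqrt k$. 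That, rather than the $R=C\sqrt{n\log k}$ balance you describe, is what pins the rate at $1/\sqrt k$.
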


Therefore, for thresholds in this regime, essentially the upper and lower bounds on the tail of the (centered) binomial distribution extend to $k$-uniform distributions.

We note that tight estimates on the moments of the sum of independent bounded random variables have been established~\cite{Skorski22}.  But these bounds do not imply the tightness of \Cref{fact:tail-bound}.
More generally, and perhaps surprisingly, lower bounds on the tail mass of $k$-uniform distributions remain scarce.
This question was revisited by Bun and Steinke~\cite{BS15}.
They show that for every $k \ge c\log n$, there exists a $k$-uniform distribution $D$ such that $\Pr[ \summ D \ge t] \ge c^{-k}$ for $t = c\sqrt{nk}$.
However, their result only applies to $k \ge c \log n$ and ties $t$ and $k$, and so it does not apply in several regimes of interest.

\paragraph{Our results.}
In this work, we obtain matching lower bounds to \Cref{fact:tail-bound,thm:halfspaces}, generalizing or strengthening a number of previous works.

\begin{theorem}\label{thm:k-wise-tail-lb-middle}
  For every $k$ and $t$, there exists a $k$-uniform distribution $D$ on $\pmo^n$ such that 
  \[
    \Pr[\summ D \ge t] - \Pr[B \ge t]
    \ge c \sqrt{\tfrac{n}{k}}\Pr[B = t]
    \ge c 2^{-t^2/n}/\sqrt{k}.
  \]
\end{theorem}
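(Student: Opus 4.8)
First, the second inequality is routine: Stirling gives $\Pr[B=t]=\binom{n}{(n-t)/2}2^{-n}\ge c\,2^{-t^2/n}/\sqrt n$ whenever $|t|\le n$ has the same parity as $n$ (and the claimed bound is vacuous otherwise, where one instead invokes the $(t{+}1)$-case), so $\sqrt{n/k}\,\Pr[B=t]\ge c\,2^{-t^2/n}/\sqrt k$. It remains to prove the first inequality, and since the statement is trivial for $t>n$ we take $1\le t\le n$.

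The plan for the first inequality is to recast it as a truncated moment problem. A distribution $D$ on $\pmo^n$ that is symmetric (under $S_n$) is determined by the law $\mu$ of $\summ D$ on the lattice $L:=\{-n,-n+2,\dots,n\}$; since the Krawtchouk polynomials $K_0,\dots,K_k$ span the space $\mathcal P_k$ of polynomials of degree $\le k$, such a $D$ is $k$-uniform if and only if $\mu$ and $B$ have the same moments of degrees $0,1,\dots,k$, and then $\Pr[\summ D\ge t]-\Pr[B\ge t]=\mu([t,\infty))-B([t,\infty))$. So it suffices to produce a probability measure $\mu$ on $L$ with $\mathbb{E}_{S\sim\mu}[S^j]=\mathbb{E}_{S\sim B}[S^j]$ for $0\le j\le k$ and
\[
  \mu\bigl([t,\infty)\bigr)\ \ge\ \Pr[B\ge t]+c\sqrt{n/k}\,\Pr[B=t].
\]
By linear-programming duality, the largest value of $\mu([t,\infty))$ attainable under these moment constraints equals $\min\{\mathbb{E}_B[p]:p\in\mathcal P_k,\ p\ge\mathbf{1}_{[t,\infty)}\text{ on }L\}$, so the task is equivalently to lower bound by $c\sqrt{n/k}\,\Pr[B=t]$ the one-sided $L_1(B)$-error of approximating a threshold from $\mathcal P_k$.

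To build $\mu$ I would use the Markov--Krein extremal measure for this moment problem, namely the measure with the prescribed $k$ moments that maximizes the mass on $[t,\infty)$; it is a Gauss--Radau-type quadrature rule supported on about $k/2$ atoms, one of which sits at (essentially) $t$. For $k=2$ this is precisely the extremizer of Cantelli's inequality, $\mu\approx\tfrac{n}{n+t^2}\delta_t+\tfrac{t^2}{n+t^2}\delta_{-n/t}$, which already yields $\mu([t,\infty))-\Pr[B\ge t]\gtrsim\sqrt{n/2}\,\Pr[B=t]$ throughout $1\le t\le n$. In general the atom at $t$ carries a Christoffel-type weight $\lambda\asymp 1/K^B_{\lfloor k/2\rfloor}(t,t)$, where $K^B_m(t,t)=\sum_{j\le m}K_j(t)^2/\binom nj$ is the reproducing kernel of $\mathcal P_m$ in $L_2(B)$; sharp estimates for Krawtchouk polynomials give $\lambda\asymp\sqrt n\,\Pr[B=t]/\sqrt k$, and this weight is (up to lower-order terms) exactly the excess tail mass that $\mu$ gains over $\Pr[B\ge t]$, since away from the forced node the quadrature reproduces $B$.

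The main obstacle, as I see it, is twofold. First, the extremal measure lives on quadrature nodes rather than on $L$, so it must be discretized --- each atom replaced by a short distribution on nearby lattice points --- in a way that preserves all $k$ moments while disturbing the tail only at a lower order; this is delicate because the constraint that $\mu$ be supported on $L$ interacts with the moment equations. Second, one must carry the estimate $\lambda\asymp\sqrt n\,\Pr[B=t]/\sqrt k$, together with its comparison against $\Pr[B=t]$ and $\Pr[B\ge t]$, uniformly over all $1\le t\le n$; the delicate range is $t\asymp\sqrt{kn}$, where the bound crosses over from the $1/\sqrt k$ regime matching \Cref{thm:halfspaces} to the $c^{-k}$ regime matching \Cref{fact:tail-bound} and generalizing Bun--Steinke, and where the analysis reduces to producing a degree-$k$ polynomial that is bounded on the bulk of $B$ yet large at $t$ --- exactly the kind of extremal statement supplied by Erd\'elyi's theorem.
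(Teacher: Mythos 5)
Your approach is genuinely different from the paper's, but it is a plan rather than a proof: the two ``main obstacles'' you name are exactly where the unresolved technical content lies, and you close neither. Obstacle (a), discretizing the Markov--Krein quadrature from its off-lattice nodes onto $L$ while preserving all $k$ moments and not losing the Christoffel weight at the forced node $t$, is not a routine perturbation --- the moment constraints couple the atoms globally, and one must argue the tail mass survives the re-distribution. Obstacle (b), the claim $\lambda_{\lfloor k/2\rfloor}(t)\asymp\sqrt n\,\Pr[B=t]/\sqrt k$ uniformly over $1\le t\le n$, cannot be quoted off the shelf: Christoffel-function asymptotics for Krawtchouk (as already for Hermite) carry edge corrections of the form $\sqrt{1-t^2/(c\,kn)}$ near $t\asymp\sqrt{kn}$, precisely the crossover regime you flag as delicate, and beyond the bulk the behaviour changes again, so a uniform two-sided estimate would itself require a substantial argument. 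Your closing appeal to Erd\'elyi's inequality via the LP dual is actually the paper's strategy for the companion result, \Cref{thm:k-wise-tail-lb}; it plays no role in the paper's proof of the present theorem.

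For comparison, the paper's proof here is short and avoids moment-problem machinery entirely. It invokes \cite[Thm.~2]{BHLV}: for any modulus $m$ with $k\le n/(8m^2)$ there is a $k$-uniform distribution supported on the sublattice $\{x:\summ x\equiv a\pmod m\}$, with the further property, implicit in their proof, that each supported weight $s$ carries mass at least $(m/4)\Pr[B=s]$. Taking $m:=\sqrt{n/(8k)}$ and choosing $a$ so that $t$ is in the support gives $\Pr[\summ D=t]-\Pr[B=t]\ge(m/4-1)\Pr[B=t]= c\sqrt{n/k}\,\Pr[B=t]$ directly; a two-line averaging step (one of the two tails of $D$ must exceed the corresponding binomial tail by half this point-mass gap, and in the unfavourable case one passes to the bit-complement $\overline D$, also $k$-uniform) converts the point-mass gap into the claimed tail gap. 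So in the paper the $\sqrt{n/k}$ amplification is the modular spacing $m$, not a Christoffel weight, and no discretization step is needed because the BHLV construction already lives on the lattice. Your route would, if completed, give a conceptually interesting alternative tied to the extremal moment problem, but it is both heavier and currently incomplete.
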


Note that $\Pr[B = t]$ on the right hand side cannot be replaced with $\Pr[B \ge t]$.

\Cref{thm:k-wise-tail-lb-middle} matches \Cref{thm:halfspaces} up to constant, in particular removing a logarithmic factor from a lower bound sketched in \cite{benjamini2012kwise}.
Moreover, it shows that \Cref{fact:tail-bound} is tight for $t \in [\sqrt{n},c\sqrt{nk}]$.

For $t \ge c\sqrt{nk}$, we obtain the following lower bound, again matching \Cref{fact:tail-bound}.

\begin{theorem} \label{thm:k-wise-tail-lb}
  For every $k \le (n/9)^{1/3}$ and $t \ge \sqrt{nk}$, there exists a $k$-uniform distribution $D$ on $\pmo^n$ such that $\Pr[ \summ D \ge t] \ge \frac{1}{3k^{3/2}} (\frac{kn}{16t^2})^{k/2}$.
\end{theorem}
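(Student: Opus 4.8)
The plan is to construct the $k$-uniform distribution explicitly as a perturbation of the uniform distribution $U$ by a low-degree function supported near a single Hamming weight far in the tail. Concretely, I would look for a pseudodistribution of the form $D = U + \lambda \cdot p(\summ U) \cdot \mathbf{1}[\summ U \approx t_0]$, or more cleanly, take $D$ proportional to a nonnegative degree-$k$ polynomial $P$ evaluated at $\summ U$. Since any function of the Hamming weight that is a polynomial of degree $\le k$ in the coordinates has correlation zero with every character $U^S$ of size $1 \le |S| \le k$ (the symmetrization of such a character is a polynomial of degree $|S|$ with no constant term, hence orthogonal to $1$, and it is the inner product against $1$ that matters after symmetrizing $P(\summ U)$), the resulting $D$ is automatically $k$-uniform provided $P(\summ U) \ge 0$ pointwise and $\mathbb{E}[P(\summ U)] = 1$. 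So the whole game reduces to: find a univariate polynomial $P$ of degree $\le k$ such that $P(j) \ge 0$ for all $j \in \{-n,-n+2,\dots,n\}$, with $\mathbb{E}[P(B)] = 1$, and $\sum_{j \ge t} P(j)\Pr[B=j]$ as large as possible.

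The natural choice is $P(x) = c_0 \cdot q(x)^2$ where $q$ is a degree-$\lfloor k/2\rfloor$ polynomial (so $P$ is manifestly nonnegative) chosen to concentrate mass far out in the tail. I would take $q$ to be a suitable falling-factorial-type or Chebyshev-type polynomial — e.g.\ $q(x) \propto \prod_{i} (x - a_i)$ with the roots $a_i$ placed in the bulk $[-\sqrt{nk},\sqrt{nk}]$ so that $q(t)^2$ is enormous compared to $\mathbb{E}[q(B)^2]$. Then $\mathbb{E}[B^{2m}] = \Theta(n^m m^{m})$-type moment estimates (or the explicit moments of the binomial, which are products like $n(n-1)\cdots$) control the normalization $\mathbb{E}[q(B)^2]$, while $q(t)^2$ is a clean product; the ratio gives a bound of the shape $(kn/t^2)^{k/2}$ up to the polynomial-in-$k$ loss $k^{-3/2}$ and the constant $1/3$. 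The constraint $k \le (n/9)^{1/3}$ should enter precisely to ensure that the Gaussian/binomial moment approximations are valid and that the roots $a_i$ actually lie inside the support, i.e.\ that $\sqrt{nk} \le n$ comfortably and that error terms in $\binom{n}{(n+j)/2}$ estimates are negligible over the relevant range.

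I would carry this out in the following steps. First, record the reduction: a nonnegative degree-$\le k$ univariate $P$ with $\mathbb{E}[P(B)]=1$ yields a $k$-uniform $D$ with $\Pr[\summ D = j] = P(j)\Pr[B=j]$; this is the symmetrization/orthogonality argument and should be a short lemma (possibly it already appears in the companion paper \cite{moti} or is folklore). Second, exhibit the polynomial: set $q(x) = \binom{x+r}{2m}$-style or $q(x) = \prod_{i=1}^{m}(x^2 - b_i^2)/(\text{const})$ for appropriate $b_i \le \sqrt{nk}$ and $m = \lfloor k/2 \rfloor$, and define $P = q^2/\mathbb{E}[q(B)^2]$. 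Third, lower-bound $\Pr[\summ D \ge t] \ge P(t)\Pr[B=t] = q(t)^2 \Pr[B=t]/\mathbb{E}[q(B)^2]$, and estimate each piece: $\Pr[B=t] \ge$ something like $\frac{c}{\sqrt{n}} 2^{-t^2/n}$ only in the near regime — for $t \ge \sqrt{nk}$ one needs the cruder $\Pr[B=t] \ge \binom{n}{(n-t)/2}2^{-n}$, bounded below by an explicit product — but actually for the clean bound $\frac{1}{3k^{3/2}}(kn/16t^2)^{k/2}$ one likely does not even need $\Pr[B=t]$: instead bound $\Pr[\summ D \ge t]$ directly by $\mathbb{E}[P(\summ D)\mathbf{1}[\cdot \ge t]]$ and use that $P$ is a square to push almost all of $\mathbb{E}[P(B)]=1$ past $t$ by a Paley–Zygmund-type argument on $q(B)$.

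The main obstacle, I expect, is the last estimation step: getting the constant $1/16$ and the polynomial factor $k^{-3/2}$ exactly right, which requires sharp two-sided control on the even moments of the binomial $B$ — namely $\mathbb{E}[B^{2m}] \le (2m-1)!! \, n^m (1 + o(1))$ with the $o(1)$ controlled uniformly for $m \le k \le (n/9)^{1/3}$ — and matching this against the value of $q$ at $t$. The cube-root constraint on $k$ is almost certainly there to make the Gaussian moment comparison $\mathbb{E}[B^{2m}] \approx \mathbb{E}[N(0,n)^{2m}]$ tight enough; I would need to verify that $m^3 \le ck \cdot$ (something) $\le cn$ kills the correction terms, and that the polynomial $q$ can be taken with all roots in $[-\sqrt{nk},\sqrt{nk}] \subseteq [-n,n]$, which is where $k \le (n/9)^{1/3} \le \sqrt{n/9}$ is used. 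Everything else — the orthogonality reduction, nonnegativity of a square, the definition of $D$ — is routine.
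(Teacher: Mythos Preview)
Your reduction is wrong, and this breaks the whole plan. You assert that if $P$ is a univariate polynomial of degree $\le k$ with $P\ge 0$ on the support of $B$ and $\E[P(B)]=1$, then the symmetric distribution $D$ with $\Pr[D=x]=P(\summ x)/2^n$ is $k$-uniform. This is false---indeed essentially backwards. For $D$ to be $k$-uniform one needs $\E_U[P(\summ U)\,U^S]=0$ for every $1\le |S|\le k$; equivalently, $P(B)-1$ must be orthogonal under the binomial measure to every polynomial of degree $\le k$. If $P$ itself has degree $\le k$ this forces $P\equiv 1$. A one-line counterexample: $P(s)=s^2/n$ has degree $2$ and $\E[P(B)]=1$, yet $P(\summ x)=1+\tfrac{2}{n}\sum_{i<j}x_ix_j$, so $\E_D[x_1x_2]=2/n\neq 0$ and $D$ is not even $2$-uniform. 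Your parenthetical justification conflates ``$\mathrm{Sym}(U^S)$ has mean zero'' with ``$\mathrm{Sym}(U^S)$ is orthogonal to $P(\summ U)$''; the latter is exactly what is at stake and does not follow. Consequently the $P=q^2$ construction with $\deg q=\lfloor k/2\rfloor$ gives a distribution that is not $k$-uniform, and none of the downstream estimates are relevant.

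The paper does not attempt a primal construction at all. It works on the dual side of the LP: by strong duality, $\max_D \Pr[\summ D\ge t]=\min_p \E[p(U)]$ over degree-$k$ polynomials $p$ with $p(x)\ge \Id[\summ x\ge t]$. Letting $q$ be the symmetrization of the minimizer with value $\delta$, the paper bounds $|q|$ on $[-\sqrt{kn},\sqrt{kn}]$ by $c\,\delta\,k^{3/2}2^{k}$---this is the key step, and it uses Erd\'elyi's $L^1$-type Markov inequality (\Cref{lem:erdelyi}) on integer points together with the Coppersmith--Rivlin lemma (\Cref{lem:coppersmith-rivlin}) to pass to the whole interval; the hypothesis $k\le (n/9)^{1/3}$ is exactly the condition $3k^2\le \sqrt{kn}$ needed there. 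Then Chebyshev extrapolation (\Cref{fact:chebyshev}) gives $1\le q(t)\le (2t/\sqrt{kn})^k\cdot c\,\delta\,k^{3/2}2^k$, and rearranging yields the bound. If you want a primal argument instead, you would need $P-1$ to lie in the span of Krawtchouk polynomials of degree $>k$ while keeping $P\ge 0$ and $P(t)\Pr[B=t]$ large; that is a genuinely different (and harder) optimization than the one you set up.
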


\Cref{thm:k-wise-tail-lb} generalizes \cite{BS15}.
To illustrate this regime, note that in particular for $t = c\sqrt{n \log n}$ we show that the error is large: $\ge (c/\log n)^k$, whereas $\Pr[ B \ge t] \le 1/n^c$ is exponentially smaller.

There remain some parameter regimes that are not covered by our or previous works.
For example, does \emph{every} $k$-uniform distribution put mass at least $1/k^c$ on strings of weight $c \sqrt{n \log k}$?
In general, we raise the question of establishing the best possible error for any threshold $t$ between $k$-uniform distributions and binomial, interpolating between the error $c/\sqrt{k}$ for $t \le c\sqrt{n \log k}$~\cite{benjamini2012kwise,DGJSV-bifh,DKN10} and the exponentially small error for larger $t$ (\Cref{fact:tail-bound,thm:k-wise-tail-lb}).

\subsection{Small-bias}

We develop a paradigm to obtain small-bias distributions from $k$-uniform distributions while retaining some of their deviation properties.  The paradigm has two steps.  First, \emph{symmetrize} the distribution.  As $k$-uniform distributions are typically supported on nearly balanced strings, this step has the effect of making the bias small on tests of size not too large.  Second, add noise, following \cite{LV-sum}.  This makes the bias small on large tests.  Moreover, the small-bias distributions that we construct are supported on few Hamming weights.

\begin{lemma} \label{lem:bu-to-sb}
  Let $D$ be any $k$-uniform distribution.
  There exists a distribution $D_\bias$ supported on $ck^2$ Hamming weights that is simultaneously $k$-uniform and $(ck/n)^{k/4}$-biased such that for every interval $[a,b]$,
  \[
    \Pr\bigl[ \summ D_\bias \in [a-k, b+k] \bigr]
    \ge \Pr\bigl[\summ D \in [a,b] \bigr] .
  \]
\end{lemma}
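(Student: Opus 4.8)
The plan is to instantiate the paradigm described above in three steps. Let $\mu$ denote the distribution of $\summ D$ on the integers of the relevant parity; since $D$ is $k$-uniform, $\mu$ has the same first $k$ moments as $B$. \textbf{(1)} Coarsen $\mu$ to a distribution $\mu'$ that still matches the first $k$ moments of $B$, is supported on $O(k^2)$ values, and dominates $\mu$ in every interval up to a window of size $O(k)$; let $D_1$ be the symmetric (permutation-invariant) distribution on $\pmo^n$ whose weight distribution is $\mu'$. \textbf{(2)} Observe that $D_1$ is $k$-uniform and has bounded bias on all but the largest tests. \textbf{(3)} Set $D_\bias := D_1 \cdot y$, the coordinatewise product with i.i.d.\ bits $y_i$, $\Pr[y_i=-1]=\delta$, for $\delta \asymp k\log(n/k)/n$, which kills the bias on the large tests while moving the weight by only $O(k)$.

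For step (2): since $\mu'$ matches the first $k$ moments of $B$, the symmetric distribution $D_1$ is $k$-uniform. For a test $S$ of size $j$, permutation-invariance gives $\mathbb{E}[D_1^S] = \mathbb{E}_{w \sim \mu'}[K_j(w)/\binom nj]$, where $K_j$ is the Krawtchouk polynomial and $w$ is the number of $-1$'s; this vanishes for $j \le k$. For $k < j < n-k$ I would bound it using the decay of $|K_j(w)/\binom nj|$ near the balanced value $w \approx n/2$ — of order roughly $(c\min(j,n-j)/n)^{\min(j,n-j)/2}$ in the bulk, with a comparable bound in the relevant moderate-deviation range — together with the concentration of $\mu'$ inherited from \Cref{fact:tail-bound}; since $\min(j,n-j) > k$ this yields $|\mathbb{E}[D_1^S]| \le (ck/n)^{k/4}$. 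Establishing the clean Krawtchouk/Hermite estimate in this range is one technical point. What symmetry does not control is the $O(k)$ largest sizes $n-k \le j \le n$: there $\mathbb{E}[D_1^S]$ can be $\pm1$ (for instance the full-weight test $\chi_{[n]}$ is a $\pm1$-valued function of $\summ D_1$); these are deferred to step (3).

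For step (3): by independence, $\mathbb{E}[D_\bias^S] = \mathbb{E}[D_1^S]\,(1-2\delta)^{|S|}$, so $k$-uniformity and the bias bound on medium tests are preserved (the extra factor is at most $1$), while for the large tests $|\mathbb{E}[D_\bias^S]| \le (1-2\delta)^{n-k} \le (ck/n)^{k/4}$ once $\delta$ has a large enough constant; hence $D_\bias$ is $(ck/n)^{k/4}$-biased. For the weight, $\summ D_\bias - \summ D_1 = -2\sum_{i :\, y_i=-1}(D_1)_i$; conditioned on $D_1=x$ with $x$ nearly balanced (which holds up to a $(k/n)^{\Omega(k)}$-fraction of $\mu'$ by \Cref{fact:tail-bound}), this is a sum of $\pm1$'s over a random $\delta$-fraction of the coordinates, hence of typical size $O(\sqrt{\delta n}) = O(\sqrt{k\log(n/k)}) \le k$ with probability $1 - (k/n)^{\Omega(k)}$ — the key point being the cancellation among the $\pm1$'s, which is why the weight moves by $O(k)$ rather than by $\delta n$. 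Coupling $\summ D_\bias$ to $\summ D_1$ and using the interval domination from step (1) then gives $\Pr[\summ D_\bias \in [a-k,b+k]] \ge \Pr[\summ D \in [a,b]]$, the negligible exceptional probability absorbed by a slight adjustment of constants (and, if an exact support bound is wanted, by first conditioning $y$ on its number of $-1$'s being at most $O(k^2)$, which distorts $\mathbb{E}[y^S]$ only negligibly).

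The main obstacle is step (1): producing a distribution $\mu'$ on only $O(k^2)$ atoms that simultaneously matches the first $k$ moments of $B$ and dominates $\mu$ in every interval up to window $O(k)$. The intended route: by \Cref{fact:tail-bound}, all but a $(k/n)^{\Omega(k)}$-fraction of $\mu$'s mass lies on a short interval, which I partition into $O(k)$ blocks of length $O(k)$, collapsing each block's mass onto a single interior node — this respects the window and the parity/support count — and then restore the first $k$ moments via a sparse signed correction supported on $O(k)$ further nodes, which exists and is small because the coarsened measure already agrees with $B$ on the first $k$ moments up to an error controlled by the block length. Balancing block length against the number of blocks, the size of the correction, and the window and support targets is the delicate part of the whole argument.
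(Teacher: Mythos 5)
Your three-step skeleton (symmetrize, sparsify the weight distribution, then inject noise) and your use of Krawtchouk decay plus \Cref{fact:tail-bound} for the medium-size tests are the same as the paper's. The two steps where you diverge are exactly the two steps where your proposal has gaps.

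\textbf{Sparsification.} The paper does this with a one-shot application of Carath\'eodory's theorem (\Cref{lem:carath}, \Cref{lem:cara-kwise}): the vector of the first $k$ moments lies in the convex hull of the points $(i,i^2,\dots,i^k)$ over the support of $\summ D_\sym$, so it is a convex combination of $k+1$ of them, giving a symmetric $k$-uniform $D'$ on $k+1$ weights. Your blocking-plus-signed-correction route is both heavier and quantitatively inconsistent: by \Cref{fact:tail-bound} the bulk of $\mu$'s mass sits on an interval of width $\Theta(\sqrt{kn})$ (and you need width closer to $n^{1-o(1)}$ to get $(k/n)^{\Omega(k)}$ tail mass), so ``blocks of length $O(k)$'' over that range gives $\Omega(\sqrt{n/k})$ blocks, not $O(k)$ --- you cannot simultaneously have block length $O(k)$, $O(k^2)$ atoms, and $(k/n)^{\Omega(k)}$ leftover mass. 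The ``sparse signed correction'' also risks negative probabilities, and you give no argument that it can be folded back into a bona fide distribution. Carath\'eodory avoids all of this.

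\textbf{Noise.} Here the difference is decisive. The paper adds noise by picking a uniformly random coordinate and resetting it to uniform, independently $k/2$ times. This flips at most $k/2$ coordinates, so $\summ D_\bias$ differs from $\summ D'$ by at most $k$ \emph{deterministically}; that is precisely what produces the clean $[a-k,b+k]$ window in the lemma. It also kills the bias on the $O(k)$ largest tests: with probability at least $1-(1-|S|/n)^{k/2}$ some coordinate of $S$ is reset, and $(1-|S|/n)^{k/2}\le (k/n)^{k/2}$ for $|S|\ge n-k$. Your i.i.d.\ $\mathsf{Ber}(\delta)$ noise with $\delta\asymp k\log(n/k)/n$ flips $\Theta(k\log(n/k))$ coordinates in expectation. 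The $\pm 1$ cancellation argument bounds the typical shift by $O(\sqrt{\delta n})\le k$, but only with high probability, and the lemma's inequality $\Pr[\summ D_\bias\in[a-k,b+k]]\ge\Pr[\summ D\in[a,b]]$ has \emph{no slack} to absorb an exceptional event (the window is $k$, not $ck$, and the inequality is not up to an additive error). The ``condition $y$ on at most $O(k^2)$ flips'' patch does not repair this: a worst-case window under that conditioning is $O(k^2)$, not $k$, and the support also grows beyond $ck^2$ weights. The paper's bounded-noise construction makes these issues disappear by design, and this is the key idea you are missing.

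You are right, incidentally, that one must worry about whether the sparsification step preserves the interval-domination property at all; but the fix is not to cram domination into the sparsifier. The paper's proof leans on the fact that the sparsified support is a subset of the original support together with the deterministic $k$-shift from the noise, and that is a much lighter load to carry than what you set up in step (1).
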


We note that while the idea of adding noise is from \cite{LV-sum}, here we inject much less noise than in \cite{LV-sum}.  This is made possible by the symmetrization step, which lets us reduce the bias for moderate-size parities.  On the other hand, symmetrization destroys linearity which is essential in \cite{LV-sum}.

Using this approach, we carry results on the Hamming weight of $k$-uniform distributions, including our new results for $k$-uniformity (\Cref{thm:k-wise-tail-lb-middle,thm:k-wise-tail-lb}), to small-bias distributions.

First, we obtain the following theorem stating that there are small-bias distributions supported on few Hamming weights that are nearly ``balanced.''  This is obtained from the construction of $k$-uniform distributions in  \cite{BHLV} (\Cref{lem:BHLV}).

\begin{theorem}\label{thm:sb-concentrate}
  For every $k$, there exists a $(ck/n)^k$-biased distribution $D$ supported on at most $ck^2$ weights such that $\abs{\summ D} \le 21\sqrt{kn}$ always.
\end{theorem}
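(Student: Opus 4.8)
The plan is to feed the $k$-uniform construction of \Cref{lem:BHLV} into the symmetrization-plus-noise transformation of \Cref{lem:bu-to-sb}. The only mismatch to reconcile is that \Cref{lem:bu-to-sb} delivers bias $(ck/n)^{k/4}$ whereas the target is $(ck/n)^{k}$; I would fix this by running everything at the inflated parameter $4k$ in place of $k$. Concretely, let $D$ be the $4k$-uniform distribution on $\pmo^n$ given by \Cref{lem:BHLV}, so that $\abs{\summ D} \le 20\sqrt{kn}$ always (the $k$-uniform bound of \Cref{lem:BHLV} is $10\sqrt{kn}$, and the substitution $k \mapsto 4k$ multiplies it by $\sqrt 4 = 2$). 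Applying \Cref{lem:bu-to-sb} to $D$ with parameter $4k$ then produces a distribution $D_\bias$ that is $4k$-uniform, hence $k$-uniform, is supported on $c(4k)^2 = ck^2$ Hamming weights, and is $(c\cdot 4k/n)^{4k/4} = (ck/n)^{k}$-biased, where the generic constants $c$ absorb the factors $4$ and $16$.

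It remains to control $\abs{\summ D_\bias}$. Since $D$ is supported entirely on weights in $[-20\sqrt{kn},\,20\sqrt{kn}]$, I would instantiate the interval in \Cref{lem:bu-to-sb} as $[a,b] = [-20\sqrt{kn},\,20\sqrt{kn}]$, obtaining $\Pr[\summ D_\bias \in [-20\sqrt{kn}-4k,\,20\sqrt{kn}+4k]] = 1$. In the regime $n \ge 16k$ we have $4k \le \sqrt{kn}$, so $\abs{\summ D_\bias} \le 21\sqrt{kn}$ always, as desired. The complementary regime $n < 16k$ is trivial: there the uniform distribution $U$ on $\pmo^n$ already witnesses the statement, being $0$-biased (hence $(ck/n)^{k}$-biased), supported on $n+1 \le 16k \le ck^2$ weights, and satisfying $\abs{\summ U} \le n \le 4\sqrt{kn} \le 21\sqrt{kn}$ since $n < 16k$.

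I do not expect a substantive obstacle; the theorem is essentially a direct composition of \Cref{lem:BHLV} and \Cref{lem:bu-to-sb}. The one point demanding care is the constant bookkeeping: passing from $k$ to $4k$ so that the $(\cdot)^{k/4}$-bias output of \Cref{lem:bu-to-sb} becomes a genuine $(\cdot)^{k}$ bias; checking that the resulting factor-$\sqrt 4$ blow-up of the Hamming-weight bound together with the additive $4k$ slack from the transformation still fits under $21\sqrt{kn}$ (this is where the constant $21 = 20+1$ in the statement is calibrated); and absorbing the $4^2$ factor into the allowed number $ck^2$ of Hamming weights.
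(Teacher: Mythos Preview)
Your proposal is correct and follows exactly the paper's approach: the paper's proof is the one-liner ``\Cref{thm:sb-concentrate} follows from applying [\Cref{lem:bu-to-sb}] to [\Cref{lem:BHLV}],'' and you are simply spelling out the implicit parameter substitution $k\mapsto 4k$ (to turn the $(ck/n)^{k/4}$ bias of \Cref{lem:bu-to-sb} into $(ck/n)^{k}$) and the edge-case bookkeeping that the paper omits. The only slight slip is that \Cref{lem:BHLV} as stated actually yields a $(2k)$-uniform distribution with bound $10\sqrt{kn}$, so to get $4k$-uniform you substitute $k\mapsto 2k$ there, giving $10\sqrt{2kn}\le 15\sqrt{kn}$ rather than $20\sqrt{kn}$---but this only helps, and your constants still fit comfortably under $21$.
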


Now we state results corresponding to \Cref{thm:k-wise-tail-lb-middle} and \Cref{thm:k-wise-tail-lb}.

\begin{theorem}\label{thm:sb-tail-lb-middle}
  For every $k$ and $t$, there exists a $(ck/n)^k$-biased distribution $D$ supported on at most $ck^2$ weights such that 
  \[
    \Pr[\summ D \ge t-k] - \Pr[B \ge t]
    \ge c \sqrt{\tfrac{n}{k}}\Pr[B = t]
    \ge c 2^{-t^2/n}/\sqrt{k}.
  \]
\end{theorem}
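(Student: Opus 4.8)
The plan is to derive \Cref{thm:sb-tail-lb-middle} by feeding the $k$-uniform construction of \Cref{thm:k-wise-tail-lb-middle} into the symmetrize-and-add-noise transformation of \Cref{lem:bu-to-sb}. First I would apply \Cref{thm:k-wise-tail-lb-middle} with $4k$ in place of $k$ to obtain a $4k$-uniform distribution $D_0$ on $\pmo^n$ with
\[
  \Pr[\summ D_0 \ge t] - \Pr[B \ge t]
  \;\ge\; c\sqrt{\tfrac{n}{k}}\,\Pr[B = t]
  \;\ge\; c\,2^{-t^2/n}/\sqrt{k},
\]
where the passage from $4k$ to $k$ in the lower bound is absorbed into the constant $c$ (recall $c$ denotes a possibly different positive real each time it occurs).

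Next I would apply \Cref{lem:bu-to-sb} to $D_0$. Since $D_0$ is $4k$-uniform, this produces a distribution $D := (D_0)_\bias$ on $\pmo^n$ that is $(c\cdot 4k/n)^{(4k)/4} = (ck/n)^{k}$-biased and supported on at most $c(4k)^2 = ck^2$ Hamming weights, and that satisfies $\Pr[\summ D \in [a-4k,\,b+4k]] \ge \Pr[\summ D_0 \in [a,b]]$ for every interval $[a,b]$. Choosing $[a,b]=[t,n]$ and using that $\summ D_0$ and $\summ D$ are supported in $[-n,n]$ gives $\Pr[\summ D \ge t-4k] \ge \Pr[\summ D_0 \ge t]$. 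Subtracting $\Pr[B\ge t]$ from both sides and chaining with the previous display yields the claimed bound, with the window shift $t-4k$ written as $t-k$ after renaming the constant implicit in $k$ (or one simply carries the factor $4$ through explicitly). Note that we never need to compare $\Pr[B\ge t]$ with $\Pr[B\ge t-4k]$: the anti-concentration estimate is inherited verbatim from \Cref{thm:k-wise-tail-lb-middle}, and only the left-hand probability is shifted.

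I do not expect a genuine obstacle here, since all the difficulty is already encapsulated in \Cref{lem:bu-to-sb} (the symmetrization, which controls moderate-size parities, and the injected noise, which controls large parities) and in \Cref{thm:k-wise-tail-lb-middle} (the $k$-uniform anti-concentration construction). The only points that need care are bookkeeping: choosing the rescaling of the uniformity parameter inside \Cref{lem:bu-to-sb} so that the $(\cdot)^{k/4}$ bias exponent becomes the claimed $(\cdot)^{k}$; checking that the $O(k)$ window shift, the $O(k^2)$ bound on the number of supported weights, and the constant-factor loss in the anti-concentration bound are all consistent with the stated parameters; and tracking parity and size conditions so that the small-$k$ and small-$t$ regimes are absorbed into absolute constants. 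The same template, with \Cref{lem:BHLV} or \Cref{thm:k-wise-tail-lb} in place of \Cref{thm:k-wise-tail-lb-middle}, yields the companion small-bias statements such as \Cref{thm:sb-concentrate}.
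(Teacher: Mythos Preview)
Your proposal is correct and matches the paper's approach: the paper simply says that \Cref{thm:sb-tail-lb-middle} follows directly from applying \Cref{lem:bu-to-sb} to \Cref{thm:k-wise-tail-lb-middle}, and your rescaling of the uniformity parameter by a factor of $4$ to turn the $(ck/n)^{k/4}$ bias into the claimed $(ck/n)^k$ is the natural way to make this precise. The resulting shift $t-4k$ versus the stated $t-k$ is a harmless constant-factor discrepancy in the paper's own bookkeeping (and, as you note, one should simply carry the $4$ through rather than ``rename'' $4k$ to $k$).
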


In typical settings, one can replace $t-k$ with $t$.  However this in general depends on $t$ and $k$ so we leave the general statement for simplicity.

\begin{theorem} \label{thm:sb-anticoncentrate}
  For every $k \le (n/9)^{1/3}$ and $t \ge \sqrt{nk}$, there exists a $(ck/n)^{k/2}$-biased distribution $D$ supported on at most $ck^2$ weights such that
  \[
    \Pr[ \summ D \ge t ] \ge \frac{1}{3k^{3/2}} \Bigl( \frac{kn}{64t^2} \Bigr)^{k/2} .
  \]
\end{theorem}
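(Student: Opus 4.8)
The plan is to combine the $k$-uniform anti-concentration construction of \Cref{thm:k-wise-tail-lb} with the symmetrize-and-add-noise transformation of \Cref{lem:bu-to-sb}. The only bookkeeping nuisance is that \Cref{lem:bu-to-sb} relates the weight of the resulting small-bias distribution to that of the input $k$-uniform distribution only after widening the weight window by the uniformity parameter; so the first step is to obtain a tail bound at a threshold slightly above $t$ and then absorb the resulting slack into the constant $64$ in the conclusion (versus the $16$ in \Cref{thm:k-wise-tail-lb}).

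Concretely, I would first invoke \Cref{thm:k-wise-tail-lb} at threshold $t+k$ in place of $t$: the hypotheses still hold, since $t+k \ge t \ge \sqrt{nk}$ and $k \le (n/9)^{1/3}$ is unchanged, so there is a $k$-uniform $D$ with
\[
\Pr[\summ D \ge t+k] \ \ge\ \frac{1}{3k^{3/2}}\Bigl(\frac{kn}{16(t+k)^2}\Bigr)^{k/2}.
\]
Since $k \le n$ (which follows from $k \le (n/9)^{1/3}$), we have $k \le \sqrt{nk} \le t$, hence $t+k \le 2t$ and $16(t+k)^2 \le 64t^2$, so the right-hand side is at least $\frac{1}{3k^{3/2}}(\frac{kn}{64t^2})^{k/2}$, which is exactly the target quantity. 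Next I would apply \Cref{lem:bu-to-sb} to $D$ with the interval $[a,b]=[t+k,n]$, obtaining $D_\bias$ supported on at most $ck^2$ Hamming weights, $(ck/n)^{k/4}$-biased, and satisfying (using $\summ D_\bias \le n$ since $D_\bias$ is supported on $\pmo^n$)
\[
\Pr[\summ D_\bias \ge t] \ =\ \Pr\bigl[\summ D_\bias \in [t,n+k]\bigr] \ \ge\ \Pr\bigl[\summ D \in [t+k,n]\bigr] \ =\ \Pr[\summ D \ge t+k].
\]
Chaining the two displays gives the claimed tail bound for $D_\bias$.

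This already proves the theorem except that the bias comes out as $(ck/n)^{k/4}$ instead of the stated $(ck/n)^{k/2}$. Since \Cref{lem:bu-to-sb} at uniformity level $m$ gives bias $(cm/n)^{m/4}$, support on $cm^2=ck^2$ weights, and a window shift of $m$, choosing $m=2k$ yields exactly $(ck/n)^{k/2}$. So one wants to feed \Cref{lem:bu-to-sb} a $2k$-uniform distribution that still has the tail $(kn/t^2)^{k/2}$, rather than the weaker $(2kn/t^2)^{k}$ produced by running \Cref{thm:k-wise-tail-lb} with parameter $2k$ (which for $t \gg \sqrt{nk}$ is strictly smaller than what is needed). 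Producing such a distribution, e.g.\ by arranging the construction underlying \Cref{thm:k-wise-tail-lb} to be $2k$-wise uniform without degrading its tail, is the step I expect to be the main obstacle. With that in hand one repeats the previous paragraph with $k$ replaced by $2k$ in the call to \Cref{lem:bu-to-sb} (the window shift becomes $2k$, still harmless since $2k \le t$), and the remaining verification of parameter ranges and constants is routine.
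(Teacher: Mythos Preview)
Your approach is exactly the paper's: the authors say only that \Cref{thm:sb-anticoncentrate} ``directly follows from applying \Cref{lem:bu-to-sb} to \Cref{thm:k-wise-tail-lb},'' and your threshold-shift-by-$k$ bookkeeping (absorbing $16(t+k)^2\le 64t^2$) is the right way to make that one-liner precise.

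The ``main obstacle'' you flag is not a gap in your argument but a minor inconsistency in the paper itself. \Cref{lem:bu-to-sb} as stated and proved yields bias $(ck/n)^{k/4}$ (the bottleneck is case~2 of its proof, where $t=(kn^3)^{1/4}$), whereas \Cref{thm:sb-anticoncentrate} claims $(ck/n)^{k/2}$. The paper does not bridge this factor of two in the exponent anywhere; they simply write ``directly follows.'' Your proposed fix of feeding a $2k$-uniform distribution into \Cref{lem:bu-to-sb} would indeed produce bias $(ck/n)^{k/2}$, but, as you correctly note, running \Cref{thm:k-wise-tail-lb} at parameter $2k$ degrades the tail to $(kn/t^2)^{k}$, which is strictly worse for $t\gg\sqrt{nk}$ and does not recover the stated bound. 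There is no way around this with the tools in the paper: the duality argument behind \Cref{thm:k-wise-tail-lb} ties the uniformity level to the exponent in the tail. So your first two paragraphs already constitute a complete and correct proof of \Cref{thm:sb-anticoncentrate} with bias $(ck/n)^{k/4}$, matching what the paper actually proves; the exponent $k/2$ in the theorem statement appears to be a slip (and is immaterial for the downstream applications in Section~5).
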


Several researchers, see for example \cite{BS15}, posed the question of understanding tail bounds for small-bias distributions.
And yet, our understanding of their tail probabilities is notably lacking.
In terms of upper bounds, existing tail bounds follow from \Cref{fact:tail-bound} via a connection to the closeness between small-bias and $k$-uniform distributions~\cite{AGM03,AlonAKMRX07,OZ18}. 
On the opposite direction, Bazzi~\cite{Bazzi15} showed that an exponentially small-biased distribution does not fool the majority function with error $c/\sqrt{n}$.
The recent work \cite{SSS-I} shows that $\Pr[ \summ D \ge \sqrt{nk}] \ge c^k$ for some $(ck/n)^k$-biased distribution $D$.  However, this result applies to a specific threshold and does not scale as the threshold varies.

By contrast, \Cref{thm:sb-anticoncentrate} shows that the tail bounds for bounded uniformity cannot be improved even for small-bias distributions, in general parameter settings.  This in particular answers a question in \cite{BS15} and generalizes the recent work \cite{SSS-I}.

\subsection{Small-bias plus noise}
In the past decade, researchers have considered bounded uniform and small-bias distributions which are perturbed by noise.

\begin{definition}
$N_\rho$ is the noise distribution on $\pmo^n$, where each bit is independently set to uniform with probability $1-\rho$ and $1$ otherwise.  We write $D \cdot N_\rho$ for the coordinate-wise product of $D$ and $N_\rho$, which corresponds to bit-wise xor over $\{0,1\}$.  We also call this the \emph{$\rho$-smoothed distribution}. Note that $x \cdot N_1 = x$ and $x \cdot N_0 = U$, for any $x$.
\end{definition}

Note that smoothing does not increase the distance of any two distributions, with respect to any class of tests which is closed under shifts.  So distinguishing smoothed distributions is at least as hard as distinguishing the corresponding (non-smooth) distributions.
A main motivation for considering smoothed distributions comes from several paradigms for constructing pseudorandom generators (PRGs) that combine $(\eps,k)$-biased distributions in different
ways.  These paradigms have been proposed in the last 15 years or so; for additional background, we refer the readers to the recent monograph \cite{DBLP:journals/eccc/HatamiH23}.

A main result from \cite{SSS-I} is that smoothed $n^{-k}$-bias distributions do not fool thresholds with error less than $c^k$.
This is then used to show that they do not fool, even with constant error, other models such as small-space algorithms or small-depth circuits.

\paragraph{Our results.}
Using the small-bias distributions constructed in this work, we obtain alternative proofs of the main result from \cite{SSS-I}.
We note that these proofs provide different information.
In \cite{SSS-I}, the pseudorandom distributions put \emph{more} mass than the binomial on the tail.
In some of the proofs presented here, the mass will be \emph{less} and the distribution is supported only on a \emph{few} weights.
This also provides a more complete picture of how these distributions can be designed.

In more detail, the small-bias distributions given in \Cref{thm:sb-anticoncentrate,thm:sb-concentrate} put noticeable different masses on its tail than the binomial distribution.
From this, one can argue that they remain so after perturbed with noise, and thus can be distinguished by thresholds.

\begin{theorem} \label{thm:threshold-ck-distinguish}
For any $\rho\in(0,1]$ and $k\leq c \rho^2 n^{1/3}$, there is a $(ck/n)^k$-biased distribution $D$ on $\pmo^n$ and some threshold $t = c\sqrt{nk}/\rho$ such that
\[
  \Pr [B \ge t ]
  \ge \Pr [ \summ{D \cdot N_\rho} \ge t ] + 2^{-ck/\rho^2}.
\]
\end{theorem}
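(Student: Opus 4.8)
The plan is to instantiate \Cref{thm:sb-concentrate}: let $D$ be the $(ck/n)^k$-biased distribution it produces, so $|\summ D|\le W:=21\sqrt{kn}$ always and $D\cdot N_\rho$ inherits the bias. Set $t:=C\sqrt{nk}/\rho$ for a large absolute constant $C$; choosing the constant in $k\le c\rho^2 n^{1/3}$ small enough makes $t\le C\sqrt c\,n^{2/3}$, which places $t$ in the moderate‑deviation range of a sum of $n$ bounded variables. The point is that $D$ sits so deep inside $[-t,t]$ that, even after injecting $\rho$‑noise, $\summ{D\cdot N_\rho}$ still assigns strictly less mass to $[t,\infty)$ than the binomial $B$ does, and by a comfortable margin.

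To see this, condition on $D=x$, where $w:=\summ x$ satisfies $|w|\le W$. Then $\summ{x\cdot N_\rho}=\sum_i x_i(N_\rho)_i$ is a sum of $n$ independent $\pm1$ variables with mean $\rho w$ — which is tiny, $|\rho w|\le 21\rho^2 t/C\ll t$ — and, crucially, with variance $n(1-\rho^2)$, strictly smaller than $\mathrm{Var}(B)=n$. By symmetry and unimodality of such sums the tail $\Pr[\summ{x\cdot N_\rho}\ge t]$ is largest when $w=W$, so it suffices to control that one case. Provided $\rho$ is bounded away from $1$ (so $1-\rho^2=\Omega(1)$), the moderate‑deviation asymptotics give $\Pr[B\ge t]= \Psi(t/\sqrt n)\cdot e^{O(C^3\sqrt c)}$ and $\Pr[\summ{x\cdot N_\rho}\ge t]=\Psi\bigl(\tfrac{t-\rho W}{\sqrt{n(1-\rho^2)}}\bigr)\cdot e^{O(C^3\sqrt c)}$, where $\Psi(z):=\Pr[\mathcal N(0,1)\ge z]$; the polynomial prefactors hidden in $\Psi$ are the same on both sides and cancel. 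Since $t$ is a large multiple of $\sqrt{nk}/\rho$ — this is exactly why the threshold is chosen so — one checks the inequality $\tfrac{t-\rho W}{\sqrt{n(1-\rho^2)}}\ge(1+\rho^2/8)\,\tfrac{t}{\sqrt n}$, i.e.\ the loss from the positive shift $\rho W$ is overwhelmed by the gain $(1-\rho^2)^{-1/2}$ from the reduced variance. Using $\Psi((1+\eps)z)\le\Psi(z)e^{-\eps z^2}$ with $\eps=\rho^2/8$ and $\rho^2(t/\sqrt n)^2=C^2k$, this yields $\Pr[\summ{x\cdot N_\rho}\ge t]\le e^{O(C^3\sqrt c)-C^2k/8}\,\Pr[B\ge t]\le\tfrac12\Pr[B\ge t]$ once $c$ is small and $C$ large (recall $k\ge1$).

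When $\rho$ is so close to $1$ that the moderate‑deviation estimate is unavailable, $1-\rho^2$ is tiny and the plain sub‑Gaussian bound $\Pr[\summ{x\cdot N_\rho}\ge t]\le e^{-(t-\rho W)^2/(2n(1-\rho^2))}$ already makes the smoothed tail super‑exponentially smaller than $\Pr[B\ge t]\ge c'\,(\sqrt n/t)\,e^{-t^2/(2n)}$ (for $\rho=1$ it is literally $0$ since $W<t$). Either way $\Pr[\summ{D\cdot N_\rho}\ge t]\le\tfrac12\Pr[B\ge t]$, so the claimed gap is at least $\tfrac12\Pr[B\ge t]\ge c''(\rho/\sqrt k)\,e^{-C^2k/(2\rho^2)}\ge 2^{-ck/\rho^2}$ for a suitably large constant $c$, the $\log$‑size prefactor being absorbed into the exponent. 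The delicate point throughout — and the main obstacle — is that a bare Chernoff/Hoeffding bound on the smoothed tail is too lossy: it discards the $\Theta(\sqrt n/t)$ polynomial factor carried by the binomial lower bound, whereas the genuine exponential saving from the variance drop is only a \emph{constant} factor when $k=O(1)$ and so cannot be spent on polynomial slack. This forces the moderate‑deviation comparison (so the prefactors cancel), and it forces the separate treatment of $\rho\to1$, where that comparison degrades but concentration becomes trivially strong.
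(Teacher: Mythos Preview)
Your proposal is correct and follows essentially the same route as the paper: take $D$ from \Cref{thm:sb-concentrate}, set $t=C\sqrt{nk}/\rho$, use Cram\'er/Petrov-type moderate-deviation estimates to compare both tails to $\Psi(\cdot)$, verify the key inequality $\tfrac{t-\rho W}{\sqrt{n(1-\rho^2)}}\ge(1+\Theta(\rho^2))\,t/\sqrt n$, and convert the resulting Gaussian-tail ratio into the $2^{-ck/\rho^2}$ gap. Your separate treatment of $\rho\to1$ (where the normalized threshold leaves the moderate-deviation range and one falls back on a direct sub-Gaussian bound) is a small extra care the paper glosses over, but otherwise the arguments coincide.
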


Moreover, we show that a threshold can distinguish smoothed small-bias from \emph{any} $k$-uniform distribution, answering a question that was left open in \cite{SSS-I} in the affirmative.

\begin{theorem} \label{thm:anticoncentration-vs-threshold}
  For every $k \le (n/9)^{1/3}$ and $\rho \in [0,1)$, let $k' := c\log(1/\rho) k$.
  There exists a $(ck/n)^{k/2}$-biased distribution $D$ and a threshold $t$ such that for every $k'$-uniform distribution $D_{k'}$,
  \[
    \Pr[ \summ{D \cdot N_\rho} \ge t ] 
    \ge \Pr[ \summ D_{k'} \ge t] + \Bigl(\frac{c\rho^2}{\log(1/\rho)}\Bigr)^{k/2} .
  \]
\end{theorem}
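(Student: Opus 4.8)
The plan is to take the small-bias distribution $D$ from Theorem~\ref{thm:sb-anticoncentrate} (with bias $(ck/n)^{k/2}$ and supported on $ck^2$ weights) and argue two things: first, that $\summ{D\cdot N_\rho}$ still puts noticeably more mass on a suitable tail $[t,\infty)$ than the binomial does; and second, that \emph{any} $k'$-uniform distribution $D_{k'}$, with $k' = c\log(1/\rho)\,k$, puts \emph{at most} roughly the binomial mass there, up to an additive error governed by Theorem~\ref{thm:halfspaces}. Subtracting, the gap of $(c\rho^2/\log(1/\rho))^{k/2}$ survives provided the threshold $t$ is chosen so that the anti-concentration bump in $D$, after noising, dominates both the binomial mass and the $c/\sqrt{k'}$ error from Theorem~\ref{thm:halfspaces}.

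First I would analyze the effect of noise on the tail mass of $D$. Writing $D\cdot N_\rho$, each coordinate is resampled uniformly with probability $1-\rho$, so $\summ{D\cdot N_\rho}$ is a convolution of a $\rho$-contracted copy of $\summ D$ with an independent binomial-type fluctuation. Concretely, condition on the set $R$ of coordinates kept (each kept with probability $\rho$); then $\summ{D\cdot N_\rho} = \sum_{i\in R} D_i + \sum_{i\notin R} U_i$. Since Theorem~\ref{thm:sb-anticoncentrate} gives $\Pr[\summ D \ge \sqrt{nk}] \ge \frac1{3k^{3/2}}(kn/64t_0^2)^{k/2}$ at threshold $t_0 = \sqrt{nk}$ — and more generally the lower bound scales with $t$ — I would pick the threshold $t \approx \rho\sqrt{nk}$ (scaled appropriately) so that conditioned on a typical $R$ of size $\approx \rho n$, the event $\{\sum_{i\in R} D_i \ge \sqrt{|R| k}\}$ already forces $\summ{D\cdot N_\rho}$ to be large with the complementary binomial part contributing only $O(\sqrt{\rho n})$ fluctuation, which I absorb into $t$. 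This should transfer a $2^{-ck/\rho^2}$-type lower bound (the shape of Theorem~\ref{thm:threshold-ck-distinguish}) on $\Pr[\summ{D\cdot N_\rho}\ge t]$; I would need to be careful that $D$ is still $k$-uniform (noising preserves $k$-uniformity, since $N_\rho$ has independent coordinates) so the $k$-uniform structure is available for the comparison below, and that the support on few weights is used only for the cleaner statement, not essentially here.

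Second, for the upper bound on $\summ{D_{k'}}$: by Theorem~\ref{thm:halfspaces}, $\Pr[\summ{D_{k'}} \ge t] \le \Pr[B\ge t] + c/\sqrt{k'}$. For this to be smaller than $\Pr[\summ{D\cdot N_\rho}\ge t]$ by the claimed margin, I need the anti-concentration bump $\approx (c\rho^2/\log(1/\rho))^{k/2}$ to exceed both $\Pr[B\ge t]$ and $c/\sqrt{k'}$. The binomial mass at $t\approx \rho\sqrt{nk}$ is roughly $2^{-c\rho^2 k}$, comfortably below the bump for appropriate constants; and the $c/\sqrt{k'}$ term is only polynomially small in $k'$, hence \emph{too large} to beat directly — this is where the choice $k' = c\log(1/\rho)k$ is forced the wrong way, so I expect instead to invoke a \emph{higher-moment} comparison: use that $D_{k'}$ is $k'$-uniform to apply Fact~\ref{fact:tail-bound} with parameter $\lfloor k'/2\rfloor$, giving $\Pr[\summ{D_{k'}}\ge t] \le \sqrt2 (k' n / e t^2)^{k'/2}$, which at $t\approx\rho\sqrt{nk}$ is $\approx (k'/(\rho^2 k))^{k'/2} = (c\log(1/\rho)/\rho^2)^{k'/2}$ — still not obviously small. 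The real mechanism must be that the noising on the $D$ side only costs a factor like $\rho^{ck}$ while allowing $t$ to be pushed to $\approx \sqrt{n k'}/$(something), i.e.\ the correct reading is that $t$ is chosen relative to $k'$ so that Fact~\ref{fact:tail-bound} at level $k'$ forces $\Pr[\summ{D_{k'}}\ge t]$ down to $2^{-ck'} = \rho^{ck}\cdot 2^{-ck}$, beaten by the $D$-side mass $2^{-ck/\rho^2}$ times the noising loss. Balancing $2^{-ck'}$ against $(c\rho^2/\log(1/\rho))^{k/2}$ is exactly what pins down $k' = c\log(1/\rho)k$.

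The main obstacle I anticipate is this balancing: getting a single threshold $t$ at which (i) $\summ{D\cdot N_\rho}$ retains an exponential-in-$k$ bump, (ii) $\Pr[B\ge t]$ is negligible compared to that bump, and (iii) every $k'$-uniform distribution is pinned below by Fact~\ref{fact:tail-bound} at level $k'$ to something even smaller than the bump. Item (iii) is the crux: it requires $t$ large enough relative to $\sqrt{nk'}$ that the moment bound bites, yet small enough relative to $\rho\sqrt{nk}$ that the noised $D$ still exceeds it — and tracking how the $\log(1/\rho)$ factor enters the noise analysis (each kept coordinate survives with probability $\rho$, so surviving a depth-$k$ structure costs $\rho^k$, but surviving it \emph{intact enough} to keep the anti-concentration costs only $\approx \rho^{O(k)}$ after the convolution smears things) is the delicate calculation. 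I would handle it by first proving a clean ``noise lemma'' of the form: if $\summ D \ge s$ with probability $p$ and $D$ is supported on few weights, then $\summ{D\cdot N_\rho} \ge c\rho s$ with probability $\ge p\cdot\rho^{ck}$, then plug in $s=\sqrt{nk}$, $p$ from Theorem~\ref{thm:sb-anticoncentrate}, and finally choose $k'$ to make Fact~\ref{fact:tail-bound} at level $k'$ lose the resulting race.
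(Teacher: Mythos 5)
Your high-level plan is the right one---use Fact~\ref{fact:tail-bound} at level $k'$ (not Theorem~\ref{thm:halfspaces}, whose $c/\sqrt{k'}$ error is indeed fatal, as you correctly identified) to pin down $\Pr[\summ D_{k'}\ge t]$, use Theorem~\ref{thm:sb-anticoncentrate} for the anti-concentration bump on the $D$ side, and show the bump survives noising. But two concrete steps in your plan would fail.

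First, the noise lemma you propose is far too lossy. You postulate $\Pr[\summ{D\cdot N_\rho}\ge c\rho s]\ge p\cdot\rho^{ck}$, reasoning that ``surviving a depth-$k$ structure costs $\rho^{O(k)}$.'' That intuition is appropriate for degree-$k$ objects, but $\summ x$ is a \emph{linear} statistic: conditioned on $D=x$, each $(x\cdot N_\rho)_i$ is independent with mean $\rho x_i$, so $\summ(x\cdot N_\rho)$ concentrates around $\rho\cdot\summ x$ with only $O(\sqrt{n})$ fluctuation. Bernstein's inequality (Claim~\ref{claim:noise-tail} in the paper) then gives $\Pr[\,|\summ(x\cdot N_\rho)-\rho\summ x|\ge t\,]\le 1/2$ for $t\ge c\sqrt{n}$, so the tail mass of $D$ at level $t'$ passes to $D\cdot N_\rho$ at level $\rho t'-t\ge t$ at a cost of only a constant factor, not $\rho^{ck}$. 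With your $\rho^{ck}$ loss, the final gap degrades to roughly $(c\rho^{2+c'}/\log(1/\rho))^{k/2}$, strictly weaker than the stated $(c\rho^2/\log(1/\rho))^{k/2}$.

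Second, the parameter choice ``plug in $s=\sqrt{nk}$'' (so the post-noise threshold is $\approx\rho\sqrt{nk}$) directly contradicts your own earlier, correct observation that $t$ must be large enough relative to $\sqrt{nk'}$ for Fact~\ref{fact:tail-bound} at level $k'$ to bite. At $t\approx\rho\sqrt{nk}$ the moment bound gives $\sqrt{2}\,(k'/(e\rho^2k))^{k'/2}=\sqrt{2}\,(c\log(1/\rho)/\rho^2)^{k'/2}$, whose base exceeds $1$, so the bound is vacuous---and since $k'=c\log(1/\rho)k\ge k$, you cannot drop to level $k$ either. The paper instead sets $t=\sqrt{k'n}$ (so Fact~\ref{fact:tail-bound} at level $k'$ yields $\rho^{Ck/2}$) and invokes Theorem~\ref{thm:sb-anticoncentrate} at the \emph{larger} threshold $t'=2t/\rho=2\sqrt{k'n}/\rho$, giving mass $\ge\frac{1}{3k^{3/2}}(kn/64t'^2)^{k/2}=\frac{1}{3k^{3/2}}(\rho^2/(256C\log(1/\rho)))^{k/2}$; after the constant-factor noise loss and with $C$ large enough, this dominates $\rho^{Ck/2}$, finishing the proof.
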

Note that \Cref{thm:anticoncentration-vs-threshold} shows that a smoothed small-bias distribution puts more mass on the tail than the uniform distribution (which is also $k'$-uniform), whereas \Cref{thm:threshold-ck-distinguish} shows the opposite.

In addition, we show that any distribution supported on a few weights, after perturbed with noise, can be distinguished from the binomial distribution by a threshold.
This uses a new bound on the total variation distance between any mixture of few Gaussian distributions with bounded variance the standard Gaussian distribution (\Cref{lem:mixture-distance-improved}).

\begin{theorem} \label{thm:threshold-caratheodory}
  Let $D$ be any distribution on $\pmo^n$ supported on $k$ weights.
  For any $\rho\in(0,1]$, there is some $t$ such that
  \[
    \abs[\big]{ \Pr [B \ge t ] -\Pr [ \summ{D \cdot N_\rho} \ge t ] }
    \ge   2^{-ck/\rho}.
  \]
\end{theorem}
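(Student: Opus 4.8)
The plan is to pass through Gaussian approximation. Write $D \cdot N_\rho$ as a mixture: conditioning on the value $x$ sampled from $D$, the smoothed vector $x \cdot N_\rho$ is a product distribution where the $i$-th coordinate has mean $\rho x_i = \pm \rho$. Hence $\summ{x \cdot N_\rho}$ is a sum of independent bounded random variables with mean exactly $\rho \cdot \summ x$ and variance $n(1-\rho^2)$; since $D$ is supported on $k$ weights $w_1,\dots,w_k$, the distribution of $\summ{D\cdot N_\rho}$, after the rescaling $S \mapsto S/\sqrt{n}$, is within total variation $c/\sqrt{n}$ (Berry--Esseen, summed over the $k$ mixture components) of a mixture $\sum_j p_j \mathcal N(\rho w_j/\sqrt n,\, 1-\rho^2)$ of $k$ Gaussians, each of variance $1-\rho^2 \le 1$. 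Likewise $B/\sqrt n$ is within $c/\sqrt n$ of $\mathcal N(0,1)$.

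Next I would invoke the new total-variation bound, \Cref{lem:mixture-distance-improved}, which lower bounds the distance between the standard Gaussian and any mixture of $k$ Gaussians of variance at most $1-\rho^2$ (equivalently, variance bounded away from $1$ by $\rho^2$). That lemma should give a bound of the form $2^{-ck/\rho}$ on $d_{\mathrm{TV}}\!\left(\mathcal N(0,1),\, \sum_j p_j \mathcal N(\mu_j, \sigma_j^2)\right)$ whenever all $\sigma_j^2 \le 1-\rho^2$. Since total variation distance is the supremum over events of the difference in probabilities, and the events of interest — threshold events $\{S \ge t\}$ — are exactly the events realizing the Kolmogorov distance, I need the slightly stronger statement that the \emph{Kolmogorov} distance (sup over half-lines) between $\mathcal N(0,1)$ and the mixture is at least $2^{-ck/\rho}$. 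This follows because a mixture of $k$ Gaussians minus a Gaussian has a density that is a generalized polynomial times Gaussians with at most $Ck$ sign changes, so a TV lower bound transfers to a Kolmogorov lower bound up to a factor depending only on $k$ (absorb into the constant $c$); alternatively one proves the Kolmogorov version of \Cref{lem:mixture-distance-improved} directly.

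Finally I would combine the pieces: pick the threshold $t^\star = \sqrt n \cdot t_0$ where $t_0$ is a real point achieving the Kolmogorov distance between $\mathcal N(0,1)$ and the Gaussian mixture, and round to the nearest integer (the $\pm 1/\sqrt n$ shift costs only $O(1/\sqrt n)$ by the density bound on the Gaussians). Then
\[
  \bigl| \Pr[B \ge t^\star] - \Pr[\summ{D\cdot N_\rho} \ge t^\star] \bigr|
  \ge 2^{-ck/\rho} - c/\sqrt n .
\]
If $2^{-ck/\rho} \ge 2c/\sqrt n$ we are done after adjusting constants; the complementary regime $2^{-ck/\rho} < 2c/\sqrt n$ forces $k/\rho \ge c\log n$, and there the claimed bound $2^{-ck/\rho}$ is itself $\le n^{-c}$, a regime one can handle separately (e.g. a crude argument, or by noting the statement is only interesting when $2^{-ck/\rho}$ is not vanishingly small and tuning the constant so the inequality holds trivially). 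The main obstacle is \Cref{lem:mixture-distance-improved} itself — proving that no mixture of $k$ variance-$(1-\rho^2)$ Gaussians can be $2^{-ck/\rho}$-close to $\mathcal N(0,1)$; the natural route is via the Fourier/characteristic-function side, where $\widehat{\mathcal N(0,1)}(\xi) = e^{-\xi^2/2}$ while each mixture component has characteristic function $e^{i\mu_j\xi} e^{-\sigma_j^2 \xi^2/2}$ with $\sigma_j^2 \le 1-\rho^2$, so the ratio blows up like $e^{\rho^2 \xi^2/2}$ at large $\xi$; quantifying how a $k$-term exponential sum can track $e^{-\xi^2/2}$ only on a bounded range, and converting a characteristic-function gap into a TV gap, is the delicate step.
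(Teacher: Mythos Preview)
Your reduction to \Cref{lem:mixture-distance-improved} via Berry--Esseen is exactly the paper's argument, and the final bound $2^{-ck/\rho}-c/\sqrt n$ (with the small-$n$ regime absorbed into the constant) is what the paper obtains as well. Two remarks on the details. First, your TV-to-Kolmogorov conversion via counting sign changes is more work than needed: the lemma as stated in the paper already produces an \emph{interval} $I=[a,b]$ on which the probabilities differ by $2^{-ck/\rho}$, and since $\Pr[\,\cdot \in I\,]=\Pr[\,\cdot \ge a\,]-\Pr[\,\cdot > b\,]$, one of the two thresholds $a,b$ must witness at least half the gap --- this is the ``up to a factor $2$'' clause in the lemma statement. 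Second, your proposed Fourier/characteristic-function attack on the lemma itself is \emph{not} the route the paper takes. The paper works directly with densities: writing the mixture PDF as $g$ and factoring $\phi(x)-g(x)=e^{-x^2/(2\sigma^2)}\bigl(e^{\alpha x^2}-\tilde g(x)\bigr)$ with $\alpha=\tfrac{1}{2\sigma^2}-\tfrac12$ and $\tilde g$ a linear combination of $k$ real exponentials, it shows that $e^{\alpha x^2}$ cannot be uniformly approximated on $[-D\sqrt k,D\sqrt k]$ by any such combination. The mechanism is linear-algebraic rather than Fourier-analytic: the Hankel-type matrix $M_k(\tilde g)$ built from samples of $\tilde g$ is singular (a $k$-term exponential sum satisfies a $k$-term linear recurrence), whereas $M_k(q^{x^2})$ has an inverse with entries controlled by Gaussian $q$-binomials, and comparing spectral norms yields the $\exp(-ck/\rho)$ bound.
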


Applying \Cref{thm:threshold-caratheodory} to our small-bias distributions $D$ that are supported on $ck^2$ weights implies that their smoothed distributions $D \cdot N_\rho$ can be distinguished from uniform with advantange $2^{-ck^2/\rho}$.
We remark that one can further improve the advantage to $2^{-ck/\rho}$, by applying \Cref{lem:cara-kwise} to ``sparsify'' any $k$-uniform distribution so that it is supported on $k+1$ weights, and then observing that $D \cdot N_{2\rho} \equiv (D \cdot N_\rho) \cdot N_\rho$ and $D \cdot N_\rho$ is small-biased.

\section{Proof of \Cref{thm:k-wise-tail-lb}}

At a high level, the proof of \Cref{thm:k-wise-tail-lb} follows the same strategy in \cite{BS15}.
However, there are some noticeable differences.
First, we decouple the threshold and the error parameters.
Second, we do not pass the argument to Gaussian distributions.
Finally and most importantly, their proof incurs a loss of a $1/\sqrt{n}$ factor in their lower bound on the tail mass (and thus requires $k \ge c\log n$), which is significant in certain regimes of parameters.

To prove \Cref{thm:k-wise-tail-lb}, we use tools in approximation theory.
In particular, to remove the $1/\sqrt{n}$ loss in \cite{BS15}, instead of applying a Markov--Bernstein type inequality for $L_\infty$-norms, we rely on the following inequality by Erd\'eyi,

\begin{lemma}[Theorem 2.1 ($q=1$ case) in \cite{Erd16}] \label{lem:erdelyi}
  For $m \in \N$ and $L > 0$, let $Q \in \C[x]$ be a degree-$d$ univariate polynomial (possibly with complex coefficients) such that
  \[
    \abs{Q(0)}
    > \frac{1}{L} \Bigl(\sum_{j=1}^m \abs{Q(j)} \Bigr).
  \]
  Then $d \ge 7\sqrt{m/L}$.
\end{lemma}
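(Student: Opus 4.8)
The plan is to prove the inequality, following the approach of \cite{Erd16}, by comparing $Q$ against a rescaled Chebyshev polynomial, which is the extremal object for problems of this type. I would first reduce to the cleanest form. Multiplying $Q$ by a unit complex number makes $Q(0)$ real and positive without changing any $\abs{Q(j)}$, and replacing $Q$ by its real part then preserves $Q(0)$ while only decreasing each $\abs{Q(j)}$; so it suffices to treat polynomials with real coefficients. Taking the contrapositive and normalizing $Q(0)=1$, the claim becomes a lower bound, uniform over all real degree-$d$ polynomials with $Q(0)=1$, on $\sum_{j=1}^m \abs{Q(j)}$ in terms of $d$ and $m$ --- which is the discrete $q=1$ incarnation of a Coppersmith--Rivlin--type extremal problem: how large can a degree-$d$ polynomial be at $0$ while staying small on the integer points $1,\dots,m$?

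To answer this I would use the variational (equioscillation) description of the extremizer. Among polynomials of degree $\le d$ with $Q(0)=1$, the one minimizing $\max_{1\le j\le m}\abs{Q(j)}$ --- and, for the $q=1$ version, the one minimizing $\sum_{j}\abs{Q(j)}$ --- is, up to normalization, a Chebyshev-type polynomial of the set $\{1,\dots,m\}$, closely approximated by $Q^\star(x) = T_d(\varphi(x))/T_d(\varphi(0))$, where $\varphi$ is the affine bijection carrying $[1,m]$ onto $[-1,1]$. Hence every admissible $Q$ has $\max_j\abs{Q(j)} \ge 1/\abs{T_d(\varphi(0))}$, and since $T_d\circ\varphi$ equioscillates, averaging over the unit-spaced points $1,\dots,m$ upgrades this to a lower bound on $\sum_{j=1}^m\abs{Q(j)}$ of order $m/\abs{T_d(\varphi(0))}$. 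The quantitative core is then the growth of $T_d$ just outside $[-1,1]$: the point $\varphi(0) = -1 - \tfrac{2}{m-1}$ lies at distance $\delta := \tfrac{2}{m-1}$ from the interval, and there $\abs{T_d(\varphi(0))} = \cosh\bigl(d\operatorname{arccosh}(1+\delta)\bigr) \le e^{d\sqrt{2\delta}}$, using $\operatorname{arccosh}(1+\delta)\le\sqrt{2\delta}$ together with $\cosh\le\exp$ on $[0,\infty)$. Substituting these bounds into the hypothesis $\abs{Q(0)} > \tfrac1L\sum_{j=1}^m\abs{Q(j)}$ and rearranging forces the asserted lower bound on $d$ in terms of $m$ and $L$.

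The main obstacle I expect is sharpness: extracting the explicit constant $7$ and the precise dependence on $m$ and $L$, rather than a bound off by constant or logarithmic factors. Every step is lossy if carried out crudely --- a generic Markov--Bernstein inequality in place of the true extremal polynomial, the trivial $\sum_j\abs{Q(j)}\le m\max_j\abs{Q(j)}$ in place of the equioscillation average, or a rough estimate of $T_d$ outside $[-1,1]$ --- so one has to work with the genuine Chebyshev extremizer of $\{1,\dots,m\}$, the sharp inequality $\abs{T_d(1+\delta)}\le e^{d\sqrt{2\delta}}$ at $\delta=2/(m-1)$, and an accurate evaluation of $\sum_{j=1}^m\abs{T_d(\varphi(j))}$; controlling that sum (rather than just the maximum) is precisely what the $q=1$ hypothesis in Erd\'elyi's Theorem~2.1 is designed for. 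The complex-coefficient reduction above, and the boundary regimes where $d$ and $m$ are comparable, are handled separately and routinely.
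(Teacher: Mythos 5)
The paper does not actually prove \Cref{lem:erdelyi}; it imports it verbatim from Erd\'elyi's 2016 paper. So your attempt is implicitly being measured against Erd\'elyi's argument, and unfortunately it contains a real gap at its center.

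The problematic step is the ``equioscillation averaging'' that you use to promote the sup-norm Chebyshev bound $\max_j|Q(j)|\ge |Q(0)|/T_d(\varphi(0))$ to the $\ell^1$ bound $\sum_{j=1}^m|Q(j)|\gtrsim m\,|Q(0)|/T_d(\varphi(0))$. That inequality is \emph{not} a consequence of Chebyshev extremality for the sup norm; it is the assertion that $T_d\circ\varphi$ is also $\ell^1$-extremal among degree-$d$ polynomials with $Q(0)=1$, and you assert this without proof. It is in fact false. Take $Q(x)=(1-2x/m)^d$: it has degree $d$, $Q(0)=1$, and $\sum_{j=1}^m|Q(j)|\approx m\int_0^1|1-2u|^d\,du = m/(d+1)$, whereas for $d\ll\sqrt m$ the normalized Chebyshev polynomial of $\{1,\dots,m\}$ has $\sum_j\abs{T_d(\varphi(j))/T_d(\varphi(0))}\approx (2/\pi)m$, essentially independent of $d$. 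So already for modest $d$ there are admissible polynomials whose $\ell^1$ mass on $\{1,\dots,m\}$ is smaller than the Chebyshev value by a factor of order $d$; the $\ell^1$ extremizer is a genuinely different object (as is typical in best $L^1$ approximation, where the characterization is a sign-orthogonality condition and the relevant Chebyshev system is the second-kind one, $U_d$, not $T_d$).

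There is also a quantitative mismatch that signals something is off structurally. Running your plan to the end: with $T_d(\varphi(0))\le e^{2d/\sqrt m}$ and the (unjustified) bound $\sum_j|Q(j)|\gtrsim m/T_d(\varphi(0))$, the hypothesis $|Q(0)| > L^{-1}\sum_j|Q(j)|$ would force $e^{2d/\sqrt m}\gtrsim m/L$, i.e. $d\gtrsim \sqrt m\log(m/L)$. This is the wrong shape: the statement has a square-root dependence on $m/L$, which is neither implied by, nor implies, a $\sqrt m\,\log(m/L)$-type bound across the relevant range of $L$. The $q=1$ case of the Coppersmith--Rivlin/Erd\'elyi machinery is precisely where the $\ell^\infty$ route ceases to apply, and the argument there is not a corollary of the Chebyshev growth estimate you use; it needs a dedicated $\ell^1$ extremal inequality on arithmetic progressions, with different exponents and a different extremizer. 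If you want to prove the lemma rather than cite it, you would need to start from that $\ell^1$ inequality, not from $T_d$.

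The preliminary reduction from $\C$ to $\R$ by phase rotation and taking real parts is fine, and the use of the hyperbolic-cosine form of $T_d$ outside $[-1,1]$ is the right tool for the sup-norm version of this kind of bound. But the pivot from sup norm to $\ell^1$ is where the proposal breaks down.
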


We also need the following inequality due to Ehlich, Zeller, Coppersmith, Rivlin, and Cheney.

\begin{lemma}[Lemma 20 in \cite{BS15}] \label{lem:coppersmith-rivlin}
  Let $p$ be a univariate degree-$d$ polynomial such that $\abs{p(i)} \le 1$ on $i \in \{0, \ldots, m\}$, where $3d^2 \le m$.
  Then $\abs{p(x)} \le 3/2$ for every $x \in [0,m]$.
\end{lemma}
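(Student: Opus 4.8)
The plan is to deduce the bound from Markov's inequality, exploiting that $p$ has small degree. (A direct Lagrange-interpolation estimate over all $m+1$ nodes is useless here, since the Lebesgue constant of $m+1$ equally spaced points grows exponentially in $m$; the point of the hypothesis $3d^2 \le m$ is precisely that $d$ is much smaller than $m$.)

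Set $M := \max_{x \in [0,m]} |p(x)|$, which is finite and attained by compactness; the goal is to show $M \le 3/2$. First I would apply Markov's inequality to $p$ on the interval $[0,m]$: rescaling the classical bound $\max_{[-1,1]}|q'| \le d^2 \max_{[-1,1]}|q|$ for a degree-$d$ polynomial from $[-1,1]$ to $[0,m]$ (length $m$) introduces a factor $2/m$, giving
\[
  \max_{x\in[0,m]} |p'(x)| \;\le\; \frac{2d^2}{m}\,\max_{x\in[0,m]}|p(x)| \;=\; \frac{2d^2}{m}\,M .
\]
Next, fix an arbitrary $x \in [0,m]$ and let $i \in \{0,1,\ldots,m\}$ be the nearest integer to $x$, so $|x-i| \le 1/2$. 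By the fundamental theorem of calculus (the segment from $i$ to $x$ stays inside $[0,m]$) and the hypothesis $|p(i)| \le 1$,
\[
  |p(x)| \;\le\; |p(i)| + |x-i|\cdot \max_{[0,m]}|p'| \;\le\; 1 + \frac12 \cdot \frac{2d^2}{m}\,M \;=\; 1 + \frac{d^2}{m}\,M .
\]
Since $x \in [0,m]$ was arbitrary, taking the maximum over $x$ on the left yields $M \le 1 + \tfrac{d^2}{m}\,M$. Finally, using $3d^2 \le m$, i.e.\ $d^2/m \le 1/3$, this rearranges to $M \le 1/(1 - d^2/m) \le 3/2$, as desired. (The case $d=0$ is trivial, and the Markov step handles it automatically.)

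\textbf{Where the difficulty lies.} There is no serious obstacle once one resists interpolating at all $m+1$ nodes; the only things to get right are the constant in the rescaled Markov inequality and the trivial observation that every real in $[0,m]$ lies within $1/2$ of some integer in $\{0,\ldots,m\}$. An alternative would be to interpolate $p$ at $d+1$ consecutive integer nodes straddling a given $x$ and bound the corresponding (this time bounded) Lebesgue function, but controlling $x$ near the endpoints $0$ and $m$ makes that route fiddlier, so the Markov argument is the cleaner choice.
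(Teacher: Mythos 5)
Your proof is correct. The paper imports this lemma from \cite{BS15} (who in turn attribute it to Ehlich--Zeller and Coppersmith--Rivlin) without reproducing a proof, so there is no in-paper argument to compare against. Your route is the standard clean one: the rescaled Markov bound $\max_{[0,m]}|p'| \le (2d^2/m)\,M$ with $M := \max_{[0,m]}|p|$, the observation that every $x \in [0,m]$ lies within $1/2$ of an integer node in $\{0,\ldots,m\}$, the resulting self-improving inequality $M \le 1 + (d^2/m)M$, and the hypothesis $d^2/m \le 1/3$ solving it to $M \le 3/2$ with exactly the stated constant. The rescaling factor $2/m$ for Markov on an interval of length $m$ is right, the segment from the nearest node to $x$ stays in $[0,m]$ so the FTC step is valid, and the $d=0$ case degenerates harmlessly.
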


We will also use the following extremal property of Chebyshev polynomials $T_k$ (cf. \cite[Propositions~2.4 and 2.5]{SV13-book}).

\begin{fact} \label{fact:chebyshev}
  Let $p$ be a univariate polynomial of degree $k$ such that $\abs{p(t)} \le 1$ on $[-1,1]$.
  For every $s \ge 1$, $\abs{p(s)} \le T_k(s) \le (2\abs{s})^k$, where $T_k$ is the Chebyshev polynomial of degree $k$.
\end{fact}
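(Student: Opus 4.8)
The claim is the classical extremal property of Chebyshev polynomials, so the plan is to reconstruct the standard comparison argument; the bound $T_k(s)\le(2|s|)^k$ is then a one-line estimate. First I would dispose of trivialities: for $k=0$ both sides are bounded by $1$, and the point $s=1$ is trivial since $T_k(1)=1\ge\abs{p(1)}$, so I may assume $k\ge 1$ and fix some $s_0>1$; moreover, replacing $p$ by $-p$ if necessary, it suffices to bound $p(s_0)$ from above by $T_k(s_0)$.

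The core step is a proof by contradiction. Assume $p(s_0)>T_k(s_0)$; since $T_k$ is increasing on $[1,\infty)$ with $T_k(1)=1$, we have $T_k(s_0)>0$, so $c:=p(s_0)/T_k(s_0)$ is a well-defined real with $c>1$. I would then study $r:=c\,T_k-p$, a polynomial of degree at most $k$ (in fact exactly $k$, since $c\neq 0$). Evaluating $r$ at the $k+1$ Chebyshev extrema $x_j:=\cos(j\pi/k)$, where $T_k(x_j)=(-1)^j$, the value $r(x_j)=(-1)^j c-p(x_j)$ has sign $(-1)^j$, because $c>1\ge\abs{p(x_j)}$. Since the $x_j$ strictly decrease from $1$ to $-1$, the sign changes between consecutive extrema produce at least $k$ distinct zeros of $r$ inside $(-1,1)$; together with the zero $r(s_0)=c\,T_k(s_0)-p(s_0)=0$ at $s_0>1$, this gives $k+1$ distinct zeros, forcing $r\equiv 0$, i.e.\ $p=c\,T_k$. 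But then $\abs{p(1)}=c>1$, contradicting $\abs{p}\le 1$ on $[-1,1]$. Hence $\abs{p(s)}\le T_k(s)$ for all $s\ge 1$.

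Finally, for $T_k(s)\le(2s)^k$, I would use the closed form $T_k(s)=\tfrac12\bigl((s+\sqrt{s^2-1})^k+(s-\sqrt{s^2-1})^k\bigr)$ for $s\ge 1$ (obtained from $T_k(\cos\theta)=\cos(k\theta)$ via $s=\cosh u$); since $0\le s-\sqrt{s^2-1}\le s+\sqrt{s^2-1}$ and $\sqrt{s^2-1}\le s$, this yields $T_k(s)\le(s+\sqrt{s^2-1})^k\le(2s)^k=(2\abs{s})^k$. The only point requiring care is the sign-alternation step: one must keep the inequality $c>1$ strict so that $r(x_j)\neq 0$ even when $p$ attains $\pm 1$ at an extremum, and one must separately note $r\not\equiv 0$ (which is exactly what the final contradiction $\abs{p(1)}=c>1$ achieves). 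Everything else is routine, and in any case the statement is standard; see \cite[Propositions~2.4 and 2.5]{SV13-book}.
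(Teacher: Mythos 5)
The paper does not prove this fact; it simply cites \cite[Propositions~2.4 and 2.5]{SV13-book}. Your reconstruction is the standard Chebyshev comparison argument and is correct: the sign alternation of $r := cT_k - p$ at the $k+1$ extrema $x_j = \cos(j\pi/k)$ (strict, because $c > 1 \ge \abs{p(x_j)}$) yields $k$ zeros in $(-1,1)$, the $(k+1)$-th zero at $s_0$ forces $r \equiv 0$, and $p = cT_k$ then violates $\abs{p(1)}\le 1$; the closed form for $T_k$ on $[1,\infty)$ gives the $(2s)^k$ bound. One small imprecision worth flagging: the parenthetical assertion that $r$ has degree exactly $k$ ``since $c\neq 0$'' does not follow — if $p$ and $cT_k$ share a leading coefficient, $r$ has degree strictly less than $k$. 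Fortunately your argument never uses this: you conclude $r\equiv 0$ from ``degree at most $k$ with $k+1$ distinct zeros'' and then derive the contradiction from $\abs{p(1)} = c > 1$, which handles the degenerate case correctly. I would simply delete the parenthetical.
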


\begin{fact}[cf. Lemma~23 in \cite{BHLV}] \label{fact:stirling}
  If $a$ is an integer such that $\abs{a} \le n$ and $a \equiv n \bmod 2$, then $\Pr[B = a] \ge 2^{-a^2/n} \frac{1}{2\sqrt{n}}$.
\end{fact}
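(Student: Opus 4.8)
\emph{Proof plan.} The plan is to reduce to the exact formula for the binomial mass and then apply Stirling's approximation with explicit error terms. Since $B$ is supported on integers congruent to $n \bmod 2$, the hypothesis $a \equiv n \bmod 2$ guarantees that $j := (n+a)/2$ is an integer in $\{0,1,\dots,n\}$ and $\Pr[B = a] = \binom{n}{j} 2^{-n}$. First I would dispatch the boundary cases $j \in \{0,n\}$, i.e. $a = \pm n$: there $\Pr[B=a] = 2^{-n} = 2^{-a^2/n} \ge 2^{-a^2/n}/(2\sqrt n)$, so the bound is immediate. For the remaining range $1 \le j \le n-1$ set $\alpha := j/n$ and $\delta := \alpha - 1/2 = a/(2n)$, so that $\delta \in (-1/2,1/2)$.

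Next I would apply the two-sided Stirling bound $m! = \sqrt{2\pi m}\,(m/e)^m e^{\theta_m}$ with $\tfrac{1}{12m+1} \le \theta_m \le \tfrac{1}{12m}$ to each of $n!$, $j!$, and $(n-j)!$. A short computation collapses the powers of $e$ and yields
\[
  \Pr[B = a] = \binom{n}{j} 2^{-n}
  = \frac{1}{\sqrt{2\pi n\,\alpha(1-\alpha)}} \cdot 2^{-n D(\alpha)} \cdot e^{\theta_n - \theta_j - \theta_{n-j}},
\]
where $D(\alpha) := 1 - H_2(\alpha)$ is the binary divergence of $\alpha$ from $1/2$ (in bits) and $H_2$ is the binary entropy function. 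Since $j, n-j \ge 1$ we have $\theta_n - \theta_j - \theta_{n-j} \ge -1/6$, and since $\alpha(1-\alpha) \le 1/4$ we have $\sqrt{2\pi n\,\alpha(1-\alpha)} \le \sqrt{2\pi n}/2$. Combining,
\[
  \Pr[B = a] \ge \frac{2 e^{-1/6}}{\sqrt{2\pi}} \cdot \frac{1}{\sqrt n} \cdot 2^{-n D(\alpha)}
  \ge \frac{1}{2\sqrt n} \cdot 2^{-n D(\alpha)},
\]
using $2 e^{-1/6}/\sqrt{2\pi} > 1/2$.

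It then remains to show $n D(\alpha) \le a^2/n$, i.e. (writing $\alpha = 1/2+\delta$ and $a = 2\delta n$) that $D(1/2+\delta) \le 4\delta^2$ for all $\delta \in [-1/2,1/2]$. By symmetry I would assume $\delta \ge 0$, in which case this is equivalent to $g(\delta) := 4\delta^2 + H_2(1/2+\delta) - 1 \ge 0$ on $[0,1/2]$. One checks that $g(0) = g(1/2) = 0$, that $g'(0) = 0$, and that $g''(\delta) = 8 - \tfrac{4}{\ln 2\,(1-4\delta^2)}$ is strictly decreasing on $[0,1/2)$, positive at $\delta = 0$, and tends to $-\infty$ as $\delta \to 1/2$. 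Hence $g$ is convex on some interval $[0,\delta^\ast]$ and concave on $[\delta^\ast, 1/2]$: on the convex piece $g \ge 0$ because $g(0) = g'(0) = 0$, and on the concave piece $g \ge 0$ because $g$ is concave and nonnegative at both endpoints. Plugging this back into the previous display finishes the proof.

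The main obstacle is not conceptual but a matter of care with constants and ranges. The entropy inequality $D(1/2+\delta) \le 4\delta^2$ must hold over the \emph{entire} range up to $\delta = 1/2$, and the usual quadratic lower bound $H_2(1/2+\delta) \ge 1 - \Theta(\delta^2)$ coming from a local Taylor expansion degrades near the endpoints; so one genuinely needs the convexity/concavity split above (or an equivalent global argument). Likewise, one must keep the Stirling error and the factor $1/\sqrt{\alpha(1-\alpha)}$ tracked tightly enough to land the stated prefactor $\tfrac{1}{2\sqrt n}$ rather than a weaker constant.
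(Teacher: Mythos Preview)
Your argument is correct. The paper does not supply its own proof of this fact; it is stated with a citation to \cite{BHLV} and used as a black box, so there is nothing to compare against, and your Stirling-plus-entropy route is exactly the standard way to establish such a bound.
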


\paragraph{Proof of \Cref{thm:k-wise-tail-lb}.}
By strong duality (cf. \cite{BS15}), we have
  \[
    \max_D \Pr\bigl[\Id(\summ D \ge t)\bigr] = \min_p \E\bigl[p(U)\bigr],
  \]
  where the maximum is over all $k$-uniform distributions $D$, and the minimum is over all degree-$k$ (upper sandwiching) polynomials $p\colon\pmo^n \to \R$ such that $p(x) \ge \Id(\summ x \ge t)$.

  Let $p$ be a degree-$k$ polynomial attaining $\delta := \min_p \E[p(U)]$.
  Define the univariate polynomial $q$ to be the symmetrization of $p$, that is, $q(\sum_{i=1}^n x_i) := p(x)$.

  We use \Cref{lem:erdelyi,lem:coppersmith-rivlin} to bound $q(t)$ on $t \in [-\sqrt{kn},\sqrt{kn}]$.
  We first state the lemma and defer its proof to the end of this section.
  \begin{lemma} \label{lem:sup-bound-interval}
    Suppose $3k^2 \le \sqrt{kn}$.
    Then we have $\abs{q(t)} \le 3\delta \cdot k^{3/2} \cdot 2^{k}$ for every $t \in [-\sqrt{kn}, \sqrt{kn}]$.
  \end{lemma}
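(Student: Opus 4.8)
The plan is to exploit two consequences of the setup. First, $q(w)\ge \Id(w\ge t)\ge 0$ at every integer $w$ with $|w|\le n$ and $w\equiv n\bmod 2$, so $q$ is \emph{nonnegative} on the support of $B$; and second, $\E[q(B)]=\E[p(U)]=\delta$, i.e.\ $\sum_w \Pr[B=w]\,q(w)=\delta$ is a sum of nonnegative terms. Together with \Cref{fact:stirling} this yields an $L_1$-type bound: since the hypothesis $3k^2\le\sqrt{kn}$ forces $n\ge 9k^3$, every integer $w$ with $w\equiv n\bmod 2$ and $|w|\le\sqrt{kn}+2$ has $\Pr[B=w]\ge 2^{-(\sqrt{kn}+2)^2/n}/(2\sqrt n)\ge c\,2^{-k}/\sqrt n$, whence $\sum_{w} q(w)\le c\sqrt n\,2^k\,\delta$, the sum ranging over all such $w$. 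This is the replacement for the $L_\infty$ (Markov--Bernstein) estimate used in \cite{BS15}: the \emph{same} quantity $c\sqrt n\,2^k\delta$ now controls an $L_1$-sum over $\approx\sqrt{kn}$ points rather than the value at a single point, and this is the source of the eventual saving of a $\sqrt n/k^{3/2}$ factor.

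Next I would use Erd\'eyi's inequality to convert this into a pointwise bound at every relevant integer. Fix an integer $y_0$ with $y_0\equiv n\bmod2$ and $|y_0|\le\sqrt{kn}+2$, and apply \Cref{lem:erdelyi} to the degree-$(\le k)$ polynomial $Q(x):=q(y_0+2x)$ (when $y_0\le 0$) or $Q(x):=q(y_0-2x)$ (when $y_0>0$), with $m$ equal to the number of same-parity integers strictly between $y_0$ and the far endpoint $\pm(\sqrt{kn}+2)$ of the window, so that $m\ge c\sqrt{kn}$, and with $L$ a suitable constant times $m/k^2$ chosen so that $7\sqrt{m/L}>k\ge\deg Q$. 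Then the hypothesis of \Cref{lem:erdelyi} cannot hold, giving $|q(y_0)|=|Q(0)|\le \tfrac1L\sum_{j=1}^m|Q(j)|=\tfrac1L\sum_{j=1}^m q(y_0\pm 2j)\le \tfrac{c\,k^2}{m}\cdot c\sqrt n\,2^k\delta\le c\,k^{3/2}\,2^k\,\delta$; here the $y_0\pm 2j$ are same-parity integers in the window, so $q$ is nonnegative there and $|Q(j)|=q(y_0\pm2j)$, while the last step uses $m\ge c\sqrt{kn}$, which is exactly how $\sqrt n/m\approx 1/\sqrt{kn}$ turns the stray $\sqrt n$ into $k^{3/2}$ after multiplying by $k^2$.

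It remains to pass from integers to the interval. Having bounded $|q|$ by $c\,k^{3/2}2^k\delta$ at every same-parity integer in $[-\sqrt{kn}-2,\sqrt{kn}+2]$ --- of which there are at least $\sqrt{kn}\ge 3k^2$ --- I would reindex them as $\{0,1,\dots,m'\}$ with $m'\ge 3k^2$ and invoke \Cref{lem:coppersmith-rivlin} with $d=k$, concluding $|q(y)|\le \tfrac32\cdot c\,k^{3/2}2^k\delta$ for all real $y$ in their convex hull, which contains $[-\sqrt{kn},\sqrt{kn}]$. Keeping track of the constants so that the product of all the absorbed factors is at most $3$ then gives $|q(y)|\le 3\delta\cdot k^{3/2}\cdot 2^k$ on $[-\sqrt{kn},\sqrt{kn}]$, as claimed.

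The main obstacle is the joint tuning in the Erd\'eyi step: one must choose the direction, the length $m$, and the parameter $L$ so that \emph{simultaneously} (i) $Q(0),\dots,Q(m)$ all land at same-parity integers inside a window narrow enough that \Cref{fact:stirling} costs only a constant factor in $2^{-k}/\sqrt n$; (ii) $7\sqrt{m/L}$ strictly exceeds $\deg Q$, so the contrapositive of \Cref{lem:erdelyi} applies; and (iii) $1/L\approx k^2/m\approx k^{3/2}/\sqrt n$ is small enough that the $\sqrt n$ from \Cref{fact:stirling} is fully absorbed. Both (ii) and (iii) push on $m$ and $L$, and reconciling them in the regime $3k^2\le\sqrt{kn}$ is precisely what removes the $1/\sqrt n$ loss of \cite{BS15}; the rest is Chebyshev-style bookkeeping.
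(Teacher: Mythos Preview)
Your proposal is correct and follows essentially the same route as the paper: use nonnegativity of $q$ on the support of $B$ together with $\E[q(B)]=\delta$ and \Cref{fact:stirling} to get an $L_1$ bound $\sum_w q(w)\le c\sqrt{n}\,2^k\delta$ over the same-parity integers in the window, feed this into the contrapositive of Erd\'eyi's inequality with $m\asymp\sqrt{kn}$ and $L\asymp m/k^2$ to get the pointwise bound $ck^{3/2}2^k\delta$ at each such integer, and then apply \Cref{lem:coppersmith-rivlin} to pass to the full interval. The paper organizes this as a separate claim for the integer bound and handles the ``direction'' issue by the WLOG $t_0>0$ with the fixed choice $m=\sqrt{nk}/2$, but the substance is identical.
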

  Note that the upper bound is independent on $n$, whereas the bound in \cite[Theorem 9]{BS15} has a polynomial dependence on $n$.

  We now continue the proof assuming \Cref{lem:sup-bound-interval}.
  Let $s := \sqrt{kn}$.
  Observe that $q(t) \ge \Id(t \ge t) = 1$.
  Let $\wt{q}_s(\theta) = q(\theta s)$.
  By \Cref{lem:sup-bound-interval}, we have $\max_{\theta \in [-1,1]} \abs{\wt{q}(\theta)} \le 3 \delta \cdot k^{3/2} \cdot 2^{k}$.
  By \Cref{fact:chebyshev}, for $t \ge s$, we have
  \begin{align*}
    1
    \le q(t)
    &= \wt{q}_s\Bigl(\frac{t}{s}\Bigr) \\
    &\le \Bigl(\frac{2t}{s}\Bigr)^k \cdot \max_{\theta \in [-1,1]} \abs{\wt{q}_s(\theta)} \\
    &\le \Bigl(\frac{2t}{\sqrt{kn}}\Bigr)^k \cdot 3 \delta \cdot k^{3/2} \cdot 2^{k} .
  \end{align*}
  Rearranging gives $\delta \ge \frac{1}{3k^{3/2}} (\frac{kn}{16t^2})^{k/2}$, proving \Cref{thm:k-wise-tail-lb}. \qed

  \medskip

  To prove \Cref{lem:sup-bound-interval}, we use \Cref{lem:erdelyi} to bound $q(t)$ on the integer points between $-\sqrt{kn}$ and $\sqrt{kn}$, and then extend the bound to the whole interval using \Cref{lem:coppersmith-rivlin}.

  \begin{claim} \label{claim:sup-bound-integers}
    $0 \le q(t) \le 2\delta \cdot k^{3/2} \cdot 2^k$ for every $t \in \{-\sqrt{kn}, \ldots, \sqrt{kn} \}$, 
  \end{claim}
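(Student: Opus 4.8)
The plan is to squeeze the univariate polynomial $q$ from the duality argument between a trivial lower bound and a sharp upper bound at the relevant integers. Recall the three facts available about $q$: it has degree at most $k$; at every support point $j\in\{-n,-n+2,\dots,n\}$ we have $q(j)=p(x)\ge\Id(\summ x\ge t)\ge 0$ (with $q(j)\ge 1$ once $j\ge t$) for any $x$ with $\summ x=j$; and $\sum_j\Pr[B=j]\,q(j)=\E[p(U)]=\delta$ is a sum of nonnegative terms. In particular $q\ge 0$ at all support points, which is the lower half of the claim there; and for a support point $a$ with $\abs{a}\le\sqrt{kn}$, keeping only the $j=a$ term of the $\delta$-identity and using \Cref{fact:stirling} gives the crude estimate $q(a)\le\delta/\Pr[B=a]\le 2\sqrt{n}\cdot 2^{a^2/n}\delta\le 2\sqrt{n}\cdot 2^k\delta$. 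The content of the claim is to replace this factor $2\sqrt{n}$ by the dimension-free $2k^{3/2}$; indeed $2\sqrt{n}$ is precisely the loss suffered by the $L_\infty$ Markov--Bernstein route in \cite{BS15}.

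The crux is to feed \Cref{lem:erdelyi} the right polynomial. Fix a support point $a$ with $\abs{a}\le\sqrt{kn}$. At least one of the two rays from $a$ stays inside $[-\sqrt{kn},\sqrt{kn}]$ for length at least $\sqrt{kn}$; put $m:=\lfloor\sqrt{kn}/2\rfloor$ and let $Q(x):=q(a+2x)$ if that ray points right and $Q(x):=q(a-2x)$ otherwise. Then $Q$ has degree at most $k$, $Q(0)=q(a)$, and $Q(1),\dots,Q(m)$ are values of $q$ at support points inside $[-\sqrt{kn},\sqrt{kn}]$. Choosing $L:=\sqrt{n}/k^{3/2}$ makes $m/L=\Theta(k^2)$, hence $7\sqrt{m/L}>k\ge\deg Q$, so the contrapositive of \Cref{lem:erdelyi} gives
\[
  q(a)=\abs{Q(0)}\le\frac{1}{L}\sum_{j=1}^m\abs{Q(j)}=\frac{1}{L}\sum_{j=1}^m q(a\pm 2j),
\]
the absolute values disappearing because each $Q(j)$ is a value of $q$ at a support point, hence nonnegative.

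It remains to bound the sum. Each $a\pm 2j$ with $1\le j\le m$ is a support point with $\abs{a\pm 2j}\le\sqrt{kn}$, so \Cref{fact:stirling} gives $\Pr[B=a\pm 2j]\ge 2^{-k}/(2\sqrt{n})$; since $\sum_j\Pr[B=j]\,q(j)=\delta$ has nonnegative terms, $\sum_{j=1}^m q(a\pm 2j)\le 2\sqrt{n}\cdot 2^k\delta$. Therefore $q(a)\le(2\sqrt{n}\cdot 2^k\delta)/L=2k^{3/2}2^k\delta$, which together with $q(a)\ge 0$ is the claim at every support point. For an integer $t$ with $\abs{t}\le\sqrt{kn}$ that is not a support point, pass to the rescaled polynomial $r(i):=q(2i+(n\bmod 2))$, whose integer arguments enumerate the support points: $r$ has degree at most $k$ and, by the bound just proved, satisfies $\abs{r}\le 2k^{3/2}2^k\delta$ on a run of at least $c\sqrt{kn}$ consecutive integers, so since $3k^2\le\sqrt{kn}$, \Cref{lem:coppersmith-rivlin} controls $r$---and hence $q$---at the intermediate points as well (the stray factor $3/2$ is absorbed in the later passage to \Cref{lem:sup-bound-interval}).

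I expect the only genuine difficulty to be the calibration of $m$ and $L$ in the Erd\'elyi step, which must meet two opposing demands: $L$ small enough that $7\sqrt{m/L}>k$, so that \Cref{lem:erdelyi} applies to our degree-$k$ polynomial, and $L$ large enough that dividing by $L$ defeats the $2\sqrt{n}$ thrown up by \Cref{fact:stirling}, i.e.\ $L\ge c\sqrt{n}/k^{3/2}$. These are compatible only because the number of support points inside $[-\sqrt{kn},\sqrt{kn}]$ is $\Theta(\sqrt{kn})$, comfortably more than $k^2$; this abundance, together with the companion use of \Cref{lem:coppersmith-rivlin}, is where the hypothesis $3k^2\le\sqrt{kn}$ enters. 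Everything else---the duality setup, the Stirling estimate, and the rescaling for non-support integers---is routine.
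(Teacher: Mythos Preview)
Your argument is correct and follows essentially the same route as the paper: apply the contrapositive of Erd\'elyi's inequality to the shifted polynomial $Q(j)=q(a\pm 2j)$, with $m\approx\sqrt{kn}/2$ and $L\approx\sqrt{n}/k^{3/2}$, and bound $\sum_{j=1}^m q(a\pm 2j)$ by $2\sqrt{n}\cdot 2^k\delta$ using the Stirling estimate together with the nonnegativity of $q$ on support points. Your treatment is in fact slightly more careful than the paper's in two places: you make the choice of ray direction explicit (the paper just writes ``without loss of generality $t_0>0$'' and always steps left), and you note that non-support integers require the Coppersmith--Rivlin step, whereas the paper's proof of the claim silently restricts to even $t_0$ and defers that extension to \Cref{lem:sup-bound-interval}.
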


  \begin{proof}[Proof of \Cref{claim:sup-bound-integers}]
    Assume $n$ is even so that $B$ is supported on even integers.
    Fix an even integer $t_0$ such that $\abs{t_0} \le \sqrt{kn}$ and without loss of generality assume $t_0 > 0$. 
    As $\Pr[B = t] \ge \Pr[B = \sqrt{nk}]$ for every even $t$ with $\abs{t} \le \sqrt{nk}$ and $q$ is nonnegative, we have
    \begin{align*}
      \Pr\Bigl[B = \sqrt{kn}\Bigr] \sum_{j=1}^{\sqrt{nk}/2} q(t_0 - 2j)
      &\le \sum_{j=1}^{\sqrt{nk}/2} \Pr\bigl[ B = t_0 - 2j \bigr] q(t_0 - 2j) \\
      &\le \E[q(B)]
      = \delta .
    \end{align*}
    Rearranging and using \Cref{fact:stirling} gives
    \[
      \sum_{j=1}^{\sqrt{nk}/2} q(t_0 - 2j)
      \le \frac{\delta}{\Pr[B = \sqrt{kn}]}
      \le 2\delta \cdot \sqrt{n} \cdot 2^k .
    \]
    Consider the polynomial $Q(t) := q(t_0 - 2t)$ of degree $k$.
    Let $m = \sqrt{nk}/2$, and $L = m/k^2 = n^{1/2} k^{-3/2}$.
    As $k < 7k = 7\sqrt{m/L}$, by the contrapositive of \Cref{lem:erdelyi}, we have
    \begin{align*}
      \abs{q(t_0)}
      = \abs{Q(0)}
      &\le \frac{1}{L} \sum_{j=1}^m \abs{Q(j)} && \text{(\Cref{lem:erdelyi})} \\
      &= \frac{1}{L} \sum_{j=1}^{\sqrt{kn}/2} q(t_0 - 2j) \\
      &\le 2\delta \cdot \frac{\sqrt{n} \cdot 2^k}{L}
      = 2\delta \cdot k^{3/2} \cdot 2^k . \qedhere
    \end{align*}
  \end{proof}
  \begin{proof}[Proof of \Cref{lem:sup-bound-interval}]
    Let $Q$ be the degree-$k$ polynomial $Q(j) := q(\sqrt{kn} - 2j)$.
    By \Cref{claim:sup-bound-integers}, we have $\abs{Q(j)} \le M$ on $j \in \{0, \ldots, \sqrt{kn}\}$, where $M := 2\delta \cdot k^{3/2} \cdot 2^{k}$.
    As $3k^2 \le \sqrt{kn}$ for $k \le (n/9)^{1/3}$, by \Cref{lem:coppersmith-rivlin}, we have $q(t) \le 3M/2$ for every $t \in [0,\sqrt{kn}]$.
  \end{proof}

\section{Proof of \Cref{thm:k-wise-tail-lb-middle}} \label{sec:k-wise-tail-middle}

We rely on the result from \cite[Theorem 2]{BHLV} that for every $a$, $m$, and $k \le n/(8m^2)$, there is a $k$-uniform distribution $D$ supported on $\{x \in \pmo^n : \summ x \equiv a \pmod{m} \}$.
Moreover, implicit in the proof they show that the probability mass on every point $s$ in the support of $D$ is at least $(m/4) \cdot \Pr[B = s]$.

Let $m := \sqrt{n/(8k)}$.
We can pick an integer $a$ such that $t$ belongs to the support of some $k$-uniform $D$, from which we conclude that 
\[
  \eps := \Pr[ \summ D = t] - \Pr[ B = t]
  \ge (m/4-1) \Pr[B = t] .
\]
Now, we have either
\begin{align*}
  \Pr[ \summ D \ge t] - \Pr[ B \ge t]
  &\ge \eps/2 \text{ or } \\
  \Pr[ \summ D \le t] - \Pr[ B \le t]
  &\ge \eps/2 ,
\end{align*}
as otherwise, summing both inequalities give $(1 + \Pr[\summ D = t]) - (1 + \Pr[ B = t]) < \eps$, a contradiction.
If we are in the second case, we can consider $\ol{D}$, the complement of $D$, which is also $k$-uniform, and we have $\Pr[\summ D \le t] = \Pr[\summ {\ol{D}} \ge t]$.

\section{From bounded-uniformity to small-bias}
In this section, we prove \Cref{lem:bu-to-sb}, which gives a generic way to transform any $k$-uniform distribution into a $(ck/n)^{k/4}$-bias distribution while preserving some of the weight properties of the distribution.

We start with a lemma that lets us ``sparsify'' the weight distribution of any $k$-uniform distribution.
This uses Carath\'eodory's theorem from convex geometry, stated next, which has a simple proof (see e.g.\ Theorem~2.3 in Chapter~1 in \cite{MR1940576}).

\begin{lemma}[Carath\'eodory's theorem] \label{lem:carath}
    Every point in the convex hull of a set $S \subseteq \R^k$ can be represented as a convex combination of $k+1$ points from $S$.
\end{lemma}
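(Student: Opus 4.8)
The plan is to use the standard minimal-representation argument, which is exactly the ``simple proof'' referenced above. Let $x$ belong to the convex hull of $S$, so that among all expressions $x = \sum_{i=1}^{m} \lambda_i v_i$ with $v_i \in S$, $\lambda_i > 0$, and $\sum_{i=1}^m \lambda_i = 1$, there is at least one; fix such an expression with $m$ as small as possible. The goal is to show $m \le k+1$, so suppose for contradiction that $m \ge k+2$.

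The first step is to extract an affine dependence among $v_1, \dots, v_m$. Since $m-1 \ge k+1$, the vectors $v_2 - v_1, \dots, v_m - v_1 \in \R^k$ are linearly dependent, so there are scalars $\mu_2, \dots, \mu_m$, not all zero, with $\sum_{i=2}^m \mu_i (v_i - v_1) = 0$. Putting $\mu_1 := -\sum_{i=2}^m \mu_i$ produces scalars $\mu_1, \dots, \mu_m$, not all zero, such that $\sum_{i=1}^m \mu_i v_i = 0$ and $\sum_{i=1}^m \mu_i = 0$.

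The second step is to slide the weights along this dependence. For every real $t$ we still have $x = \sum_{i=1}^m (\lambda_i - t\mu_i) v_i$ and $\sum_{i=1}^m (\lambda_i - t\mu_i) = 1$. Because the $\mu_i$ sum to $0$ and are not all zero, some $\mu_i$ is positive; set $t^\ast := \min\{ \lambda_i/\mu_i : \mu_i > 0 \}$, a positive real. Then $\lambda_i - t^\ast \mu_i \ge 0$ for every $i$ (automatically when $\mu_i \le 0$, and by the choice of $t^\ast$ when $\mu_i > 0$), while for an index $i^\ast$ attaining the minimum we get $\lambda_{i^\ast} - t^\ast \mu_{i^\ast} = 0$. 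Deleting that term yields a convex representation of $x$ using only $m-1$ points of $S$, contradicting the minimality of $m$; hence $m \le k+1$. The only step that requires any care is the sign bookkeeping in choosing $t^\ast$ so that nonnegativity of all coefficients is preserved while at least one coefficient is driven to zero; everything else is immediate linear algebra, and the argument uses neither finiteness of $S$ nor nonemptiness (if $S = \emptyset$ the claim is vacuous).
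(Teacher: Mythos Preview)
Your argument is correct; it is exactly the standard minimal-representation proof of Carath\'eodory's theorem. The paper does not actually prove this lemma but merely cites the literature (``which has a simple proof, see e.g.\ Theorem~2.3 in Chapter~1 in \cite{MR1940576}''), so there is no in-paper proof to compare against; what you wrote is precisely the kind of short proof the citation points to.
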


We use \Cref{lem:carath} to sparsify any \emph{symmetric} $k$-uniform distribution so that it is supported on $k+1$ weights.
Here we use the fact that a symmetric distribution on $\pmo^n$ is $k$-uniform if and only if the first $k$ moments of $\summ D$ match the ones of $B$.

\begin{lemma} \label{lem:cara-kwise}
  Let $D$ be any symmetric $k$-uniform distribution such that $\summ D$ is supported on a set $S \subseteq \pmo^n$.
  Then there is a symmetric $k$-uniform distribution $D'$ such that $\summ D'$ is supported on a subset $S' \subseteq S$ of size at most $k+1$.
\end{lemma}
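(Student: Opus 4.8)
The plan is to pass to the induced distribution on Hamming weights and apply Carath\'eodory's theorem in ``moment space''. Write $\mu$ for the distribution of $\summ D$, which is a probability distribution supported on the finite set $S$ of integers. Consider the moment map $\phi\colon \Z \to \R^k$ defined by $\phi(s) := (s, s^2, \dots, s^k)$, and set $v := \E_{s \sim \mu}[\phi(s)] = (\E[B], \E[B^2], \dots, \E[B^k])$; the second equality holds because, being symmetric and $k$-uniform, $D$ agrees with $U$ on the first $k$ moments of the Hamming weight (this is the equivalence flagged just before the statement). In particular $v$ lies in the convex hull of $\{\phi(s) : s \in S\}$.

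By Carath\'eodory's theorem (\Cref{lem:carath}), there exist $s_1, \dots, s_{k+1} \in S$ and nonnegative weights $\lambda_1, \dots, \lambda_{k+1}$ summing to $1$ with $\sum_{i} \lambda_i \phi(s_i) = v$. I would then let $S' := \{s_1, \dots, s_{k+1}\}$, so $|S'| \le k+1$, and let $\mu'$ be the probability distribution on $S'$ putting mass $\lambda_i$ on $s_i$ (combining the weights of any coincident $s_i$). By the defining equation for $\lambda$, we get $\E_{s \sim \mu'}[s^j] = v_j = \E[B^j]$ for every $1 \le j \le k$, i.e. the first $k$ moments of $\mu'$ match those of $B$.

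Finally I would define $D'$ to be the symmetric distribution on $\pmo^n$ obtained by first drawing a weight $s \sim \mu'$ and then outputting a uniformly random string of Hamming weight $s$; this is well defined since $S' \subseteq S$ consists of attainable weights. By construction $D'$ is symmetric, $\summ D'$ is distributed as $\mu'$ and hence supported on $S'$, and the first $k$ moments of $\summ D'$ equal those of $B$, so by the quoted equivalence $D'$ is $k$-uniform. This proves the lemma.

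There is essentially no obstacle beyond bookkeeping: the only real content is the observation that, for symmetric distributions, $k$-uniformity is exactly the system of $k$ linear constraints ``the first $k$ moments of the weight equal those of $B$'', which is precisely the fact stated before the lemma and which is what allows Carath\'eodory to be applied. (One may freely replace raw moments by power sums or falling factorial moments, since these differ by an invertible triangular change of basis, so the argument is unaffected.)
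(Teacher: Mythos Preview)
Your proposal is correct and follows essentially the same approach as the paper: map each weight $s\in S$ to the moment vector $(s,s^2,\ldots,s^k)\in\R^k$, observe that the $k$-uniformity of the symmetric $D$ means the barycenter equals the moment vector of $B$, apply Carath\'eodory to extract at most $k+1$ support points, and rebuild a symmetric distribution from the resulting weight law. Your write-up is, if anything, slightly more careful about the reconstruction step than the paper's.
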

\begin{proof}
  For each $i$, define $p_i := \Pr[ \summ D = i]$ and $v^i := (i, i^2, \ldots, i^k) \in \R^k$.
  Let $b \in \R^k$ be the vector $(\E[B], \E[B^2], \ldots, \E[B^k])$.
  As $D$ is $k$-uniform, we have $b = \sum_{i \in S} p_i v^i$, and so $b$ lies in the convex hull of the $v_i$'s.
  Thus, there is a set $S'$ of $k+1$ indices such that $b = \sum_{i \in S'} q_i v_i$.

  We define the symmetric distribution $D'$ by $\Pr[ \summ D = i] := q_i \Id(i \in S')$
  It is clear that $\sum_i \Pr[ \summ D' = i] v^i = b$, and thus $D'$ is $k$-uniform.
\end{proof}

We will also use the following bound on the bias of the uniform distribution on a ``Hamming slice.''
This follows from an upper bound on Krawtchouk polynomials that was established in \cite{SSS-I}.

\begin{lemma}[Claim 31 and Corollary 16 in \cite{SSS-I}] \label{lem:bias-unif-weight}
  Let $W_t$ be the uniform distribution on $\{x \in \pmo^n: \sum_{i=1}^n x_i = t \}$.
  For every subset $S \subseteq [n]$ of size $\ell$, we have
  \[
    \abs[\big]{ \E\bigl[ W_t^{[n]\setminus S} \bigr] }
    = \abs[\big]{ \E\bigl[ W_t^S \bigr] }
    \le \Bigl( \frac{\ell}{n} + \frac{t^2}{n^2} \Bigr)^{\frac{\ell}{2}} .
  \]
\end{lemma}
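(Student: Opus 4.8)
The plan is to reduce the quantity $\E[W_t^S]$ to the evaluation of a normalized Krawtchouk polynomial, which is exactly what is controlled by the Krawtchouk estimate of \cite{SSS-I}; the two-sided equality is an easy symmetry.

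\emph{The equality.} Every $x$ in the support of $W_t$ has precisely $w := (n-t)/2$ coordinates equal to $-1$, so the global parity is the constant $W_t^{[n]} = \prod_{i=1}^n x_i = (-1)^w$. Hence, pointwise on the slice, $W_t^{[n]\setminus S} = \bigl(\prod_{i=1}^n x_i\bigr)\,x^S = (-1)^w\, W_t^S$, which immediately gives $\abs{\E[W_t^{[n]\setminus S}]} = \abs{\E[W_t^S]}$. Also, by the permutation symmetry of $W_t$, the value $\E[W_t^S]$ depends only on $\ell = \abs{S}$, so we may assume $S = \{1,\dots,\ell\}$.

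\emph{Reduction to Krawtchouk.} Sampling $x \sim W_t$ is the same as choosing the set of $-1$ coordinates uniformly among all $\binom{n}{w}$ subsets of $[n]$ of size $w$. Writing $Y := \abs{S \cap \{i : x_i = -1\}}$, which is hypergeometric with parameters $(n,w,\ell)$, we have $x^S = (-1)^Y$, and counting the configurations with $Y = j$ gives
\[
  \E[W_t^S] = \E\bigl[(-1)^Y\bigr] = \binom{n}{w}^{-1}\sum_{j=0}^{\ell}(-1)^j\binom{\ell}{j}\binom{n-\ell}{w-j},
\]
which is (up to the normalization $\binom{n}{w}^{-1}$) a Krawtchouk polynomial evaluated at the slice parameter; this identity is Claim~31 of \cite{SSS-I}. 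It therefore remains to bound this expression in absolute value by $\bigl(\ell/n + t^2/n^2\bigr)^{\ell/2}$, which is precisely Corollary~16 of \cite{SSS-I}, so we simply combine the two.

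\emph{Where the work is.} Both the equality and the passage to a Krawtchouk evaluation are routine; the substantive step — the one I would expect to be the obstacle in a self-contained treatment — is the bound on the normalized Krawtchouk polynomial itself. The exponent $\ell/2$ is intuitive: if the coordinates of $W_t$ were independent with their common mean $t/n$, the product would be exactly $(t/n)^\ell = (t^2/n^2)^{\ell/2}$, and the extra summand $\ell/n$ absorbs the negative correlations from sampling without replacement. Two natural routes to prove it directly are (i) expanding $\prod_{i\in S} x_i = \prod_{i\in S}\bigl(\tfrac{t}{n} + (x_i - \tfrac{t}{n})\bigr)$ around the mean and estimating the centered hypergeometric moments, and (ii) using the generating function $\sum_\ell K_\ell(w)\,z^\ell = (1-z)^w(1+z)^{n-w}$ with a careful contour choice; but since \cite{SSS-I} already establishes the bound in exactly the form we need, we invoke it rather than reprove it.
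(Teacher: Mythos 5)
Your argument is correct, and it follows the same path the paper does: the lemma is stated as a direct citation of Claim~31 and Corollary~16 of \cite{SSS-I}, and the substantive Krawtchouk bound is imported rather than reproved. Your short derivation of the two-sided equality from the constancy of the full parity on the slice, and the hypergeometric identification of $\E[W_t^S]$ with a normalized Krawtchouk value, are both correct and fill in the routine details the paper leaves implicit.
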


\paragraph{Proof of \Cref{lem:bu-to-sb}.}
  Let $D$ be any $k$-uniform distribution on $\pmo^n$.
  We obtain our small-bias distribution $D_\bias$ in 3 steps as follows:
  
  First, we symmetrize $D$ to obtain $D_\sym$.
  Then, we use \Cref{lem:cara-kwise} to sparsify $D_\sym$ to obtain $D'$ so that it is supported on $k+1$ weights.
  Finally, we add $k/2$ bits of noise, obtained by repeating the following process independently $k/2$ times:
  pick a uniform random coordinate of $D'$ and set it to uniform.
  
  \medskip
  
  We now claim that $D_\bias$ is $(ck/n)^{\frac{k}{4}}$-biased.
  Let $S \subseteq [n]$ be any non-empty subset.
  Note that the noise applied in the last step can only reduce the bias of $D'$.
  We consider three cases:
  \begin{enumerate}
      \item $\abs{S} \in [1,k]$: observe that $D'$ remains $k$-uniform.
        So $\E[D'^S] = 0$.
      \item
      $\abs{S} \in [k+1, n-k-1]$: let $t = (kn^3)^{\frac{1}{4}}$.
        By \Cref{lem:bias-unif-weight,fact:tail-bound}, we have 
      \begin{align*}
      \abs[\big]{\E[D'^S]}
        &\le \abs[\big]{\E\bigl[D'^S \mid \abs{\summ D'} \le t \bigr]} + \Pr\bigl[ \abs{\summ D'} > t \bigr] \\
        &\le \left( \frac{\abs{S}}{n} + \frac{t^2}{n^2} \right)^{\abs{S}/2} +  \sqrt{2} \left( \frac{kn}{et^2} \right)^{k/2} \\
        &\le 2 \cdot (2k/n)^{\frac{k}{4}} .
      \end{align*}
      \item 
        $\abs{S} \in [n-k, n]$: If a bit in $S$ is set to uniform by the noise, which happens with probability $(1 - k/n)$, then the bias is 0.
        So
      \[
        \abs[\big]{\E[D_\bias^S]}
        \le (1 - (1 - k/n))^{k/2}
        = (k/n)^{k/2} .
      \]
  \end{enumerate}
  Since changing a bit $x_i \in \pmo$ of $x$ can only change $\summ x$ by at most $2$, the lemma follows.  \qed

\subsection{Proofs of \Cref{thm:sb-concentrate,,thm:sb-tail-lb-middle,thm:sb-anticoncentrate}}

\Cref{thm:sb-tail-lb-middle,thm:sb-anticoncentrate} directly follow from applying \Cref{lem:bu-to-sb} to \Cref{thm:k-wise-tail-lb-middle,thm:k-wise-tail-lb} respectively.
\Cref{thm:sb-concentrate} follows from applying the lemma to following result, which exhibits  bounded uniform distributions that are supported only on nearly-balanced strings.

\begin{lemma}[\cite{BHLV}] \label{lem:BHLV}
  For any  integer $k$, there is a distribution $D$ supported on $\{ x\in \pmo^n: \abs{\summ{x}} \le 10\sqrt{kn}\}$ which is $(2k)$-uniform.
\end{lemma}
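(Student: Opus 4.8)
}
The plan is to use strong LP duality, exactly as in the proof of \Cref{thm:k-wise-tail-lb}, to reduce the existence of the desired distribution to an extremal statement about univariate polynomials, and then to establish that statement with the tools already assembled above: the Coppersmith--Rivlin bound \Cref{lem:coppersmith-rivlin}, the Chebyshev extremal bound \Cref{fact:chebyshev}, and the binomial estimate \Cref{fact:stirling}. Fix $T := 10\sqrt{kn}$ and $A := \{x \in \pmo^n : \abs{\summ{x}} \le T\}$. We may assume $T < n$, since otherwise $A = \pmo^n$ and the uniform distribution — which is $(2k)$-uniform for every $k$ — already works; in particular this disposes of the range $k \ge n/100$. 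Since $A$ is invariant under coordinate permutations and $k$-uniformity is preserved under symmetrization, a $(2k)$-uniform distribution supported on $A$ exists iff a symmetric one does, and (by the fact recalled just before \Cref{lem:cara-kwise}) a symmetric distribution is $(2k)$-uniform iff the first $2k$ moments of its weight agree with those of $B$. Hence, by strong duality, such a distribution exists unless there is a univariate polynomial $q$ of degree at most $2k$ with $q(t) \ge 0$ for every integer $t$ of the parity of $n$ with $\abs{t}\le T$, and with $\E[q(B)] < 0$. So it suffices to show: every such $q$ has $\E[q(B)] \ge 0$.

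To prove this one may take $q$ even (replace $q(t)$ by $\tfrac12(q(t)+q(-t))$) and normalize so that $\max\{q(t) : t\equiv n \bmod 2,\ \abs{t}\le T\} = 1$, attained at some $t^\star$. Split
\[
  \E[q(B)] = \sum_{\abs{t}\le T}\Pr[B=t]\,q(t)\ +\ \sum_{\abs{t}>T}\Pr[B=t]\,q(t),
\]
where the first sum is nonnegative because $B$ is supported on integers of the parity of $n$ and $q\ge 0$ on those lying in $[-T,T]$. For the second sum, when $3(2k)^2 \le T/2$ (so in particular for $k \le c\,n^{1/3}$) \Cref{lem:coppersmith-rivlin} applied to $u\mapsto q(2u)$ gives $\abs{q}\le 3/2$ on all of $[-T,T]$, and then \Cref{fact:chebyshev} gives $\abs{q(t)}\le \tfrac32 (2\abs{t}/T)^{2k}$ for $\abs{t}\ge T$; hence the tail is at most $\tfrac32 (2/T)^{2k}\E[\abs{B}^{2k}] \le (c/\gamma^{2})^{k}$ using $\E[\abs{B}^{2k}]\le (c\,kn)^{k}$ and $T=\gamma\sqrt{kn}$. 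The first sum must then be shown to dominate this, using the value $q(t^\star)=1$, a Markov--Bernstein bound on $[-T,T]$ showing that $q$ stays $\ge 1/2$ on a window of $\gtrsim T/k^{2}$ consecutive weights around $t^\star$, and the near-Gaussian local estimate $\Pr[B=t]\approx e^{-t^{2}/2n}/\sqrt{n}$ on that window (a sharpening of \Cref{fact:stirling}).

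\textbf{Main obstacle.} The delicate point is exactly this last comparison. The binomial puts mass only $e^{-\Theta(\gamma^{2}k)}$ near weight $\gamma\sqrt{kn}$, whereas extending $q$ just past $T$ via \Cref{fact:chebyshev} already costs a factor $2^{\Theta(k)}$, so the naive lower bound $\Pr[B=t^\star]$ on the positive contribution is \emph{not} enough when $t^\star$ happens to lie near $T$ — and no choice of the constant $\gamma$ fixes this by itself. The real work is to exploit that $q\ge 0$ at \emph{all} $\Theta(\sqrt{kn})\gg 2k$ of the nearly-balanced integers, which rigidly limits how concentrated near $T$ a degree-$2k$ polynomial that is nonnegative on that set can be, forcing it to be comparably large on a much larger part of $[-T,T]$ where the binomial has far more mass. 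A secondary, minor obstacle is the range $c\,n^{1/3}\lesssim k < n/100$, where the hypothesis of \Cref{lem:coppersmith-rivlin} fails; there one controls $\abs{q}$ on $[-T,T]$ directly through a Markov--Bernstein inequality on the interval, using that $T=10\sqrt{kn}$ is already $\gtrsim n^{2/3}$.

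\textbf{An alternative route} avoids duality: by \Cref{lem:carath} it suffices to show the moment vector $(\E[B^{0}],\E[B^{2}],\dots,\E[B^{2k}])$ lies in the convex hull of the moment-curve points $(t^{0},t^{2},\dots,t^{2k})$ over even integers $\abs{t}\le T$. The binomial truncated to $[-T,T]$ and renormalized realizes this vector up to an error that, measured against $\E[B^{2k}]$, is only $(c/\gamma^{2})^{k}$ (the same tail computation), and since the Hankel matrices of the truncated binomial are positive definite — with least eigenvalue controllable via the norms of Krawtchouk polynomials — its moment vector sits strictly inside the relevant moment body, leaving room to absorb that error. This also explains why $\sqrt{kn}$ is the right scale: up to a constant it is where the Gauss-quadrature nodes of $B$, namely the roots of the Krawtchouk polynomial $K_{k+1}$, already lie. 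Here too, quantifying ``strictly inside'' is the crux.
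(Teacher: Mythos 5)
This lemma is quoted from \cite{BHLV}; the paper does not reprove it, so there is no internal proof to compare against. Your sketch is therefore a blind reproof attempt, and it should be judged on its own merits.

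The LP-duality framing is a legitimate and standard starting point, and the reduction to ``every degree-$2k$ polynomial $q$ nonnegative on the nearly-balanced integers has $\E[q(B)]\ge 0$'' is correct. The split of $\E[q(B)]$ into the inside and tail contributions, and the Coppersmith--Rivlin plus Chebyshev tail bound, are also fine. Your own accounting then correctly reveals that the na\"ive lower bound $\Pr[B=t^\star]$ for the positive part falls short of the tail by a factor of roughly $2^{\Theta(\gamma^2 k)}$ when $t^\star$ sits near $T=\gamma\sqrt{kn}$ --- so far so good.

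The gap is that you then assert, as the way to close this, that nonnegativity of $q$ on all $\Theta(\sqrt{kn})$ nearly-balanced integers ``rigidly limits how concentrated near $T$'' $q$ can be, ``forcing it to be comparably large on a much larger part of $[-T,T]$.'' That claim is simply false: $q(x)=(x/T)^{2k}$ is a degree-$2k$ polynomial, nonnegative everywhere, equal to $1$ at $\pm T$, yet at most $4^{-k}$ on $[-T/2,T/2]$ and at most $(100k)^{-k}$ on $[-\sqrt{n},\sqrt{n}]$. So nonnegativity on the whole interval places no obstacle to exponential concentration near the endpoints, and the proposed mechanism does not exist. (Your example is not a counterexample to the \emph{goal} --- $\E[(B/T)^{2k}]$ is positive --- but it does refute the stated route to that goal.) The alternative route via Carath\'eodory and Hankel positivity is closer in spirit to what actually works (Markov--Luk\'acs / Gauss-quadrature ideas, with the scale $\sqrt{kn}$ indeed set by Krawtchouk roots), but there too you explicitly leave the quantitative ``strictly inside'' margin --- which is the entire content of the lemma --- unproved, and you would additionally need to deal with the fact that Krawtchouk/quadrature nodes are not integers and that nonnegativity is only known at integer points, not on the whole interval, so Markov--Luk\'acs does not apply directly. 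As it stands the proposal is a plan that names the obstacle but does not overcome it, and the one concrete mechanism it offers for doing so is incorrect.
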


We note that one can instead apply \Cref{lem:bu-to-sb} to the standard randomness-efficient construction of $k$-uniform distributions via BCH codes, and then use results from algebraic geometry to bound the Hamming weight, specifically Theorem 18 in \cite{MR465510}.
However, the support is slightly less concentrated to the center than \Cref{lem:BHLV}.

\begin{claim} 
  There exists a linear distribution $D$ supported on $\{x \in \pmo^n: \abs{\summ{x}} \le c k\sqrt{n} \}$ which is $k$-uniform.
\end{claim}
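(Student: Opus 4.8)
The plan is to use the standard BCH-code construction of a $k$-uniform distribution and then bound the Hamming weight of every codeword via a coding-theoretic estimate. Recall the construction: identify $\pmo^n$ with $\F_2^n$, fix $n = 2^m - 1$, and take the dual BCH code $C$ of designed distance related to $k$; the uniform distribution $D$ over the codewords of $C$ is $k$-uniform because $C^\perp$ (the primitive narrow-sense BCH code) has minimum distance $> k$, so any $k$ coordinates of a uniform codeword of $C$ are independent and uniform. This $D$ is \emph{linear} in the sense required (it is the uniform distribution on a linear subspace). The number of codewords is $\le (n+1)^{k/2}$ or so, giving the randomness-efficiency, though that is not what the claim asks about.

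The key step is to show every codeword $x$ of $C$ has $\abs{\summ x} \le c k \sqrt{n}$, i.e. the Hamming weight of every codeword of the dual BCH code lies in the window $n/2 \pm c k \sqrt{n}/2$. This is exactly the kind of statement controlled by the weight-distribution bounds for dual BCH codes; concretely one invokes Theorem~18 in \cite{MR465510} (the Carlitz--Uchiyama / Weil-type bound), which states that for a codeword of the dual of a BCH code correcting $t$ errors, the deviation of its weight from $n/2$ is at most $(t-1)\sqrt{n}$ (up to a small additive constant). Translating $t$ in terms of $k$ — one needs designed distance roughly $k$, hence $t \approx k/2$ — gives $\abs{\summ x} \le c k \sqrt{n}$ for every $x$ in the support. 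First I would set up the BCH parameters so that the dual distance exceeds $k$ (ensuring $k$-uniformity) while keeping the error-correcting radius $t = O(k)$ (so the Weil bound gives the claimed weight window); then I would apply Theorem~18 of \cite{MR465510} coordinatewise to conclude.

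The main obstacle is purely bookkeeping: matching up the several conventions for ``BCH parameters'' (designed distance vs. error-correcting capability vs. dual minimum distance, narrow-sense vs. primitive, $n = 2^m-1$ vs. general $n$) so that the hypothesis of Theorem~18 in \cite{MR465510} is met with the right constant, and handling the case where $n$ is not of the form $2^m - 1$ (by taking a suitable BCH code on $n' \ge n$ close to $n$ and projecting, or padding, absorbing the discrepancy into the constant $c$). None of this is deep, but it is where the constant factors — and the fact that the weight window is $\Theta(k\sqrt n)$ rather than the sharper $\Theta(\sqrt{kn})$ of \Cref{lem:BHLV} — come from, which is precisely the looseness the surrounding remark in the paper alludes to.
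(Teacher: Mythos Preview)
Your proposal is correct and follows essentially the same approach that the paper sketches: take the dual BCH construction of a $k$-uniform linear distribution and invoke Theorem~18 in \cite{MR465510} (the Carlitz--Uchiyama/Weil bound) to confine every codeword's weight to $n/2 \pm c k\sqrt{n}$. The paper itself does not spell out the parameter bookkeeping either, so your discussion of the conventions and of handling general $n$ is already more detailed than what appears there.
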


\section{Distinguishing small-bias plus noise} \label{sec-alt-proofs}

In this section we prove \Cref{thm:threshold-ck-distinguish,,thm:anticoncentration-vs-threshold,thm:threshold-caratheodory}.
We will apply several results on sums of independent random variables.
We first state these results before proceeding to the proofs.

Let $\calN(0,1)$ denote the standard normal distribution, which has mean $0$ and variance $1$.

\begin{lemma}[Theorem~11.2 in~\cite{DasGupta-book}] \label{lem:Berry-Esseen}
Let $Y_1, \ldots, Y_n$ be $n$ independent random variables with $\E[Y_i]=0$, $\Var[Y_i] = \sigma_i^2$, $\E[\abs{Y_i}^3] < \infty$.
Let $Y := \sum_{i=1}^n Y_i$.
For every $\theta \in \R$,
\[
  \abs[\bigg]{ \Pr\biggl[ \frac{Y}{ \bigl(\sum_{i=1}^n \sigma_i^2\bigr)^{1/2} } \ge \theta \biggr] - \Pr\biggl[ \calN(0,1) \ge \theta \biggr] }
  \le \frac{\sum_{i=1}^n \E[\abs{Y_i}^3]}{\bigl(\sum_{i=1}^n \sigma_i^2\bigr)^{3/2}} .
\]
\end{lemma}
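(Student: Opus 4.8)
The plan is to invoke the classical Berry--Esseen theorem for sums of independent, not-necessarily-identically-distributed random variables, which is precisely \cite{DasGupta-book}; for completeness here is the standard Fourier-analytic derivation one would follow. First I would normalize: replacing each $Y_i$ by $Y_i / \bigl(\sum_j \sigma_j^2\bigr)^{1/2}$, we may assume $\sum_i \sigma_i^2 = 1$, and put $\beta := \sum_i \E[\abs{Y_i}^3]$, so the target reads $\sup_\theta \abs{F(\theta) - \Phi(\theta)} \le \beta$, where $F$ is the distribution function of $Y := \sum_i Y_i$ and $\Phi$ that of $\calN(0,1)$. As the bound is vacuous once $\beta$ exceeds an absolute constant (since $\sup_\theta \abs{F - \Phi} \le 1$), assume $\beta$ is small. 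Then I would apply Esseen's smoothing inequality: for every $T > 0$,
\[
  \sup_\theta \abs{F(\theta) - \Phi(\theta)}
  \;\le\; \frac{1}{\pi} \int_{-T}^{T} \abs[\Big]{ \frac{\phi(t) - e^{-t^2/2}}{t} } \, dt \;+\; \frac{c}{T},
\]
where $\phi(t) = \prod_i \phi_i(t)$ is the characteristic function of $Y$ and $\phi_i$ that of $Y_i$. Taking $T := 1/(2\beta)$ already makes the boundary term $O(\beta)$, so it remains to estimate $\phi(t) - e^{-t^2/2}$ on $\abs{t} \le 1/(2\beta)$.

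Two estimates are needed. Near the origin, i.e.\ for $\abs{t} \le \beta^{-1/3}$, I would Taylor-expand: $\abs{\phi_i(t) - (1 - \sigma_i^2 t^2/2)} \le \E[\abs{Y_i}^3]\abs{t}^3/6$ and $\abs{e^{-\sigma_i^2 t^2/2} - (1 - \sigma_i^2 t^2/2)} \le \sigma_i^4 t^4/8$, and since $\sigma_i^3 \le \E[\abs{Y_i}^3] \le \beta$ by Lyapunov one has $\sum_i \sigma_i^4 \le (\max_i \sigma_i) \sum_i \sigma_i^3 \le \beta^{1/3} \cdot \beta = \beta^{4/3}$; feeding these into a telescoping comparison of $\prod_i \phi_i(t)$ with $\prod_i e^{-\sigma_i^2 t^2/2}$ gives the standard estimate $\abs{\phi(t) - e^{-t^2/2}} \le c\,\beta \abs{t}^3\,e^{-ct^2}$ on this range (the $\sigma_i^4 t^4$ contribution being at most $\beta^{4/3} t^4 \le \beta \abs{t}^3$ there). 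Away from the origin, i.e.\ for $\beta^{-1/3} < \abs{t} \le 1/(2\beta)$, I would instead bound $\abs{\phi(t)}$ directly by a variance-splitting argument: call a coordinate $i$ \emph{good at frequency $t$} if $\abs{t} \le \sigma_i^2 / \E[\abs{Y_i}^3]$; for such $i$ the Taylor bound gives $\abs{\phi_i(t)} \le 1 - \sigma_i^2 t^2/3 \le e^{-\sigma_i^2 t^2/3}$, while the \emph{bad} coordinates satisfy $\sigma_i^2 < \abs{t}\,\E[\abs{Y_i}^3]$ and hence carry total variance $\sum_{\mathrm{bad}} \sigma_i^2 < \abs{t}\,\beta \le 1/2$; therefore at least half of the variance comes from good coordinates and $\abs{\phi(t)} \le e^{-t^2/6}$, so $\abs{\phi(t) - e^{-t^2/2}} \le 2 e^{-t^2/6}$.

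Assembling: $\int_{\abs{t} \le \beta^{-1/3}} \abs{\phi(t) - e^{-t^2/2}}/\abs{t}\,dt \le c\beta \int_{\R} e^{-ct^2} t^2\,dt = O(\beta)$ (the factor $\abs{t}^3$ cancels the $1/\abs{t}$ and keeps the integrand integrable at $0$); the part $\beta^{-1/3} < \abs{t} \le 1/(2\beta)$ contributes at most $\int_{\abs{t} > \beta^{-1/3}} 2 e^{-t^2/6}/\abs{t}\,dt$, which is super-exponentially small in $\beta^{-2/3}$ and a fortiori $O(\beta)$; and the boundary term is $c/T = 2c\beta = O(\beta)$. Summing and undoing the normalization yields the lemma.

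The hard part will be the intermediate range $\beta^{-1/3} \ll \abs{t} \ll 1/\beta$: there the Taylor expansion no longer governs the individual factors $\phi_i(t)$ — a coordinate whose variance is large compared with $1/t^2$ spoils the naive product bound — so one cannot simply continue the near-origin analysis out to the cutoff $T \sim 1/\beta$ demanded by the boundary term. The variance-splitting observation (discarding such coordinates, which is harmless because their third moments are bounded, whence their total variance is $\le \abs{t}\beta \le 1/2$) is the one genuinely non-routine ingredient; it also makes the argument work uniformly for lattice-type summands, for which $\abs{\phi}$ would otherwise return near $1$ — but only at frequencies past $\abs{t} \sim 1/\beta$, i.e.\ past the cutoff. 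Finally, obtaining the clean constant asserted in the lemma, as opposed to some larger absolute $C$, requires the sharper forms of the smoothing and expansion estimates, carried out in \cite{DasGupta-book}.
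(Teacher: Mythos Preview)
The paper does not prove this lemma at all; it is quoted verbatim as Theorem~11.2 of \cite{DasGupta-book} and used as a black box throughout Section~\ref{sec-alt-proofs}. There is therefore no ``paper's own proof'' to compare against.

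Your sketch is a correct outline of the standard Fourier-analytic proof of the non-i.i.d.\ Berry--Esseen theorem (Esseen smoothing lemma plus characteristic-function comparison), and the good/bad variance-splitting you single out for the range $\beta^{-1/3}\lesssim|t|\lesssim 1/\beta$ is indeed one of the standard ways to handle the non-identically-distributed case, where a single large $\sigma_i$ would spoil a naive Taylor bound on $\phi_i$. One caveat: the lemma as stated carries the explicit constant $1$, and as you yourself note at the end, your argument only yields some absolute constant $C$; getting down to $1$ (or the best known $\approx 0.56$) requires the sharper quantitative forms of the smoothing and expansion steps. This is immaterial for the paper, which only ever uses an absolute-constant version (and in fact folds the Berry--Esseen error into the ambient $c/\sqrt{n}$ slack).
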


For fixed $\rho_i$'s and $\sigma_i$'s, the additive error given by \Cref{lem:Berry-Esseen} is roughly $1/\sqrt{n}$.
So, for \Cref{thm:threshold-caratheodory} to hold with $k \ge c\log n$, we would need a more refined approximation.
For this reason,  we will be using the following Cram\'er's estimate of sums of independent random variables, which gives a multiplicative rather than additive approximation in terms of $\calN(0,1)$.

\begin{lemma}[Chapter VIII, Equation~(2.41) in \cite{Petrov-book}] \label{lem:Petrov}   There exists a constant $c > 0$ such that the following holds.
  Let $Y_1,\ldots, Y_n$ be $n$ independent random variables with $\E[Y_i]=0$, and $\E[Y_i^2] = \sigma_i^2$ for each $i \in [n]$.
  Let $Y := \sum_{i=1}^n Y_i$.
  For $0\le \theta \le cn^{1/6}$,
  there exists an $\eps \in [0,\frac{(\theta+1)}{c\sqrt{n}}]$ such that
  \[
    \Pr\biggl[\frac{Y}{{\bigl(\sum_{i=1}^n \sigma_i^2\bigr)^{1/2}}}\geq\theta\biggr]
    = \Pr\Bigl[ \calN(0,1) \ge \theta \Bigr] \cdot \exp\biggl(\frac{\sum_{i=1}^n\E[Y_{i}^{3}]}{6 \bigl(\sum_{i=1}^n \sigma_i^2\bigr)^{3/2}} \cdot\theta^3 \biggr) (1 + \eps) .
  \]
\end{lemma}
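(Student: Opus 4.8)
The plan is to reprove Cram\'er's estimate by the classical conjugate-distribution (exponential-tilting) method, which is the route of Petrov's Chapter~VIII. Write $\overline\Phi(x):=\Pr[\calN(0,1)\ge x]$. By rescaling I may assume $B_n:=\sum_{i=1}^n\sigma_i^2=1$, so $\theta=t$ and the task is to estimate $\Pr[Y\ge t]$. Let $R_i(h):=\E[e^{hY_i}]$ (finite for $\abs{h}$ small; automatic when the $Y_i$ are bounded, the case of interest in the application, and in general guaranteed by the exponential-moment hypothesis implicit here), and $\Lambda(h):=\sum_{i=1}^n\log R_i(h)$, the cumulant generating function of $Y$. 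For a parameter $h>0$ let $Y^{*}_1,\dots,Y^{*}_n$ be independent, $Y^{*}_i$ having the law obtained from $Y_i$ by tilting with density proportional to $e^{hy}$, and set $Y^{*}:=\sum_i Y^{*}_i$. The exponential change of measure gives the exact identity
\[
  \Pr[Y\ge t]\;=\;e^{\Lambda(h)-ht}\,\E\!\bigl[\,e^{-h(Y^{*}-t)}\,\Id(Y^{*}\ge t)\,\bigr].
\]

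First I would fix the tilt $h=h_t$ at the \emph{saddle point}, the solution of $\Lambda'(h)=t$ (note $\Lambda'(h)=\E[Y^{*}]$, so this centers $Y^{*}$ at $t$). Since $\Lambda(h)=\tfrac12 h^2+\tfrac{\mu_3}{6}h^3+O(\beta_4 h^4)$, where $\mu_3:=\sum_i\E[Y_i^3]$ and $\beta_4:=\sum_i\E[Y_i^4]$ are of order $n^{-1/2}$ and $n^{-1}$ under the natural non-degeneracy, there is a unique small solution $h=t-\tfrac{\mu_3}{2}t^2+\cdots$; one then checks that $s^2:=\Lambda''(h)=\Var[Y^{*}]=1+\mu_3 h+\cdots$ and $hs=t+O(t^3/n)$ lie in the required ranges. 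A Taylor expansion collapses the ``rate'' factor:
\[
  \Lambda(h)-ht+\tfrac12 h^2 s^2\;=\;\tfrac{\mu_3}{6}h^3+O(\beta_4 h^4)\;=\;\tfrac{\mu_3}{6}t^3+O\!\Bigl(\tfrac{t^4}{n}\Bigr),
\]
and restoring $B_n$ turns $\tfrac{\mu_3}{6}t^3$ into $\tfrac{\mu_3}{6B_n^{3/2}}\theta^3$, exactly the Cram\'er correction in the statement; here $t^4/n=O((\theta+1)/\sqrt n)$ throughout $\theta\le cn^{1/6}$.

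The remaining, and delicate, step is the Gaussian replacement
\[
  \E\!\bigl[\,e^{-h(Y^{*}-t)}\Id(Y^{*}-t\ge 0)\,\bigr]\;=\;\E\bigl[\,e^{-hG}\Id(G\ge 0)\,\bigr]\cdot\bigl(1+O(\tfrac{\theta+1}{\sqrt n})\bigr),\qquad G\sim\calN(0,s^2),
\]
plausible since $Y^{*}-t$ has mean $0$ and variance $s^2\approx 1$. Granting it, the elementary identity $\E[e^{-hG}\Id(G\ge 0)]=e^{h^2 s^2/2}\,\overline\Phi(hs)$ together with $\overline\Phi(hs)=\overline\Phi(\theta)\bigl(1+O(\tfrac{\theta+1}{\sqrt n})\bigr)$ — which uses $\overline\Phi'(\theta)/\overline\Phi(\theta)=-\theta+O(1/\theta)$ and $\theta\,\abs{hs-\theta}=O(\theta^4/n)$ — makes the three displays combine to $\Pr[Y\ge t]=\overline\Phi(\theta)\exp\!\bigl(\tfrac{\mu_3\theta^3}{6B_n^{3/2}}\bigr)\bigl(1+O(\tfrac{\theta+1}{\sqrt n})\bigr)$, which is the asserted equality (folding the sign and constant into $c$ yields the one-sided form with $\eps\in[0,(\theta+1)/(c\sqrt n)]$).

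To prove the Gaussian-replacement display I would integrate by parts, rewriting its left side through the tilted tail function $u\mapsto\Pr[Y^{*}-t\ge u]$, and bound that function via a \emph{non-uniform} Berry--Esseen estimate for $Y^{*}_1,\dots,Y^{*}_n$ — whose variances and third absolute moments are within constant factors of $\sigma_i^2$ and $\E[\abs{Y_i}^3]$ because $h$ is small — so that the $e^{-hu}$ weight controls the range of $u$ exceeding a constant times $1/h$, where a uniform bound would be too weak. Carrying out this local-central-limit-theorem-with-remainder estimate, and verifying that the accumulated error — dominated by the Berry--Esseen remainder, of order $(\sum_i\E[\abs{Y_i}^3])/B_n^{3/2}$ up to a factor $1+\theta$, hence $O((\theta+1)/\sqrt n)$ — is genuinely $O((\theta+1)/\sqrt n)$ uniformly over $0\le\theta\le cn^{1/6}$, is the crux of the argument; the tilting identity and the cumulant Taylor expansions are routine bookkeeping.
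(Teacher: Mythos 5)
The paper does not prove this lemma; it is a direct citation to Petrov (Chapter~VIII, Eq.~(2.41)), so there is no internal proof to compare with. Your sketch follows the classical conjugate-distribution (exponential tilting) route --- saddle-point choice of $h$ with $\Lambda'(h)=t$, Taylor expansion of the cumulant generating function to isolate the $\mu_3\theta^3/6$ correction, and Gaussian replacement of the tilted tail with a Berry--Esseen-controlled error --- which is precisely the argument Cram\'er introduced and Petrov carries out in that chapter. Your bookkeeping is correct: the collapse $\Lambda(h)-ht+\tfrac12 h^2 s^2=\tfrac{\mu_3}{6}h^3+O(\beta_4 h^4)$, the identity $\E\bigl[e^{-hG}\Id(G\ge 0)\bigr]=e^{h^2s^2/2}\Pr[\calN(0,s^2)\ge s^2 h]$, and the order-of-magnitude checks in the window $\theta\le cn^{1/6}$ all go through. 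You also rightly flag that the lemma as printed omits the moment hypothesis (Cram\'er's condition, or boundedness) needed for $\Lambda$ even to exist; Petrov assumes it, and in the paper's application the $Y_i$ are bounded, so this is a defect of the quotation rather than of your argument.

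There is, however, one genuine gap. Your derivation produces a \emph{two-sided} relative error, $\Pr[Y\ge t]=\Pr[\calN(0,1)\ge\theta]\exp(\cdots)\,(1+O((\theta+1)/\sqrt n))$, and you then assert that ``folding the sign and constant into $c$'' yields the stated one-sided form $\eps\in[0,(\theta+1)/(c\sqrt n)]$. That move is not valid: a two-sided $O(\cdot)$ term cannot be made nonnegative by renaming constants. The sign is load-bearing here --- Lemma~\ref{lem:petrov-noise}(1) invokes $\eps\ge 0$ to conclude $\Pr[B\ge\sqrt n\,\theta]\ge\Pr[\calN(0,1)\ge\theta]$ outright. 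To extract $\eps\ge 0$ from your approach you would need the Gaussian-replacement step to hold as a one-sided inequality, $\E\bigl[e^{-h(Y^*-t)}\Id(Y^*\ge t)\bigr]\ge\E\bigl[e^{-hG}\Id(G\ge 0)\bigr]$, which a Berry--Esseen estimate alone does not give (it is also worth double-checking whether Petrov's Eq.~(2.41) is really one-sided, or whether the paper has slightly strengthened it in transcription). Apart from this, the Gaussian-replacement step is, as you acknowledge, only sketched and is where the real analytic work lies.
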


To relate \Cref{lem:Petrov} to \Cref{lem:Berry-Esseen}, note that when $\theta$ is small, $\exp(\frac{\sum_{i=1}^n\E[Y_{i}^{3}]}{6 (\sum_{i=1}^n \sigma_i^2 )^{3/2}} \cdot\theta^3)$ is roughly $1 +  \frac{\sum_{i=1}^n\E[Y_{i}^{3}]}{(\sum_{i=1}^n \sigma_i^2)^{3/2}}$.

Specializing \Cref{lem:Petrov} to our applications, we obtain the following lemma.
We first need a simple claim.

\begin{claim} \label{claim:noise-moments}
  For every $x \in \pmo^n$ and $i \in [n]$, let $Y_i$ be the mean zero variables $Y_i := (x \cdot N_{\rho})_i - \rho x_i$.
  Then $\E[Y_i^2] = 1-\rho^2$ and $\E[Y_i^3] \le -2\rho(1-\rho^2) x_i \in [-1,0]$.
\end{claim}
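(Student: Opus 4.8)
The plan is to reduce everything to the distribution of a single smoothed coordinate and then read off its first three moments. Fix $x \in \pmo^n$ and $i \in [n]$. From the definition of $N_\rho$, the bit $(N_\rho)_i$ equals $1$ with probability $\rho$ and is uniform on $\pmo$ otherwise, so $(N_\rho)_i = 1$ with probability $(1+\rho)/2$ and $(N_\rho)_i = -1$ with probability $(1-\rho)/2$. Since $x_i \in \pmo$, the coordinate $(x\cdot N_\rho)_i = x_i (N_\rho)_i$ equals $x_i$ with probability $(1+\rho)/2$ and $-x_i$ with probability $(1-\rho)/2$; in particular $\E[(x\cdot N_\rho)_i] = \rho x_i$, confirming $\E[Y_i]=0$, and $Y_i$ takes the value $(1-\rho)x_i$ with probability $(1+\rho)/2$ and $-(1+\rho)x_i$ with probability $(1-\rho)/2$.

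From this two-point description the remaining identities follow by a short computation, using $x_i^2 = 1$ and $x_i^3 = x_i$. The second moment is $\E[Y_i^2] = (1-\rho)^2\cdot\frac{1+\rho}{2} + (1+\rho)^2\cdot\frac{1-\rho}{2} = \frac{1-\rho^2}{2}\bigl((1-\rho)+(1+\rho)\bigr) = 1-\rho^2$. The third moment is $\E[Y_i^3] = x_i\bigl((1-\rho)^3\frac{1+\rho}{2} - (1+\rho)^3\frac{1-\rho}{2}\bigr) = x_i\cdot\frac{1-\rho^2}{2}\bigl((1-\rho)^2-(1+\rho)^2\bigr) = -2\rho(1-\rho^2)x_i$, which gives the claimed inequality (in fact with equality). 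Finally, $2\rho(1-\rho^2) = 2\rho-2\rho^3$ is nonnegative on $[0,1]$ and, since its derivative $2-6\rho^2$ vanishes only at $\rho = 1/\sqrt3$, has maximum $\frac{4}{3\sqrt3} < 1$; hence $\E[Y_i^3]\,x_i = -2\rho(1-\rho^2) \in [-1,0]$, as needed.

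There is no genuine obstacle here: the whole claim is a one-coordinate calculation. The only step that needs a little care is translating ``$(N_\rho)_i$ is uniform with probability $1-\rho$ and $1$ otherwise'' into the correct mass split $(1+\rho)/2$ versus $(1-\rho)/2$; once the single coordinate is pinned down as a two-point distribution, the moment identities are immediate because each $x_i$ squares and cubes back to $1$ and $x_i$ respectively.
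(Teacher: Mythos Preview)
Your proof is correct and essentially the same as the paper's: both are direct moment computations. The paper proceeds by listing $\E[(x\cdot N_\rho)_i]=\rho x_i$, $\E[(x\cdot N_\rho)_i^2]=1$, $\E[(x\cdot N_\rho)_i^3]=\rho x_i$ and (implicitly) expanding $(\,(x\cdot N_\rho)_i-\rho x_i\,)^k$, whereas you first pin down the two-point law of $Y_i$ and compute the moments from that; these are the same calculation in different order. Your observation that the $\le$ is actually an equality, and your verification that $2\rho(1-\rho^2)\le 4/(3\sqrt3)<1$, are both fine and slightly more explicit than the paper.
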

\begin{proof}
  It suffices to compute the first 3 moments of $(x \cdot N_\rho)_i$.
  We have $\E[(x \cdot N_{\rho})_i] = \rho x_i$, $\E[(x \cdot N_{\rho})_i^2] = 1$, and $\E[(x \cdot N_{\rho})_i^3] = \rho x_i$.
\end{proof}

\begin{lemma}\label{lem:petrov-noise}
  For every $x \in \pmo^n$, $\rho \in [0,1)$, and $\theta \in [0, cn^{1/6}]$, we have
  \begin{enumerate}
    \item $\Pr[B \ge \sqrt{n} \cdot \theta] \ge \Pr[ \calN(0,1) \ge \theta]$, and
    \item $\Pr\Bigl[ \frac{\summ{x \cdot N_\rho} - \rho \cdot \summ x}{\sqrt{n} (1-\rho^2)^{1/2}} \ge \theta \Bigr] \le 2 \Pr\Bigl[ \calN(0,1) \ge \theta\Bigr]$.
  \end{enumerate}
\end{lemma}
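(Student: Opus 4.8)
The plan is to obtain both inequalities from Cram\'er's estimate (\Cref{lem:Petrov}), exploiting the sign of the third moments. For part~(1), I would write $B = \summ U = \sum_{i=1}^n U_i$ with $U_i$ uniform on $\pmo$, so that $\E[U_i]=0$, $\sigma_i^2 = \E[U_i^2] = 1$, and crucially $\E[U_i^3] = 0$. Then $\sum_i \sigma_i^2 = n$, and applying \Cref{lem:Petrov} with this choice gives, for $0 \le \theta \le cn^{1/6}$,
\[
  \Pr[B \ge \sqrt{n}\,\theta] = \Pr[\calN(0,1) \ge \theta] \cdot \exp(0) \cdot (1+\eps) = \Pr[\calN(0,1) \ge \theta](1+\eps)
\]
for some $\eps \ge 0$, which is at least $\Pr[\calN(0,1)\ge\theta]$ since $\eps \in [0, (\theta+1)/(c\sqrt n)]$ is nonnegative. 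That settles part~(1).

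For part~(2), I would apply \Cref{lem:Petrov} to the variables $Y_i := (x\cdot N_\rho)_i - \rho x_i$ from \Cref{claim:noise-moments}, which are independent (the noise bits are independent), mean zero, with $\E[Y_i^2] = 1-\rho^2$, so $\sum_i \sigma_i^2 = n(1-\rho^2)$, matching the normalization in the statement. The point is that $\E[Y_i^3] \le 0$ by \Cref{claim:noise-moments}, so the exponent $\frac{\sum_i \E[Y_i^3]}{6(n(1-\rho^2))^{3/2}}\cdot \theta^3$ is $\le 0$ for $\theta \ge 0$, hence $\exp(\cdot) \le 1$. Therefore, for $0\le\theta\le cn^{1/6}$,
\[
  \Pr\Bigl[\tfrac{\summ{x\cdot N_\rho} - \rho\,\summ x}{\sqrt n\,(1-\rho^2)^{1/2}} \ge \theta\Bigr]
  = \Pr[\calN(0,1)\ge\theta]\cdot\exp(\cdot)\cdot(1+\eps)
  \le \Pr[\calN(0,1)\ge\theta]\cdot(1+\eps),
\]
and since $\eps \le (\theta+1)/(c\sqrt n) \le 1$ whenever $\theta \le c n^{1/6}$ and $n$ is at least a suitable constant (absorbing this into the choice of the constant $c$ in the hypothesis $\theta \le cn^{1/6}$), we get the bound $2\Pr[\calN(0,1)\ge\theta]$.

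The only real subtlety — and the step I would be most careful about — is the bookkeeping of the various universal constants $c$: the $c$ from \Cref{lem:Petrov} controls both the admissible range of $\theta$ and the size of $\eps$, and I want to pick the $c$ in the hypothesis $\theta \in [0, cn^{1/6}]$ small enough that $\eps \le 1$, so that $(1+\eps) \le 2$. One should also note the edge case $\rho = 0$ (allowed since $\rho\in[0,1)$), where $x\cdot N_0 = U$ and $1-\rho^2 = 1$, so part~(2) reduces to an instance of the same Cram\'er estimate with $\E[U_i^3]=0$; this is consistent and needs no separate argument. Everything else is a direct substitution into \Cref{lem:Petrov} and \Cref{claim:noise-moments}.
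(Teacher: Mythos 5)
Your proposal matches the paper's proof: both apply Cram\'er's estimate (\Cref{lem:Petrov}), using $\E[U_i^3]=0$ for part~(1) and the nonpositivity of $\E[Y_i^3]$ from \Cref{claim:noise-moments} together with the bound $\eps \le 1$ (equivalently $c(1+\theta)/\sqrt{n} \le 1$) for part~(2). Your write-up is simply more explicit about the constant bookkeeping, which the paper compresses into a single sentence.
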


\begin{proof}
  We apply \Cref{lem:Petrov} to obtain both inequalities.
  For the first inequality, observe that $\E[Y_i^3] = \E[B_i^3] = 0$.
  For the second inequality, we apply \Cref{lem:Petrov} using \Cref{claim:noise-moments}, which gives $\E[Y_i^3] \le 0$. 
  Note that $e^y \le 1$ for any $y \le 0$, and $c(1+\theta)/\sqrt{n} \le 1$.
\end{proof}

We will use the following approximation to compare the tails of the standard normal distribution.

\begin{lemma}[Lemma~22.2 in \cite{MR4201399}] \label{lem:bounds-phi-tail}
For any $\theta > 0$, 
\[
    \frac{1}{\theta+\frac{1}{\theta}}\leq \mathbb{P}\bigl[\calN(0,1)\ge \theta\bigr] \cdot \frac{\sqrt{2\pi}}{e^{-\theta^{2}/2}}\leq\frac{1}{\theta}.
\]
\end{lemma}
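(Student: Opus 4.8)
Write $\Phi^c(\theta) := \Pr[\calN(0,1) \ge \theta] = \frac{1}{\sqrt{2\pi}}\int_\theta^\infty e^{-x^2/2}\,dx$, so that the claim is equivalent to the pair of integral estimates $\frac{\theta}{\theta^2+1}\,e^{-\theta^2/2} \le \int_\theta^\infty e^{-x^2/2}\,dx \le \frac{1}{\theta}\,e^{-\theta^2/2}$ for $\theta > 0$ (note $\frac{1}{\theta+1/\theta} = \frac{\theta}{\theta^2+1}$). The plan is to get both bounds by comparing the Gaussian density on $[\theta,\infty)$ against an expression with an elementary antiderivative: $x e^{-x^2/2}$ integrates to $-e^{-x^2/2}$, and $(1 + x^{-2})e^{-x^2/2}$ integrates to $-x^{-1}e^{-x^2/2}$.

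For the upper bound I would observe that for every $x \ge \theta$ we have $1 \le x/\theta$, hence $e^{-x^2/2} \le \frac{1}{\theta}\,x e^{-x^2/2}$; integrating over $[\theta,\infty)$ and using $\int_\theta^\infty x e^{-x^2/2}\,dx = e^{-\theta^2/2}$ gives $\int_\theta^\infty e^{-x^2/2}\,dx \le \frac{1}{\theta}e^{-\theta^2/2}$, which is the right inequality after dividing by $\sqrt{2\pi}$.

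For the lower bound I would start from the identity $\int_\theta^\infty (1 + x^{-2}) e^{-x^2/2}\,dx = \frac{1}{\theta} e^{-\theta^2/2}$, obtained by recognizing the integrand as $-\frac{d}{dx}\bigl(x^{-1}e^{-x^2/2}\bigr)$ and noting the boundary term at $+\infty$ vanishes. Since $1 + x^{-2} \le 1 + \theta^{-2} = \frac{\theta^2+1}{\theta^2}$ for all $x \ge \theta$, this yields $\frac{\theta^2+1}{\theta^2}\int_\theta^\infty e^{-x^2/2}\,dx \ge \frac{1}{\theta} e^{-\theta^2/2}$, i.e.\ $\int_\theta^\infty e^{-x^2/2}\,dx \ge \frac{\theta}{\theta^2+1} e^{-\theta^2/2}$, the desired left inequality.

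There is essentially no obstacle: one only has to verify the two antiderivatives and that $e^{-x^2/2}$ and $x^{-1}e^{-x^2/2}$ tend to $0$ at $+\infty$. If a single unified argument is preferred, one can instead set $g(\theta) := \int_\theta^\infty e^{-x^2/2}\,dx - \frac{\theta}{\theta^2+1}e^{-\theta^2/2}$, compute $g'(\theta) = -\frac{2 e^{-\theta^2/2}}{(\theta^2+1)^2} < 0$ (the numerator simplifies after expanding $(\theta^2+1)^2 + 1 - 2\theta^2 - \theta^4 = 2$), and conclude $g(\theta) \ge \lim_{\theta\to\infty} g(\theta) = 0$, with an analogous monotonicity computation for the upper bound; but the density-comparison route above is the shortest and is the one I would write.
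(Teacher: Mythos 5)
Your proof is correct and is the standard Mills-ratio argument. The paper does not prove this lemma itself but cites it (Lemma~22.2 in \cite{MR4201399}), so there is no in-paper proof to compare against; your density-comparison derivation, and the alternative monotonicity check for the lower bound, both check out.
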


Finally, we state the following tail bounds for sums of independent bounded random variables with small variances.

\begin{claim}[Bernstein's inequality] \label{claim:bernstein}
  Let $Y_1, \ldots, Y_n$ be independent mean-zero random variables.
  Suppose $\abs{Y_i} \le M$ for every $i \in [n]$.
  Then
  \[
    \Pr\Bigl[ \sum_{i=1}^n Y_i \ge t\Bigr]
    \le \exp\Bigl( - \frac{t^2/2}{\sum_{i=1}^n \E[Y_i^2] + Mt/3} \Bigr) .
  \]
\end{claim}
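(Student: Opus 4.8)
The plan is to run the standard exponential moment (Chernoff) argument. For any $\lambda > 0$, Markov's inequality applied to $e^{\lambda \sum_i Y_i}$, together with independence, gives $\Pr\bigl[\sum_{i=1}^n Y_i \ge t\bigr] \le e^{-\lambda t} \prod_{i=1}^n \E\bigl[e^{\lambda Y_i}\bigr]$, so everything reduces to a per-variable bound on the moment generating function $\E[e^{\lambda Y_i}]$.

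The key step is that bound. Expanding $e^{\lambda Y_i} = 1 + \lambda Y_i + \sum_{k \ge 2} \lambda^k Y_i^k/k!$ and taking expectations, the linear term vanishes since $\E[Y_i] = 0$. Using $\abs{Y_i} \le M$ we have $\abs{\E[Y_i^k]} \le M^{k-2}\,\E[Y_i^2]$ for every $k \ge 2$, and the elementary estimate $k! \ge 2 \cdot 3^{k-2}$ lets us sum the resulting geometric series: for $0 < \lambda < 3/M$,
\[
  \E\bigl[e^{\lambda Y_i}\bigr]
  \le 1 + \E[Y_i^2] \sum_{k \ge 2} \frac{\lambda^k M^{k-2}}{k!}
  \le 1 + \frac{\lambda^2 \E[Y_i^2]/2}{1 - \lambda M/3}.
\]
Applying $1 + x \le e^x$ and multiplying over $i \in [n]$ yields $\prod_{i=1}^n \E[e^{\lambda Y_i}] \le \exp\bigl( \frac{\lambda^2 \sigma^2/2}{1 - \lambda M/3} \bigr)$, where $\sigma^2 := \sum_{i=1}^n \E[Y_i^2]$.

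Combining with the first step gives $\Pr\bigl[\sum_i Y_i \ge t\bigr] \le \exp\bigl( -\lambda t + \frac{\lambda^2 \sigma^2/2}{1 - \lambda M/3} \bigr)$ for all $0 < \lambda < 3/M$. To finish I would plug in $\lambda := t/(\sigma^2 + Mt/3)$, check that it lies in $(0, 3/M)$, and verify that with this choice $1 - \lambda M/3 = \sigma^2/(\sigma^2 + Mt/3)$, so the quadratic term equals $\lambda t/2$ and the exponent collapses to $-\lambda t/2 = -\tfrac{t^2/2}{\sigma^2 + Mt/3}$, which is exactly the claimed bound.

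I do not expect a genuine obstacle here, as this is a textbook inequality; the only mildly delicate points are the combinatorial estimate $k! \ge 2 \cdot 3^{k-2}$ (needed to produce the constant $3$ appearing in the $Mt/3$ term) and confirming that the optimizing $\lambda$ stays strictly below the radius of convergence $3/M$, both of which are routine.
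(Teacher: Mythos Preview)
Your proof is correct and is exactly the standard Chernoff--Cram\'er argument for Bernstein's inequality; the paper does not supply a proof at all, merely stating the claim as a known inequality and immediately specializing it. So there is nothing to compare: your write-up fills in a gap the authors deliberately left to the literature.
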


Again, specializing \Cref{claim:bernstein} to our applications, we obtain the following.

\begin{claim} \label{claim:noise-tail}
  $\Pr[ \abs{\summ(x \cdot N_\rho) - \rho \cdot \summ x} \ge s] \le 2 \exp( -\frac{c s^2}{(1-\rho^2)n + s})$ for every $x \in \pmo^n$ and $s > 0$.
\end{claim}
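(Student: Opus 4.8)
\textbf{Proof plan for \Cref{claim:noise-tail}.}
The plan is to simply unwind the definitions and apply Bernstein's inequality (\Cref{claim:bernstein}) together with the moment computation in \Cref{claim:noise-moments}. First I would set $Y_i := (x \cdot N_\rho)_i - \rho x_i$ for each $i \in [n]$, exactly as in \Cref{claim:noise-moments}, and observe that the $Y_i$ are independent (since the noise acts independently on each coordinate) and mean zero, and that
\[
  \summ(x \cdot N_\rho) - \rho \cdot \summ x = \sum_{i=1}^n \bigl((x \cdot N_\rho)_i - \rho x_i\bigr) = \sum_{i=1}^n Y_i .
\]

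Next I would record the two quantities Bernstein's inequality needs: a uniform bound $M$ on $\abs{Y_i}$ and the total variance $\sum_i \E[Y_i^2]$. Since $(x \cdot N_\rho)_i \in \pmo$ and $\rho x_i \in [-1,1]$, we get $\abs{Y_i} \le 2$, so $M = 2$. By \Cref{claim:noise-moments}, $\E[Y_i^2] = 1 - \rho^2$, hence $\sum_{i=1}^n \E[Y_i^2] = (1-\rho^2)n$. Applying \Cref{claim:bernstein} to $\sum_i Y_i$ and, symmetrically, to $\sum_i (-Y_i)$ (which is again a sum of independent mean-zero variables bounded by $2$), then taking a union bound, yields
\[
  \Pr\bigl[ \abs{\summ(x \cdot N_\rho) - \rho \cdot \summ x} \ge s \bigr]
  \le 2 \exp\Bigl( -\frac{s^2/2}{(1-\rho^2)n + 2s/3} \Bigr) .
\]

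Finally I would absorb the constants into the ``$c$'' in the statement: since $(1-\rho^2)n + 2s/3 \le (1-\rho^2)n + s$, the exponent is at least $\tfrac{1}{2} \cdot \tfrac{s^2}{(1-\rho^2)n + s}$, which gives the claimed bound with $c = 1/2$. There is no real obstacle here — the only thing to be slightly careful about is the two-sided union bound and the crude bound $\abs{Y_i} \le 2$ (one could compute the exact range of $Y_i$, but it is unnecessary). This completes the proof.
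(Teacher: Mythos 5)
Your proposal is correct and matches the paper's proof essentially line for line: same decomposition into the mean-zero variables $Y_i := (x \cdot N_\rho)_i - \rho x_i$, same invocation of \Cref{claim:bernstein} together with the variance from \Cref{claim:noise-moments}, same union bound for the two-sided estimate. The only cosmetic difference is that the paper notes the slightly tighter bound $\abs{Y_i} \le 1+\rho$ before rounding up to $2$, which changes nothing since the constant is absorbed into $c$ anyway.
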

\begin{proof}
  We will only prove that $\Pr[ \summ (x \cdot N_\rho) - \rho \cdot \summ x \ge s ] \le \exp(- \frac{c s^2}{ (1-\rho^2) n + s} )$.
  The other direction is analogous and the conclusion follows from a simple union bound.
  Let us consider the mean zero variables $Y_i := (x \cdot N_\rho)_i - \rho x_i$.
  We have $\abs{Y_i} \le 1+\rho \le 2$.
  Applying \Cref{claim:bernstein} with \Cref{claim:noise-moments}, we have
  \begin{align*}
    \Pr\bigl[ \summ (x \cdot N_\rho) - \rho \cdot \summ x \ge s \bigr]
    = \Pr\Bigl[ \sum_{i=1}^n Y_i \ge s \Bigr]
    &\le \exp\Bigl(- \frac{c s^2}{ (1-\rho^2) n + s} \Bigr) . \qedhere
  \end{align*}
\end{proof}

\subsection{Proof of \Cref{thm:threshold-ck-distinguish}}

Let $D$ be the $(ck/n)^k$-biased distribution in \Cref{thm:sb-anticoncentrate} such that $\abs{\summ D} \le 21\sqrt{kn}$.
Let $t := \beta \sqrt{kn}/\rho$ for a sufficiently large constant $\beta > 0$, and $\theta = t/\sqrt{n} = \beta \sqrt{k}/\rho$.
Using $(1-\rho^2)^{-1/2} \ge 1 + \rho^2/2$, we have
\begin{align} \label{eq:normalization}
  \frac{t - 21 \rho \sqrt{kn}}{\sqrt{n} (1-\rho^2)^{1/2}}
  = \sqrt{k} \cdot \frac{\tfrac{\beta}{\rho} - 21 \rho}{(1-\rho^2)^{1/2}}
  \ge  \beta \frac{\sqrt{k}}{\rho} \cdot \Bigl(1 - \frac{21\rho^2}{\beta}\Bigr) \cdot \Bigl(1 + \frac{\rho^2}{2}\Bigr) 
  \ge  (1 + \rho^2/4) \cdot \theta .
\end{align}
By assumption, $k \le c\rho^2 n^{1/3}$ and so $\theta \le cn^{1/6}$ for which \Cref{lem:petrov-noise} applies.
As $\abs{\summ D} \le 21 \sqrt{kn}$, using the second inequality in \Cref{lem:petrov-noise} and \cref{eq:normalization}, we have
\begin{align} 
  \Pr\bigl[ \summ (D \cdot N_\rho) \ge t \bigr]
  &\le \Pr\Bigl[ \summ (D \cdot N_\rho) - \rho \cdot \summ D  \ge t - 21 \rho \sqrt{kn}  \Bigr] \nonumber \\
  & \le  2 \Pr \bigl[ \mathcal{N}(0,1) \geq (1+\rho^2/4) \cdot \theta  \bigr] . \label{eq:petrov-smpn}
\end{align}
On the other hand, from the first inequality of \Cref{lem:petrov-noise}, we have
\begin{align} \label{eq:petrov-B}
  \Pr\bigl[ B \ge t \bigr]
  \ge \Pr \bigl[ \mathcal{N}(0,1)  \ge  \theta \bigr]. 
\end{align}
By \Cref{lem:bounds-phi-tail},
\begin{align}
  \Pr[ \calN(0,1) \ge \theta ]  \nonumber
  &\ge \frac{1}{2\theta} \frac{1}{\sqrt{2\pi}} e^{-\theta^2/2} \nonumber \\
  &\ge  \frac{c\sqrt{k}}{\rho}e^{-\theta^2/2} + \frac{2}{\theta} \frac{1}{\sqrt{2\pi}} e^{-(1+\rho^2/4)\theta^2/2} \nonumber \\
  &\ge e^{-ck/\rho^2} + 2\Pr\bigl[ \calN(0,1) \ge (1+\rho^2/4) \cdot \theta \bigr] . \label{eq:compare-gaussian}
\end{align}
Putting \cref{eq:petrov-smpn,,eq:petrov-B,eq:compare-gaussian} together completes the proof. \qed

\subsection{Proof of \Cref{thm:anticoncentration-vs-threshold}}

Let $k' = C\log(1/\rho)k$ and $t = \sqrt{k'n}$.
For every $k'$-uniform distribution $D_{k'}$, by \Cref{fact:tail-bound}, we have
\[
  \Pr[ \summ D_{k'} \ge t ]
  \le \sqrt{2} \Bigl( \frac{k'n}{et^2} \Bigr)^{k'/2}
  \le \rho^{Ck/2} .
\]
We will construct a small-distribution $D$ that puts more mass on the tail of an even larger threshold than $t$.
Specifically, let $t' = 2t/\rho$.
Applying \Cref{thm:sb-anticoncentrate} to $k$ and $t'$, we obtain a $(ck/n)^{k/2}$-biased distribution $D$ such that
\[
  \Pr[ \summ D \ge t']
  \ge \frac{1}{3k^{3/2}} \Bigl(\frac{\rho^2}{256C \log(1/\rho)}\Bigr)^{k/2} .
\]
We now show that conditioned on $\summ D \ge t'$, the smoothed distribution $D \cdot N_\rho$ still puts at least half the mass beyond $t$.
Since $\rho t'- t \ge t$, by \Cref{claim:noise-tail},
\[
  \Pr\bigl[ \abs{\summ(x \cdot N_\rho) - \rho \cdot \summ x} \ge t\bigr]
  \le 2 e^{-\frac{c t^2}{(1-\rho^2)n + t}}
  \le 1/2 .
\]
Therefore
\[
  \Pr[ \summ (D \cdot N_\rho) \ge t ] 
  \ge \frac{1}{6k^{3/2}} \Bigl(\frac{\rho^2}{256C\log(1/\rho)}\Bigr)^{k/2}
  \ge \Bigl(\frac{c\rho^2}{\log(1/\rho)} \Bigr)^{k/2} . \qed
\]

\subsection{Proof of \Cref{thm:threshold-caratheodory}}

We first state the main technical result we need, which may be of independent interest.
We defer its proof to the next section.
\begin{lemma}\label{lem:mixture-distance-improved}
  Let $M$ be a mixture of $k$ Gaussian distributions each with variance $\sigma^2 = 1-\rho^2$. Then there exists an interval $I$ such that
  \[
    \abs[\big]{ \Pr [\calN(0,1) \in I]- \Pr [M\in I] } \geq 2^{-c k/\rho }.
  \]
  In particular, up to a factor 2 the same bound applies with the interval replaced with some threshold.
\end{lemma}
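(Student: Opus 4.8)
The plan is to prove Lemma~\ref{lem:mixture-distance-improved}, which asserts that the standard Gaussian $\calN(0,1)$ is $2^{-ck/\rho}$-far (in a single interval, hence up to a factor $2$ in a single threshold) from any mixture $M = \sum_{j=1}^k w_j \calN(\mu_j, \sigma^2)$ with $\sigma^2 = 1-\rho^2$. The overarching strategy is to exhibit a single test interval on which the two densities must differ noticeably, and to locate it in the tail, where the curvature mismatch between a variance-$1$ Gaussian and a variance-$(1-\rho^2)$ Gaussian is amplified the most. Concretely, each component density $\phi_{\mu_j,\sigma}(x) = \frac{1}{\sqrt{2\pi}\sigma} e^{-(x-\mu_j)^2/2\sigma^2}$ decays like $e^{-x^2/2\sigma^2}$, which for large $|x|$ is \emph{much} smaller than the $e^{-x^2/2}$ decay of the standard density. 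So at a threshold $t$ chosen sufficiently far out, $\Pr[M \ge t]$ is dwarfed by $\Pr[\calN(0,1)\ge t]$ — unless some component mean $\mu_j$ is itself very large, which is the case one must rule out or exploit separately.

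The key steps, in order, would be: (1) Normalize so all $w_j > 0$ and WLOG order the means; let $\mu_{\max} = \max_j |\mu_j|$. (2) \emph{Case A: $\mu_{\max}$ is small}, say $\mu_{\max} \le R$ for a threshold $R$ of order $\sqrt{k}/\rho$ to be tuned. Then pick $t \approx R + \Theta(\sqrt{\log(1/\text{target})})$, i.e. a point beyond all component means by enough standard deviations; using the Gaussian tail bounds in \Cref{lem:bounds-phi-tail}, bound $\Pr[M \ge t] \le \sum_j w_j \Pr[\calN(0,1) \ge (t-\mu_j)/\sigma] \le \Pr[\calN(0,1)\ge (t-R)/\sigma]$, and compare this against $\Pr[\calN(0,1) \ge t]$. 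The ratio is $\exp\bigl(\tfrac12 t^2 - \tfrac{(t-R)^2}{2\sigma^2}\bigr)$; since $1/\sigma^2 = 1/(1-\rho^2) \ge 1+\rho^2$, this exponent is at least $\tfrac{\rho^2}{2} t^2 - \tfrac{t R}{\sigma^2} + \ldots$, which is large and negative-reciprocal (i.e. $M$'s tail is smaller) once $t \gtrsim R/\rho^2$. Working out the arithmetic with $R \approx \sqrt k/\rho$ gives $\Pr[\calN(0,1) \ge t] - \Pr[M \ge t] \ge \Pr[\calN(0,1)\ge t]/2 \ge 2^{-ck/\rho^2}$ for an appropriate $t$; one then sharpens the exponent bookkeeping to the claimed $2^{-ck/\rho}$ by choosing $t$ closer to $R$ and using the factor-of-two slack between interval and threshold. (3) \emph{Case B: some $|\mu_j|$ is large}, i.e. $\mu_{\max} > R$. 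Then that component puts mass $w_j$ near $\mu_{\max}$, which is $\Omega(\sqrt k/\rho)$ standard deviations out, a region where $\calN(0,1)$ has negligible mass; take the interval $I$ to be a window of width $O(\sigma)$ around $\mu_{\max}$ (or the threshold $t = \mu_{\max} - \Theta(\sigma)$), on which $\Pr[M \in I] \ge c\, w_j$ while $\Pr[\calN(0,1)\in I] \le \Pr[\calN(0,1)\ge R] \ll 2^{-ck/\rho}$. If every $w_j$ were tiny this would fail, but $\sum_j w_j = 1$ with only $k$ components forces some $w_j \ge 1/k$, and $1/k \gg 2^{-ck/\rho}$. (4) Combine: in either case we found an interval witnessing distance $\ge 2^{-ck/\rho}$; since a threshold is a difference of two ``$\ge$'' events, one of the two endpoint-thresholds achieves at least half the interval discrepancy, giving the ``in particular'' clause.

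The main obstacle I expect is the tension in choosing the single scale $R$ (the cutoff separating ``small means'' from ``large means'') and then the threshold $t$ inside Case A: $R$ must be small enough that Case B gives something $\ge 1/k \gg 2^{-ck/\rho}$, yet in Case A the gain $\exp(\Theta(\rho^2 t^2) - \Theta(tR))$ only becomes favorable once $t$ is a good deal larger than $R/\rho^2$, and pushing $t$ out that far costs $e^{-t^2/2}$ in the absolute mass of $\calN(0,1)\ge t$. Balancing these — and in particular squeezing the final exponent down from the ``easy'' $ck/\rho^2$ to the stated $ck/\rho$ — requires a careful choice, most likely $R = \Theta(\sqrt{k}/\rho)$ and $t = R + \Theta(\sqrt{k})$ or similar, together with the interval-vs-threshold factor of $2$; getting every constant to line up so that a \emph{single} interval (not a union of $k$ of them) suffices is the delicate part. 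A secondary subtlety is handling mixtures where the means are spread out rather than clustered: the union bound $\Pr[M\ge t] \le \max_j \Pr[\calN \ge (t-\mu_j)/\sigma]$ used in Case A only needs $t$ beyond \emph{all} means, so this is actually fine, but one should double-check that intermediate means (neither ``small'' nor extreme) don't individually contribute more tail mass than $\calN(0,1)$ — which again reduces to the same $1/\sigma^2 \ge 1+\rho^2$ curvature comparison, so no new idea is needed there.
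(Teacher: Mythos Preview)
Your dichotomy cannot produce the exponent $ck/\rho$; the two cases place incompatible demands on the cutoff $R$. In Case~A you need $t$ past all the means \emph{and} far enough that the variance gap bites, which (as you note) forces $t\gtrsim R(1+\sigma)/\rho^2\approx 2R/\rho^2$; but to keep $\Pr[\calN(0,1)\ge t]\ge 2^{-ck/\rho}$ you need $t\lesssim\sqrt{k/\rho}$, hence $R\lesssim\rho^{3/2}\sqrt{k}$. In Case~B you need $\Pr[\calN(0,1)\ge R]\ll 2^{-ck/\rho}$, which forces $R\gtrsim\sqrt{k/\rho}$. For small $\rho$ these ranges are disjoint. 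In particular, with your choice $R=\sqrt{k}/\rho$ the Case~A threshold must be $t\approx\sqrt{k}/\rho^3$, giving $\Pr[\calN(0,1)\ge t]\approx e^{-ck/\rho^6}$, not the $2^{-ck/\rho^2}$ you wrote and nowhere near $2^{-ck/\rho}$. The ``factor-of-two slack'' between intervals and thresholds is an additive factor of~$2$ in the final probability, not a change in the exponent's $\rho$-dependence, so it cannot rescue this. There is also a logical gap in Case~B: ``some $w_j\ge 1/k$'' holds, but the component with $|\mu_j|>R$ need not be that one; a mixture can have a component at $\mu=100\sqrt{k}$ with weight $2^{-k}$ while all heavy components sit near the origin.

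The paper's proof is structurally different and does not case-split on the means at all. It multiplies both densities by $e^{x^2/2\sigma^2}$, turning the standard normal into $e^{\alpha x^2}$ with $\alpha=\tfrac{1}{2\sigma^2}-\tfrac12=\tfrac{\rho^2}{2(1-\rho^2)}$ and the mixture into a linear combination $\sum_j a_j e^{b_j x}$ of $k$ real exponentials. The heart of the argument is that on the grid $\{-k,\dots,k\}$ (after rescaling), the Hankel-type matrix $M_k(\widetilde g)$ built from the exponential sum is singular (the $k$ exponentials obey a linear recurrence of order $k$), whereas the corresponding matrix $M_k(q^{x^2})$ is invertible with inverse entries bounded via a Vandermonde\,/\,$q$-binomial computation. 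A spectral-norm comparison then forces $\|e^{\alpha x^2}-\widetilde g\|_\infty\ge e^{-ck/(D^2\alpha)}$ on $[-D\sqrt{k},D\sqrt{k}]$; optimizing $D$ and translating back to the original densities gives $\|f-g\|_\infty\ge e^{-ck/\rho}$, and a short derivative bound converts the pointwise density gap into an interval gap. The $k$-dependence enters through the rank deficiency of $M_k(\widetilde g)$, not through any pigeonhole on weights or means---this is exactly what your tail comparison is missing.
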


\begin{proof}[Proof of \Cref{thm:threshold-caratheodory} assuming \Cref{lem:mixture-distance-improved}]
  We first claim that $D \cdot N_\rho$ is $(c n^{-1/2})$-close to a mixture of $k$ Gaussian distributions with variance $1-\rho^2$ in Kolmogorov (aka. CDF) distance.
  This follows from applying \Cref{lem:Berry-Esseen,claim:noise-moments} to any fixed weight $w$ of $D$.
  Specifically, we have that for every $x \in \pmo^n$ and $\theta$,
  \begin{align*}
    \abs[\big]{\Pr\bigl[ \summ (x \cdot N_\rho) \ge \theta \sqrt{n} \bigr] -  \Pr \bigl[ \mathcal{N}(\mu,1-\rho^2) \geq  \theta \bigr]}
    \le c/\sqrt{n} ,
  \end{align*}
  for some $\mu$ that depends only on $\summ x$ and $\rho$.
  On the other hand, again by  \Cref{lem:Berry-Esseen}, for every $\theta$, we have
  \begin{align*}
    \abs[\big]{ \Pr\bigl[ B \ge \theta \sqrt{n} \bigr] - \Pr \bigl[ \mathcal{N}(0,1) \ge  \theta \bigr]}
    \le 1/\sqrt{n} . 
  \end{align*}
  Combining the above with \Cref{lem:mixture-distance-improved}, we conclude that there exists some $\theta$ such that
  \[
    \abs[\big]{\Pr\bigl[ B \ge \theta \sqrt{n} \bigr] -    \Pr\bigl[ \summ (D \cdot N_{\rho}) \ge \theta \sqrt{n}  \bigr] }
    \geq 2^{-ck/\rho} - c/\sqrt{n} . \qedhere
  \]
\end{proof}

\section{Proof of \Cref{lem:mixture-distance-improved}}

We wish to show that a linear sum of $k$ exponential functions with variance $< 1$ cannot approximate the standard normal well. We can factor the two expressions so the mixture becomes a linear sum of $k$ exponential functions, which can be written  as $\sum_{i \in [k]} a_i e^{b_i}$, while the standard normal can be written as $e^{ \alpha x^2}$. This factoring crucially uses the fact the variances in the mixture are identical.

We then argue the distance must be large for some point. To prove this, we show the entries in the inverse of a Vandermonde like matrix which corresponds to the $e^{\alpha x^2}$ are not too large. This step is the bulk of the proof. On the other hand, the Vandermonde matrix corresponding to the sum of exponentials is singular. After some matrix norm manipulations this allows us to achieve the desired result.

\begin{lemma}\label{thm:mixture_vs_normal}
Suppose that $f(x)$ is the PDF of a Gaussian distribution with variance $1$,
and $g(x)$ is the PDF of a mixture of $k$ Gaussian distributions with variance $1-\rho^2 = \sigma^2 <1$. Then 
$$
\|f-g\|_\infty\geq e^{-ck/\rho}.
$$
\end{lemma}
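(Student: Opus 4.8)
The plan is to reduce the statement to a clean algebraic fact about the inverse of a Vandermonde-type matrix, following the sketch given just before the statement. First I would normalize: write $f(x) = \frac{1}{\sqrt{2\pi}} e^{-x^2/2}$ and $g(x) = \sum_{i=1}^k w_i \frac{1}{\sqrt{2\pi}\sigma} e^{-(x-\mu_i)^2/(2\sigma^2)}$ with $w_i \ge 0$, $\sum_i w_i = 1$. Completing the square in each Gaussian summand of $g$ and in $f$, one sees that $e^{\gamma x^2} f(x)$ and $e^{\gamma x^2} g(x)$, for a suitable $\gamma$ chosen to kill the curvature of the \emph{wider} density $f$ (i.e.\ $\gamma = 1/2$), turn $f$ into a constant (times $e^{0 \cdot x^2}$) and turn $g$ into a linear combination $\sum_{i=1}^k a_i e^{b_i x^2 + c_i x}$ where each quadratic form $b_i x^2 + c_i x$ has \emph{strictly negative} leading coefficient $b_i = -\tfrac{\rho^2}{2\sigma^2}<0$ (crucially the \emph{same} $b_i$ for all $i$, since all mixture variances are equal). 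After a further substitution $y = x^2$-style change of variable — more precisely, evaluating at a geometric-like progression of points and using that $e^{b x^2}$ at points $x_0, x_1, \dots$ behaves like powers of a common base — the difference $f-g$, suitably rescaled, becomes a linear combination of $k$ genuine exponential functions $\sum a_i e^{b x_j}$ plus one constant term, and we want to show it cannot be identically tiny on a long stretch of points.

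The heart of the argument is the following. Suppose for contradiction that $\|f-g\|_\infty \le \delta := e^{-ck/\rho}$ (for a $c$ to be fixed large). Pick $N = k+1$ evaluation points $x_0 < x_1 < \dots < x_k$ at spacing proportional to $\rho$ in an interval of length $O(k\rho)$ around the origin (so that on all of them the common exponential factor $e^{-x_j^2/2}$ and the conditioning factors are within constant multiplicative factors of each other — this uses the interval being short, of length $O(k\rho)$, which is where the $1/\rho$ in the exponent originates). Writing out the $k+1$ equations $|f(x_j) - g(x_j)| \le \delta$ and substituting the factored forms, I get a linear system: the vector of values of $g$ at the $x_j$ lies in the $k$-dimensional column space of a matrix $V$ whose columns are $(e^{b_i x_j})_{j}$ — but since all $b_i$ are equal this is actually a rank-one obstruction, so more carefully one expands $e^{b x + c_i}$ and the relevant matrix is a Vandermonde matrix in the quantities $e^{\text{(linear in } x_j)}$, of size $(k+1)\times k$, hence of rank $\le k$; meanwhile the vector of values of $f$ is (after the rescaling) a \emph{fixed} nonzero vector $v$ not in that column space. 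Quantitatively: the distance from $v$ to the column space of $V$ is controlled below by $1/\|V^{-1}\|$ for the $k\times k$ Vandermonde submatrix $V$, and the distance from $(g(x_j))_j$ rescaled to that column space is $0$, so $\delta$ times the norm of the rescaling must be $\ge$ that distance. It therefore suffices to upper bound $\|V^{-1}\|$ for a $k\times k$ Vandermonde matrix with nodes $z_j = e^{\beta j}$ for $\beta = \Theta(\rho)$ (a geometric progression), and the standard formula for inverse Vandermonde entries in terms of elementary symmetric polynomials and node differences gives $\|V^{-1}\| \le \bigl(\tfrac{C}{\rho}\bigr)^{O(k)} = e^{O(k\log(1/\rho))}$ — which, combined with the rescaling factors $e^{O(k\rho)} \le e^{O(k)}$, yields the contradiction once $c$ is chosen large enough. (One should double-check whether the bound obtained is $e^{-O(k\log(1/\rho))}$ or the claimed $e^{-O(k/\rho)}$; since $\log(1/\rho) \le 1/\rho$ the claimed weaker bound follows either way, so I would simply state $e^{-ck/\rho}$.)

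The main obstacle I anticipate is the bound on $\|V^{-1}\|$ for the Vandermonde matrix with geometrically-spaced (and possibly clustered, when $\rho$ is small) nodes: when $\rho \to 0$ the nodes $e^{\beta j}$ are all close to $1$, so the Vandermonde matrix is ill-conditioned, and one must track the $\rho$-dependence of the condition number precisely rather than treating it as an absolute constant. I would handle this by factoring out the near-common value of the nodes and working with the normalized differences $z_j - z_\ell \asymp \rho|j-\ell|$, plugging these into the explicit inverse-Vandermonde formula; the products of $k$ such differences in the denominator produce exactly the $\rho^{-\Theta(k)}$ blow-up, and one checks this is the only source of $\rho$-dependence. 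A secondary technical point is justifying the passage from the $L_\infty$ bound on $f-g$ (over the real line) to the finite system of inequalities at the chosen points and making sure the rescaling factors $e^{x_j^2/2}$ etc.\ are bounded by $e^{O(k)}$ on the chosen interval — this is routine given the interval has length $O(k\rho) = O(k)$. Finally, Lemma~\ref{lem:mixture-distance-improved} follows from Lemma~\ref{thm:mixture_vs_normal} by integrating: an $L_\infty$ lower bound of $e^{-ck/\rho}$ on a difference of densities, together with the fact that both densities and their difference vary slowly (on scale $\Omega(1)$, and in fact the relevant witness point lies in a bounded region after the above analysis), upgrades to an $\Omega(e^{-ck/\rho})$ lower bound on $|\Pr[\mathcal N(0,1)\in I] - \Pr[M\in I]|$ for an interval $I$ of constant length around the witness point, and then to a threshold at the cost of a factor $2$.
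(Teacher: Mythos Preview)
Your high-level plan matches the paper's, but the linear-algebra step as you describe it has a genuine gap, and your choice of normalization and interval are both off in ways that matter.

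\textbf{The normalization.} You multiply by $e^{x^2/2}$ so that $f$ becomes constant and $g$ becomes $e^{bx^2}\sum_i a_i e^{c_i x}$ with $b=-\rho^2/(2\sigma^2)$. This still leaves a Gaussian factor on $g$ and forces you into the vague ``$y=x^2$-style change of variable'' that you never make precise. The paper multiplies by $e^{x^2/(2\sigma^2)}$ instead, which kills the quadratic in every mixture component, so that $\bar g$ is a \emph{pure} sum of $k$ exponentials and $\bar f(x)=e^{\alpha x^2}$ with $\alpha=\rho^2/(2\sigma^2)$. This is the setting in which the matrix argument becomes clean.

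\textbf{The main gap: the distance-to-subspace step.} You form a $(k{+}1)\times k$ matrix $V$ with columns $(e^{c_i x_j})_j$ and claim that the distance from the vector $v=(\bar f(x_j))_j$ to its column space is bounded below by $1/\|V^{-1}\|$ for a $k\times k$ submatrix. Two problems: first, the nodes of this Vandermonde matrix are $e^{c_i\delta}$ with $c_i=\mu_i/\sigma^2$, i.e.\ they depend on the \emph{unknown} means, so $\|V^{-1}\|$ is not under your control and can be arbitrarily large (adversary can make the $\mu_i$ cluster). Second, even for fixed $V$, ``distance from $v$ to $\operatorname{col}(V)\ge 1/\|V^{-1}\|$'' is not a correct inequality. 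You need a lower bound on that distance that is \emph{uniform over all choices of the $k$ exponentials}, and nothing in your sketch delivers this.

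The paper resolves this with a different linear-algebraic object: evaluate $\bar f,\bar g$ at the $2k{+}1$ integer points $-k,\dots,k$ (after a rescale $x\mapsto Dx/\sqrt{k}$) and form the $(k{+}1)\times(k{+}1)$ \emph{Hankel} matrix $M_k(h)_{i,j}=h(i+j-k-2)$. For \emph{any} sum of $k$ exponentials, $M_k(\bar g)$ is singular (the sequence satisfies an order-$k$ linear recurrence), independently of the $\mu_i$'s. Meanwhile $M_k(\bar f)=M_k(q^{x^2})$ with $q=e^{\alpha D^2/k}$ is a fixed matrix; it is diagonally equivalent to $\operatorname{Vand}(1,q^2,\dots,q^{2k})$, and the paper bounds its inverse entrywise via a $q$-binomial identity, obtaining $\|M_k(\bar f)^{-1}\|\le e^{O(k/(D^2\alpha))}$. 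Then $AM_k(\bar g)$ singular forces $\|M_k(\bar f-\bar g)\|\ge 1/\|A\|$, which gives the pointwise lower bound on $\bar f-\bar g$ with no dependence on the unknown means.

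\textbf{The interval and the $k/\rho$ exponent.} Your interval of length $O(k\rho)$ produces a rescaling loss $e^{x_j^2/2}\le e^{O(k^2\rho^2)}$; combined with your $e^{O(k\log(1/\rho))}$ Vandermonde bound, the final lower bound is $e^{-O(k^2\rho^2+k\log(1/\rho))}$, which fails to imply $e^{-ck/\rho}$ once $k\gtrsim \rho^{-3}$. The paper instead works on $[-D\sqrt{k},D\sqrt{k}]$ and \emph{optimizes} $D$: the rescaling loss is $e^{-D^2k/(2\sigma^2)}$ and the inverse-matrix bound is $e^{-ck/(D^2\alpha)}$, and choosing $D^2\asymp\sqrt{\sigma^2/\alpha}\asymp \sigma^2/\rho$ balances them to $e^{-ck/\sqrt{\sigma^2\alpha}}=e^{-ck/\rho}$. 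The specific $e^{-ck/\rho}$ exponent is exactly this balance; it does not fall out of a spacing-$\rho$ grid.

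Your last paragraph on upgrading the $L_\infty$ bound to an interval probability (Lemma~\ref{lem:mixture-distance-improved}) is correct and essentially what the paper does.
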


We let $\phi(x): = \frac{1}{\sqrt{2\pi}} e^{-x^2/2}$ denote the probability density function of $\mathcal{N}(0,1)$.

\begin{proof}[Proof of \Cref{lem:mixture-distance-improved} assuming $\Cref{thm:mixture_vs_normal}$]
Let $g(x)$ denote the PDF of $M$  and set $h(x)=\phi(x)-g(x)$. By \Cref{thm:mixture_vs_normal}, there exists some $a\in \R$ with
$$
|h(a)|\geq e^{-c k/\rho}.  
$$
We have $|\phi'(x)|\leq (2\pi e)^{-1/2}$ and $|g'(x)|\leq (2\pi e)^{-1/2}/\sigma^{2}$ for all $x$,
so $|h'(x)|\leq (2\pi e)^{-1/2}(1+\frac{1}{\sigma^2})< \frac{1}{2\sigma^2}$.
We claim there exists an interval $I\subseteq \R$ with 
\[
|\Pr [\mathcal{N}(0,1) \in I]-P[M\in I]| \geq 2\sigma^2 \left(e^{-ck/\rho}\right)^2     = e^{-ck/\rho}.
\]
To see this, assume that $h(0)\geq e^{-ck/\rho}$. Then $h(x) \geq e^{-ck/\rho} - \frac{1}{2\sigma^2} |x| $ for any $x$. So we set the interval $I = [- 2\sigma^2 e^{-ck/\rho},2\sigma^2 e^{-ck/\rho}] $, and then $\int_I h(x)dx \geq 2\sigma^2 (e^{-ck/\rho})^2$. Finally, note the assumption that $a=0$ can be made without loss of generality.  \qedhere
\end{proof}

\subsection{Proof of \Cref{thm:mixture_vs_normal}} 

The main technical result we need is the following.

\begin{lemma}\label{lem:GapMiddle}
Let $\alpha,D>0$ be fixed. Let 
$$\Delta(k):=\inf_g\|e^{\alpha x^2}-g(x)\|_\infty
$$
where the infimum is over $g$ that are a linear combination of $k$ exponential functions,
and the norm $\|\cdot\|_\infty$ is the supremum over the interval $[-D\sqrt{k},D\sqrt{k}]$. Then we have
$$
\Delta(k)\geq \exp\Big(\frac{-ck}{D^2\alpha}\Big).
$$
\end{lemma}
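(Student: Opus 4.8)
The plan is to reduce the statement to a concrete extremal problem about exponential sums and then invoke the Erd\'eyi-type machinery already set up in the paper, combined with the Chebyshev/Coppersmith--Rivlin interpolation toolkit. First I would observe that $g$ being a linear combination of $k$ exponentials, say $g(x) = \sum_{i=1}^k a_i e^{b_i x}$, means that the ratio $g(x) e^{-\alpha x^2}$ --- or more usefully the product $e^{\alpha x^2} - g(x)$ itself --- has a structure amenable to sampling on an arithmetic progression. Completing the square, each term $a_i e^{b_i x}$ can be matched against $e^{\alpha x^2}$ by writing $e^{\alpha x^2} = e^{\alpha(x - b_i/(2\alpha))^2} e^{b_i x} e^{-b_i^2/(4\alpha)}$, but the cleaner move is to evaluate the candidate approximant on equally spaced points $x = j\eta$ for $j$ in a range of size $\sim \sqrt k / (D \cdot \text{something})$ and note that $g(j\eta) = \sum_i a_i (e^{b_i \eta})^j$ is exactly a polynomial of degree $k-1$ in the variable $z_j := $ (one monomial per exponential, after the substitution $w_i = e^{b_i\eta}$), i.e.\ a \emph{generalized} Vandermonde expression. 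So on the grid, the error function $e^{\alpha x^2} - g(x)$ at the points $j\eta$ looks like ``(values of $e^{\alpha \eta^2 j^2}$) minus (a degree-$(k-1)$ exponential-polynomial in $j$)''.

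Next I would make the error quantity at a single point --- say $x=0$, where $e^{\alpha x^2}=1$ --- dominate a weighted sum of its values at the other grid points, which is exactly the hypothesis format of \Cref{lem:erdelyi}. Concretely: suppose for contradiction that $\Delta(k) = \|e^{\alpha x^2} - g\|_\infty < \exp(-ck/(D^2\alpha))$ on $[-D\sqrt k, D\sqrt k]$. Consider the Vandermonde matrix whose rows are indexed by the grid points $j$ and whose columns correspond to the $k$ functions $w_i^j = e^{b_i \eta j}$. Since there are $k$ exponentials but we sample at more than $k$ points, this matrix is (generically, and we can perturb to ensure it) of rank $k$, so there is a nonzero linear functional --- a vector $(\lambda_j)$ in its left kernel restricted to $k+1$ rows --- annihilating all the columns; applying this functional to the vector of error values kills the $g$ contribution entirely and leaves $\sum_j \lambda_j e^{\alpha \eta^2 j^2}$, which must therefore be small (bounded by $\|\lambda\|_1 \Delta(k)$). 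The task is to show this forces $\Delta(k)$ to be not-too-small, which means controlling $\|\lambda\|_1$ relative to one distinguished coefficient --- precisely a bound on entries of an inverse of a Vandermonde-like matrix associated to $e^{\alpha x^2}$, as the proof sketch in the excerpt promises. I would extract those entries via the explicit Lagrange interpolation formula and bound the products $\prod_{l\ne j}|x_l - x_j|^{-1}$ using the grid spacing, together with $\max_j |e^{\alpha \eta^2 j^2}| \le e^{\alpha D^2 k}$ on the interval. Choosing $\eta \asymp \sqrt k / (D k) = D^{-1} k^{-1/2} \cdot k^{1/2}$... more precisely $\eta$ so that $\eta \cdot (\text{number of grid points}) \approx D\sqrt k$ and the number of grid points is a suitable multiple of $k$, the combination of the exponential upper bound $e^{\alpha D^2 k}$ and the combinatorial Vandermonde factors yields a lower bound on $\Delta(k)$ of the form $\exp(-ck/(D^2\alpha))$ after taking logarithms. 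The role of \Cref{lem:coppersmith-rivlin} and \Cref{fact:chebyshev} is, as in the proof of \Cref{thm:k-wise-tail-lb}, to pass from the bound on the discrete grid to the bound on the whole interval $[-D\sqrt k, D\sqrt k]$, since strong control is first obtained at integer (grid) points and then extrapolated.

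The main obstacle I expect is the quantitative bookkeeping in the Vandermonde inverse step: one must choose the grid spacing $\eta$ and the number of sample points so that simultaneously (i) the number of points exceeds $k$ so the kernel vector exists, (ii) the grid stays inside $[-D\sqrt k, D\sqrt k]$, (iii) the $L_1$ norm of the kernel/interpolation coefficients, which grows like a product of reciprocal gaps and can be super-exponential in $k$ if the grid is too fine, is kept to $\exp(O(k))$, and (iv) the factor $e^{\alpha D^2 k}$ coming from $\max |e^{\alpha x^2}|$ on the interval is absorbed. Balancing (iii) against (iv) is what pins down the exponent $ck/(D^2\alpha)$; getting the $D^2\alpha$ dependence in the denominator right (rather than, say, $D\alpha$ or $D^2\alpha^2$) requires care. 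A secondary subtlety is the genericity/perturbation argument needed to guarantee the Vandermonde matrix has full rank $k$ --- if some $b_i$ coincide the exponential-polynomial picks up polynomial-in-$j$ factors (confluent Vandermonde), which one handles by a limiting argument or by noting the infimum over $g$ is unchanged under small perturbations of the $b_i$, so we may assume the $b_i$ are distinct and the $e^{b_i\eta}$ are distinct nonzero reals. Once \Cref{lem:GapMiddle} is in hand, \Cref{thm:mixture_vs_normal} follows by the factoring described in the excerpt --- writing the mixture's PDF $g$ divided by $\phi$ (or rather $f - g$) after completing the square reduces the sup-norm of $f-g$ over the relevant $\sqrt k$-scale window to $\Delta(k)$ with $\alpha = \rho^2/(2(1-\rho^2)) \asymp \rho^2$ and $D$ an absolute constant, giving the claimed $e^{-ck/\rho}$ (note $1/(D^2\alpha) \asymp 1/\rho^2$, and a more careful tracking of where the window must sit, or a slightly different completion of the square, upgrades $1/\rho^2$ to $1/\rho$ in the final bound --- the paper's statement).
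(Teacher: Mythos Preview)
Your skeleton --- sample on an arithmetic grid, exploit that $g$ restricted to the grid lies in a $k$-dimensional Vandermonde-type space, and reduce to a bound on an inverse matrix attached to $e^{\alpha x^2}$ --- is the paper's approach in spirit. But two of your load-bearing steps are off, and the actual technical content sits elsewhere.

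First, \Cref{lem:erdelyi}, \Cref{lem:coppersmith-rivlin}, and \Cref{fact:chebyshev} are inequalities for \emph{polynomials}. The error $j\mapsto e^{\alpha\eta^2 j^2}-g(j\eta)$ is not a polynomial in $j$ (it is $q^{j^2}$ minus a sum of geometric sequences $w_i^j$), so none of them applies. Those three tools are used in the paper only for \Cref{thm:k-wise-tail-lb}, a different argument in a different section. Relatedly, your ``pass from grid to continuous interval'' step is unnecessary here and points the wrong direction: we want a \emph{lower} bound on $\|e^{\alpha x^2}-g\|_\infty$ over $[-D\sqrt{k},D\sqrt{k}]$, and any single grid point where the error is large already gives that. (In \Cref{thm:k-wise-tail-lb} one needed an \emph{upper} bound on a polynomial off the grid, which is why those tools enter there.)

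Second, the paper does not track a kernel vector $\lambda$ that depends on the unknown $b_i$'s. Instead, after rescaling $x\mapsto Dx/\sqrt{k}$ so the grid is $\{-k,\dots,k\}$ and $\widetilde f(x)=q^{x^2}$ with $q=e^{D^2\alpha/k}$, it forms the $(k{+}1)\times(k{+}1)$ Hankel matrix $M_k(h)_{i,j}=h(i+j-k-2)$. For \emph{every} sum of $k$ exponentials $\widetilde g$, the matrix $M_k(\widetilde g)$ is singular (order-$k$ linear recurrence), so $I-A\,M_k(\widetilde f-\widetilde g)$ is singular, where $A=M_k(\widetilde f)^{-1}$. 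This forces $\|M_k(\widetilde f-\widetilde g)\|_\sigma\ge 1/\|A\|_\sigma$, hence $\|f-g\|_\infty\gtrsim 1/\operatorname{trace}(A)$ since $A$ is positive definite. All the work is thus in bounding $\operatorname{trace}(A)$, and that is \Cref{lem:VandermondeBound}: after relating $M_k(q^{x^2})$ to $\operatorname{Vand}(1,q^2,\dots,q^{2k})$ by row/column scaling, each entry $(-1)^{i+j}A_{i,j}\prod_{b=1}^k(1-q^{-2b})$ is shown to be a sum of exactly $\binom{k}{i-1}\binom{k}{j-1}$ powers of $q$, all of which are \emph{non-positive} (via a limiting argument as $q\to\infty$). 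This gives $|A_{i,j}|\le\binom{k}{i-1}\binom{k}{j-1}/\prod_{b}(1-q^{-2b})$, whence $\operatorname{trace}(A)\le 4^k/\prod_b(1-q^{-2b})$; the product is controlled by \Cref{lem:series}, and since $1-q^{-2}\asymp D^2\alpha/k$ one obtains precisely $\exp(ck/(D^2\alpha))$. Your proposed route --- estimate $\prod_{l\ne j}|x_l-x_j|^{-1}$ directly via Lagrange --- faces geometrically spaced nodes $x_j=q^{2j}$, and without the $q$-binomial combinatorics a crude bound tends to blow up exactly in the regime $q\to 1^+$ (i.e., $D^2\alpha\ll k$) where the estimate must be tight.
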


\begin{proof}[Proof of \Cref{thm:mixture_vs_normal} assuming \Cref{lem:GapMiddle}]
Without loss of generality we may assume that $f(x)$ is the PDF of the standard normal distribution with mean 0 and variance $1$. Define $\overline{f}(x)=e^{x^2/2\sigma^2}f(x)=e^{\alpha x^2}$ with $\alpha=\frac{1}{2\sigma^2}-\frac{1}{2}$, 
and $\overline{g}(x)=e^{x^2/2\sigma^2}g(x)$. Now $g(x)$ is a linear combination of $k$ exponential functions. If we choose some $D>0$ then \Cref{lem:GapMiddle}
gives us
\[
\Delta(k)\geq \exp\Big(\frac{-ck}{D^2 \alpha}\Big),
\]
where $\Delta(k)$ is the supremum of $|\overline{f}-\overline{g}|$
over the interval $[-D\sqrt{k},D\sqrt{k}]$.
It follows that
\[
\|f-g\|_\infty=\|e^{-x^2/2\sigma^2}(\overline{f}-\overline{g})\|_\infty\geq
\exp\Big(\frac{-D^2k}{2\sigma^2}\Big)\Delta(k)=\exp\Big(-k\Big(\frac{D^2}{2\sigma^2}+\frac{c}{D^2\alpha}\Big)\Big),
\]
Then if we set $D^2 = c\sqrt{\sigma^2/\alpha}$ we have
\[
\|f-g\|_\infty\geq \exp\Big(\frac{-ck}{\sqrt{\sigma^2\alpha}}\Big)=
\exp\Big(\frac{-ck}{\sqrt{1-\sigma^2}}\Big). \qedhere
\]
\end{proof}

\subsection{Proof of \Cref{lem:GapMiddle}}
 
\begin{definition}
For a function $f:\Z\to \R$ define the $(k+1)\times (k+1)$ matrix $M_k(f)$ by $M_k(f)_{i,j}=f(i+j-k-2)$.
For example, 
$$M_3(f)=\begin{pmatrix}
f(-3) & f(-2) & f(-1) & f(0)\\
f(-2) & f(-1) & f(0) & f(1)\\
f(-1) & f(0) & f(1) & f(2)\\
f(0) & f(1) & f(2) & f(3)
\end{pmatrix}
$$
\end{definition}

\begin{fact}\label{lem:MkSingular}
If $f(x)=\sum_{i=1}^k a_i e^{b_i x}$, then $\det M_k(f)=0$.
\end{fact}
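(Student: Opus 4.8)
The plan is to prove that the Hankel-type matrix $M_k(f)$ is singular whenever $f$ is a linear combination of $k$ exponentials $e^{b_i x}$, by exhibiting it as a product of two matrices of rank at most $k$. Concretely, I would write $f(x) = \sum_{i=1}^k a_i e^{b_i x}$ and compute the entry
\[
  M_k(f)_{i,j} = f(i+j-k-2) = \sum_{\ell=1}^k a_\ell e^{b_\ell(i+j-k-2)} = \sum_{\ell=1}^k \bigl(a_\ell e^{b_\ell(i-(k+2)/2)}\bigr)\bigl(e^{b_\ell(j-(k+2)/2)}\bigr).
\]
This displays $M_k(f) = A B^\top$ where $A$ is the $(k+1)\times k$ matrix with entries $A_{i,\ell} = a_\ell e^{b_\ell(i-(k+2)/2)}$ and $B$ is the $(k+1)\times k$ matrix with entries $B_{j,\ell} = e^{b_\ell(j-(k+2)/2)}$. (One can also split the exponent more crudely as $e^{b_\ell(i-k-2)} \cdot e^{b_\ell j}$; any such factorization works, the symmetric one is just cosmetic.)

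The key point is then purely linear-algebraic: $A$ and $B$ each have rank at most $k$, being $(k+1)\times k$ matrices, so $\operatorname{rank} M_k(f) = \operatorname{rank}(AB^\top) \le \min(\operatorname{rank} A, \operatorname{rank} B) \le k < k+1$. Hence the $(k+1)\times(k+1)$ matrix $M_k(f)$ is not full rank, and therefore $\det M_k(f) = 0$. I would carry out the steps in exactly that order: first fix the representation of $f$, then write down the factorization $M_k(f) = AB^\top$ by direct substitution into the definition $M_k(f)_{i,j} = f(i+j-k-2)$, then invoke the rank bound for a product.

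I do not expect any real obstacle here — this is a one-line observation once the Hankel structure is unwound, and the only thing to be careful about is bookkeeping the index shift $i+j-k-2$ so that the separation of the exponent into an $i$-part and a $j$-part is done correctly (in particular that the cross term $e^{b_\ell \cdot (\text{constant})}$ is absorbed into one of the two factors, say into $a_\ell$). If one prefers to avoid matrix factorization language, an equivalent route is to note that the columns of $M_k(f)$ are $k+1$ vectors each lying in the $k$-dimensional space spanned by the vectors $v_\ell := (e^{b_\ell(i - k - 2)})_{i=1}^{k+1}$ for $\ell \in [k]$ — indeed column $j$ equals $\sum_\ell a_\ell e^{b_\ell j} v_\ell$ — so the columns are linearly dependent and the determinant vanishes. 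Either phrasing gives the claim immediately.
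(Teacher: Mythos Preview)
Your proof is correct but takes a different route from the paper. The paper argues that the sequence $\dots, f(-1), f(0), f(1), \dots$ satisfies a linear recurrence of order $k$: it finds constants $c_1,\dots,c_k$ with $f(x)=c_1 f(x-1)+\cdots+c_k f(x-k)$ by solving the $k\times k$ system $1 = c_1 e^{-b_i} + \cdots + c_k e^{-kb_i}$ for $i=1,\dots,k$, and then observes that such a recurrence forces the columns of $M_k(f)$ to be linearly dependent. Your approach instead factors $M_k(f)=AB^\top$ with $A,B$ of shape $(k+1)\times k$, giving the rank bound directly. Your argument is a bit more self-contained (no need to assert solvability of a linear system, and no implicit assumption that the $b_i$ are distinct), while the paper's phrasing highlights the classical Hankel-matrix/linear-recurrence connection. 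Both are essentially the same observation that a sum of $k$ geometric sequences has Hankel rank at most $k$; you have just unwound it more explicitly.
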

\begin{proof}
We claim the sequence $\dots,f(-1),f(0),f(1),f(2),\dots$ satisfies a linear recurrence of order $k$, which implies the columns of $M_k(f)$ are linearly dependent. 

To prove the claim, we show the existence of $c_1, \dots, c_k$ such that
\[
f(x) = c_1 f(x-1)+  \dots + c_k f(x-k).
\]
Solving for $c_1,\dots, c_k$, we obtain $k$ linear constraints
\[
1 = \frac{c_1}{e^{b_1}} + \dots + \frac{c_k}{e^{k b_1}}, \ldots, 1 =  \frac{c_1}{e^{b_k}} + \dots + \frac{c_k}{e^{k b_k}}.
\]
There exists a solution to this system.
\end{proof}

\begin{fact}\label{lem:series}
If $|x|<1$ then 
$$
\prod_{i=1}^\infty(1-x^i)\geq \exp\Big(\frac{-c}{1-x}\Big).
$$
\end{fact}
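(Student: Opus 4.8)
The plan is to pass to logarithms and bound the resulting series. Writing $P(x):=\prod_{i=1}^{\infty}(1-x^{i})$, it suffices to show $\log P(x)\ge -c/(1-x)$, and I would restrict attention to $x\in[0,1)$, which is the only regime used (for $x=0$ the left-hand side is $1$ and there is nothing to prove, so assume $x\in(0,1)$; then every factor $1-x^{i}$ lies in $(0,1)$ and $\sum_{i\ge1}\log(1-x^{i})$ converges to $\log P(x)$).

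First I would expand $\log(1-x^{i})=-\sum_{j\ge1}x^{ij}/j$, which is legitimate since $x^{i}\in(0,1)$. All terms of the double series $\sum_{i,j}x^{ij}/j$ are nonnegative, so the order of summation can be swapped freely (Tonelli), giving
\[
  \log P(x)=-\sum_{i=1}^{\infty}\sum_{j=1}^{\infty}\frac{x^{ij}}{j}=-\sum_{j=1}^{\infty}\frac{1}{j}\sum_{i=1}^{\infty}x^{ij}=-\sum_{j=1}^{\infty}\frac{1}{j}\cdot\frac{x^{j}}{1-x^{j}}.
\]
The key step is the elementary per-term estimate $\frac{x^{j}}{1-x^{j}}\le\frac{1}{j(1-x)}$: factor $1-x^{j}=(1-x)(1+x+\cdots+x^{j-1})$ and note that, since $0\le x<1$, each of the $j$ summands $x^{0},\dots,x^{j-1}$ is at least $x^{j-1}$, so $1+x+\cdots+x^{j-1}\ge j\,x^{j-1}$ and hence $\frac{x^{j}}{1-x^{j}}\le\frac{x^{j}}{(1-x)\,j\,x^{j-1}}=\frac{x}{(1-x)j}\le\frac{1}{(1-x)j}$.

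Substituting this bound termwise and summing yields $\log P(x)\ge -\frac{1}{1-x}\sum_{j\ge1}\frac{1}{j^{2}}=-\frac{\pi^{2}/6}{1-x}$, and exponentiating gives the claim with $c=\pi^{2}/6$ (any $c\ge 2$ works). I do not expect a genuine obstacle here: the only points needing care are justifying the rearrangement (immediate from nonnegativity of all terms) and the geometric-series inequality $1+x+\cdots+x^{j-1}\ge j\,x^{j-1}$, which is precisely what supplies the extra factor $1/j$ that makes the $j$-sum converge — a cruder bound such as $\log(1-x^{i})\ge -x^{i}/(1-x^{i})$ only gives $\log P(x)\ge -x/(1-x)^{2}$, which is too weak as $x\to1$.
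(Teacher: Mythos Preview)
Your proof is correct and matches the paper's approach essentially line for line: take logarithms, expand $\log(1-x^{i})$ as a power series, swap the order of summation, and bound $x^{j}/(1-x^{j})\le 1/(j(1-x))$ via $1-x^{j}\ge j\,x^{j-1}(1-x)$ (the paper states the marginally weaker $1-x^{j}\ge j\,x^{j}(1-x)$, which yields the same conclusion) to arrive at $\log P(x)\ge -\pi^{2}/(6(1-x))$. Your restriction to $x\in[0,1)$ is appropriate, as this is the only regime in which the lemma is invoked.
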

\begin{proof}
Using absolute convergence, and the inequality $(1-x^j)\geq jx^j(1-x)$, we get
$$
\sum_{i=1}^\infty\log(1-x^i)=-\sum_{i=1}^\infty\sum_{j=1}^\infty \frac{x^{ij}}{j}=
-\sum_{j=1}^\infty\sum_{i=1}^\infty  \frac{x^{ij}}{j}=-\sum_{j=1}^\infty\frac{x^j}{j(1-x^j)}\geq- \sum_{j=1}^\infty\frac{1}{j^2(1-x)}=-\frac{\pi^2}{6(1-x)}.
$$
Then apply the exponential function to both sides.
\end{proof}

\begin{lemma}\label{lem:VandermondeBound}
Suppose that $q>1$ and $k$ is a positive integer. Let $A:=M_k(q^{x^2})^{-1}$. Then
$$
|A_{i,j}|\leq \frac{\displaystyle{k\choose i-1}{k\choose j-1}}{\prod_{i=1}^k(1-q^{-2i})}.
$$
\end{lemma}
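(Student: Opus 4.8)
The plan is to write the inverse matrix $A = M_k(q^{x^2})^{-1}$ explicitly by exploiting the fact that $q^{x^2} = q^{(i+j-k-2)^2}$ factors multiplicatively: for each entry we have $q^{(i+j-k-2)^2} = q^{(i-1)^2} \cdot q^{(j-1)^2} \cdot (q^{-2})^{(i-1)(j-1)} \cdot (\text{row/column-independent fudge})$ after recentering the index. More precisely, setting $u = i-1$, $v = j-1$ (so $u,v \in \{0,\dots,k\}$) and writing $r = q^{-2} \in (0,1)$, one checks $q^{(u+v-k)^2}$ equals a diagonal factor depending only on $u$ times a diagonal factor depending only on $v$ times $r^{-uv}$ (up to a global constant). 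Hence $M_k(q^{x^2}) = \Lambda_1 \cdot V \cdot \Lambda_2$ where $\Lambda_1,\Lambda_2$ are diagonal and $V$ is the Vandermonde-type matrix $V_{u,v} = s^{uv}$ for an appropriate base $s$ (here $s = r^{-1} = q^2 > 1$, or $s = r$; I'd sort out the sign/direction during the writeup). So $A = \Lambda_2^{-1} V^{-1} \Lambda_1^{-1}$, and it suffices to (i) compute $V^{-1}$ and (ii) control the diagonal factors.

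For step (i), the matrix $V_{u,v} = s^{uv} = (s^u)^v$ is a genuine Vandermonde matrix in the nodes $x_u := s^u$, $u = 0,\dots,k$. The inverse of a Vandermonde matrix has a classical closed form in terms of elementary symmetric polynomials of the nodes divided by the product of node differences: $(V^{-1})_{v,u} = \pm e_{k-v}(\{x_0,\dots,x_k\}\setminus\{x_u\}) / \prod_{w \ne u}(x_u - x_w)$. I would plug in $x_u = s^u$ and estimate. The denominator $\prod_{w\ne u}(s^u - s^w)$ factors as $\prod_{w<u}(s^u-s^w)\prod_{w>u}(s^u-s^w) = \pm s^{(\text{something})}\prod(1 - s^{\pm(\dots)})$, and the products $\prod_i (1-s^{-i})$ over $i$ up to $k$ are bounded below by $\prod_{i=1}^\infty(1-s^{-i})$, which by Fact~\ref{lem:series} is at least $\exp(-c/(1-s^{-1}))$. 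The numerator $e_{k-v}$ of powers of $s$ is dominated by its largest monomial, giving an upper bound that is a power of $s$. After combining the powers of $s$ from numerator and denominator of $V^{-1}$ with the diagonal factors $\Lambda_1^{-1},\Lambda_2^{-1}$, I expect the powers of $s$ (equivalently of $q$) to cancel exactly against the binomial-coefficient-free parts, leaving precisely $\binom{k}{i-1}\binom{k}{j-1}$ in the numerator (these binomials arise from counting monomials in $e_{k-v}$, i.e.\ from $\binom{k}{k-v} = \binom{k}{v}$) and $\prod_{i=1}^k(1-q^{-2i})$ in the denominator.

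The main obstacle I anticipate is the bookkeeping in step (ii): carefully tracking every power of $q$ through the factorization $M_k(q^{x^2}) = \Lambda_1 V \Lambda_2$, through the Vandermonde inversion formula, and through the re-indexing $i \leftrightarrow v$, $j \leftrightarrow u$, and verifying they all cancel so that the bound has \emph{no} residual exponential-in-$k$ factor beyond the stated $\prod(1-q^{-2i})^{-1}$. It is essential that this cancellation is exact — a spurious $q^{ck}$ would be fatal later — and the cleanest way to see it is probably to first record the exact identity $A_{i,j} = \frac{(\text{sign})\, e_{k-v}(\dots)\, q^{(\dots)}}{\prod_{w\ne u}(s^u - s^w)\, q^{(\dots)}}$ and then observe the $q$-powers match before estimating. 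I would also double-check the edge behavior at $u = 0$ and $u = k$ (where one of the two sub-products over $w<u$ or $w>u$ is empty) and confirm the lower bound $\prod_{i=1}^k(1-q^{-2i}) > 0$ uses only $q>1$. The inequality $(1-x^j) \ge j x^j(1-x)$ from Fact~\ref{lem:series}, together with $\binom{k}{v} \le 2^k$ if needed, should absorb any remaining slack, but I expect the stated bound to come out on the nose.
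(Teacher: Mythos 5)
Your starting point matches the paper's: writing $q^{(u+v-k)^2}$ as a product of a row factor, a column factor, and a cross term, so that $M_k(q^{x^2}) = \Lambda_1 V \Lambda_2$ with $\Lambda_1,\Lambda_2$ diagonal and $V = \Vand(1,q^2,q^4,\dots,q^{2k})$, then $A = \Lambda_2^{-1}V^{-1}\Lambda_1^{-1}$. The paper also reaches for the Vandermonde cofactor formula and elementary symmetric polynomials (this is \Cref{prop:vandermonde_powers} and \Cref{claim:vandermonde-quotient}). So the two proofs overlap substantially in their opening moves.

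Where they diverge is on the point you yourself flag as the main obstacle: verifying that after combining the diagonal factors, the cofactor numerator, and the $\prod_{w\ne u}(x_u-x_w)$ denominator, \emph{no net positive power of $q$ survives}. Your plan is to carry this out by direct bookkeeping (``record the exact identity and observe the $q$-powers match''). The paper avoids the bookkeeping entirely with a structural argument in two steps. First, via Cramer's rule and the Gaussian $q$-binomial identity
\[
\frac{\prod_{j=1}^k(q^j-1)}{\prod_{j=1}^{k-i+1}(q^j-1)\,\prod_{j=1}^{i-1}(q^j-1)} = \binom{k}{i-1}_q,
\]
it shows that $(-1)^{i+j}A_{i,j}\prod_{b=1}^k(1-q^{-2b})$ is literally a sum of $\binom{k}{i-1}\binom{k}{j-1}$ monomials $q^{m}$. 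Second — and this is the idea your plan is missing — it writes $M_k(q^{x^2})=B_k(q)\,C_k(q)\,B_k(q)$ where $B_k(q)$ is diagonal and $C_k(q)\to I$ as $q\to\infty$, so $A(q)=B_k(q)^{-1}C_k(q)^{-1}B_k(q)^{-1}$ converges as $q\to\infty$; since $\prod(1-q^{-2b})\to 1$, the Laurent polynomial $A_{i,j}\prod(1-q^{-2b})$ stays bounded and hence has only non-positive powers of $q$. Each monomial is then $\le 1$ and the sum is $\le\binom{k}{i-1}\binom{k}{j-1}$, with no estimation slack at all. Without this limiting argument (or an explicit, correct accounting of all the $q$-exponents — tedious but doable), your plan has a genuine hole: ``I expect the powers to cancel'' is exactly the thing that needs to be proved.

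Two smaller notes. The two binomials do not both come from counting monomials of $e_{k-v}$: one comes from $e_{k+1-j}$, the other from the $q$-binomial coefficient $\binom{k}{i-1}_q$ in the quotient of determinants. And you should not invoke \Cref{lem:series} inside this lemma's proof: replacing $\prod_{i=1}^k(1-q^{-2i})$ by the weaker $\exp(-c/(1-q^{-2}))$ would prove a statement strictly weaker than \Cref{lem:VandermondeBound} as written; \Cref{lem:series} is applied only later, inside the proof of \Cref{lem:GapMiddle}.
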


\begin{proof}[Proof of \Cref{lem:GapMiddle}]
Let $f(x)=e^{\alpha x^2}$
and $g(x)=\sum_{i=1}a_ie^{b_ix}$. Define $\widetilde{f}(x)=f(Dx/\sqrt{k})=q^{x^2}$,
where $q=e^{D^2\alpha/k}$
and $\widetilde{g}(x)=g(Dx/\sqrt{k})$. By \Cref{lem:MkSingular}, $M_k(\widetilde{g})$ is singular.
Let $A$ be the inverse of $M_k(\widetilde{f})$.
The matrix
$$
AM_k(\widetilde{g})=AM_{k}(\widetilde{f})-AM_k(\widetilde{f}-\widetilde{g})=
I-A M_k(\widetilde{f}-\widetilde{g})
$$
is singular. It follows that
$$
\|A\|_\sigma\|M_k(\widetilde{f}-\widetilde{g})\|_\sigma \geq \|A M_k(\widetilde{f}-\widetilde{g})\|_\sigma\geq 1
$$
where $\|A\|_\sigma$ is the spectral norm of $A$.
The matrix $A$ is positive definite symmetric and the sum of the singular values is the sum of the eigenvalues which is equal to the trace of $A$. By \Cref{lem:VandermondeBound} we get
\[
\|A\|_\sigma\leq \operatorname{trace}(A)\leq \frac{\displaystyle\sum_{i=0}^k {k\choose i}^2}{\prod_{i=1}^k(1-q^{-2i})} = \frac{4^k}{\prod_{i=1}^k(1-q^{-2i})}.
\]
On the other hand,
\[
\|M_k(\widetilde{f}-\widetilde{g})\|_\infty
\leq (k+1)\|f-g\|_\infty.
\]
Combining everything and using \Cref{lem:series}, we get
\[
  \|f-g\|_{\infty}\geq \frac{\prod_{i=1}^k(1-q^{-2i})}{4^k(k+1)}\geq
  \exp\Big(\frac{-c}{1-q^{-2}}-\log(4)k-\log(k+1)\Big)=
  \exp\Big(\frac{-ck}{D^2\alpha}\Big). \qedhere
\]
\end{proof}

\subsection{Proof of \Cref{lem:VandermondeBound}}
First we define Vandermonde matrices.
\begin{definition}
$$
    \operatorname{Vand}(x_0,x_1,x_2, \dots,x_k)=\begin{pmatrix}
    1 & 1 & 1& \cdots & 1\\
    x_0 & x_1 & x_2 &\cdots & x_k\\
    x_0^2 & x_1^2 & x_2^2 &\cdots & x_k^2\\
    \vdots & \vdots & \vdots & & \vdots\\
    x_0^k & x_1^{k} & x_2^{k} & \cdots & x_k^k
\end{pmatrix}.$$
\end{definition}

\begin{proof}
We can transform $M_k(q^{x^2})$  into $\Vand(1,q^{2},q^{4},\dots,q^{2k})$ by multiplying the rows and columns of $M_k(q^{x^2})$ with powers of $q$. Thus by \Cref{prop:vandermonde_powers}, stated at the end,
$(-1)^{i+j}A_{i,j}\prod_{b=1}^k (q^{2b}-1)$  is a sum of ${k\choose i-1}{k\choose j-1}$ powers of $q$. This implies
$(-1)^{i+j}A_{i,j}\prod_{b=1}^k (1-q^{-2b})$ 
is also a sum of ${k\choose i-1}{k\choose j-1}$ powers of $q$. 

Next we claim no positive powers of $q$ appear in the aforementioned sum.
We define
$$
B_k(q):=\begin{pmatrix}
q^{(-k)^2/2} & & &\\
& q^{(2-k)^2/2} & & \\
 & & \ddots & \\
 & & & q^{k^2/2}
 \end{pmatrix}
$$
so that we can write
$$
M_k(q^{x^2})=B_k(q)C_k(q)B_k(q)
$$
where $C_k(q)$ is a matrix with $1$ on the diagonal and negative powers of $q$ outside of the diagonal. In particular, $C_k(q)$ converges to the identity matrix as $q\to \infty$. So 
$$A=A(q)=M_{k}(q^{x^2})^{-1}=B_k(q)^{-1}C_k(q)^{-1}B_k(q)^{-1}
$$
converges as $q\to \infty$ because both $B_k(q)^{-1}$ and $C_k(q)^{-1}$
converge. This shows that $A_{i,j} \prod_{b=1}^k (1-q^{-2b})$
cannot have positive powers of $q$ in its corresponding sum. 
Since
$$
|A_{i,j}| \prod_{b=1}^k (1-q^{-2b})
$$
is a sum of ${k\choose i-1}{k\choose j-1}$ non-positive powers of $q$ and $q>1$
we get
\[
|A_{i,j}|\prod_{b=1}^k (1-q^{-2b})\leq {k\choose i-1}{k\choose j-1}. \qedhere
\]
\end{proof}

\begin{proposition}\label{prop:vandermonde_powers}
Let $V=\operatorname{Vand}(1,q,q^2,\dots,q^k)$. Then 
\[
(-1)^{i+j} (V^{-1})_{i,j} \prod_{b=1}^k(q^b-1)      
\]
is a sum of ${k\choose i-1}{k\choose j-1}$ powers of $q$.
\end{proposition}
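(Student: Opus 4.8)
The plan is to use the classical Lagrange-interpolation formula for the inverse of a Vandermonde matrix, and then recognize the two combinatorial objects that pop out — an elementary symmetric polynomial in the nodes, and a Gaussian ($q$-)binomial coefficient — each of which is manifestly a sum of the right number of powers of $q$.

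First I would recall the standard inversion formula. Writing $V=\operatorname{Vand}(1,q,\dots,q^k)$, the transpose $V^\top$ sends a coefficient vector $(b_1,\dots,b_{k+1})$ (with $b_r$ the coefficient of $x^{r-1}$) to the vector of evaluations $\bigl(p(q^{c-1})\bigr)_c$, so $(V^{-1})^\top$ is the interpolation map. Comparing with Lagrange interpolation $p(x)=\sum_c p(q^{c-1})\,\ell_{c-1}(x)$, where $\ell_{i-1}(x)=\prod_{m\neq i-1}\frac{x-q^m}{q^{i-1}-q^m}$, one gets that $(V^{-1})_{i,j}$ is exactly the coefficient of $x^{j-1}$ in $\ell_{i-1}(x)$. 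Expanding the numerator, $\prod_{m\neq i-1}(x-q^m)=\sum_{r=0}^k(-1)^{k-r}e_{k-r}\bigl(\{q^m\}_{m\neq i-1}\bigr)x^r$, so the numerator contributes the factor $(-1)^{k-j+1}e_{k-j+1}\bigl(\{q^m\}_{m\neq i-1}\bigr)$; since $e_d$ of a $k$-element set of monomials is a sum of $\binom{k}{d}$ monomials each with coefficient $1$, and $\binom{k}{k-j+1}=\binom{k}{j-1}$, this is a sign times a sum of $\binom{k}{j-1}$ powers of $q$.

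Next I would evaluate the denominator $\prod_{m\neq i-1}(q^{i-1}-q^m)$ by splitting the product over $m<i-1$ and $m>i-1$ and pulling out the appropriate powers of $q$ from $q^{i-1}-q^m=q^m(q^{i-1-m}-1)$ and $q^{i-1}-q^m=-q^{i-1}(q^{m-i+1}-1)$. This rewrites the denominator as $\pm\,q^{(\text{integer})}\prod_{s=1}^{i-1}(q^s-1)\prod_{s=1}^{k-i+1}(q^s-1)$, and therefore $\prod_{b=1}^k(q^b-1)$ divided by this denominator equals $\pm\,q^{(\text{integer})}$ times $\binom{k}{i-1}_q:=\frac{\prod_{b=1}^k(q^b-1)}{\prod_{b=1}^{i-1}(q^b-1)\prod_{b=1}^{k-i+1}(q^b-1)}$, the Gaussian binomial coefficient (here $1\le i\le k+1$ keeps everything well-defined). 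The one classical input I invoke is that $\binom{k}{i-1}_q$ is a polynomial in $q$ with non-negative integer coefficients whose sum (its value at $q=1$) is the ordinary $\binom{k}{i-1}$; hence $\binom{k}{i-1}_q$ is a sum of $\binom{k}{i-1}$ powers of $q$.

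Finally I would assemble: multiplying $(V^{-1})_{i,j}$ by $\prod_{b=1}^k(q^b-1)$ produces the product of $(-1)^{k-j+1}e_{k-j+1}(\{q^m\}_{m\neq i-1})$ with $\pm\,q^{(\text{integer})}\binom{k}{i-1}_q$; the two signs multiply to $(-1)^{k-j+1}(-1)^{k-i+1}=(-1)^{i+j}$, so $(-1)^{i+j}(V^{-1})_{i,j}\prod_{b=1}^k(q^b-1)$ equals a single power of $q$ times $e_{k-j+1}(\{q^m\}_{m\neq i-1})$ times $\binom{k}{i-1}_q$ — i.e., the product of a sum of $\binom{k}{j-1}$ powers of $q$ with a sum of $\binom{k}{i-1}$ powers of $q$, which upon expanding (with repetitions allowed) is a sum of $\binom{k}{i-1}\binom{k}{j-1}$ powers of $q$. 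There is no analytic content; the one thing to be careful about is the sign-and-exponent bookkeeping in the denominator split, and no cancellation needs to be ruled out since the statement only asserts the existence of such a representation.
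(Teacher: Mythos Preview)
Your proof is correct and follows essentially the same approach as the paper: both factor $(-1)^{i+j}(V^{-1})_{i,j}\prod_{b=1}^k(q^b-1)$ as (a power of $q$ times) the product of the elementary symmetric polynomial $e_{k+1-j}(\{q^m\}_{m\neq i-1})$ with the Gaussian binomial $\binom{k}{i-1}_q$, and then count the monomials in each factor. The only cosmetic difference is packaging — you extract the elementary symmetric factor directly from the Lagrange interpolation formula, whereas the paper goes through Cramer's rule and proves a separate claim that the relevant minor ratio equals $e_{k+1-j}$; your route is slightly more streamlined but the mathematics is identical.
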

\begin{proof}
Note that $\det(V)=\prod_{0\leq a<b\leq k}(q^b-q^a)$.
Let $\widetilde{V}_i$ be the matrix $V$ with the $i$-th column removed,
and $\widetilde{V}_{j,i}$ be the matrix $V$ with the $j$-th row and $i$-th column removed. By the formula of $V^{-1}$ from Cramer's rule we get
$$
(V^{-1})_{i,j}=\frac{(-1)^{i+j}\det(\widetilde{V}_{j,i})}{\det(V)}.
$$
Note that $\widetilde{V}_{k+1,i}=\operatorname{Vand}(1,q,\cdots,q^{i-2},q^i,\dots,q^{k+1})$,
so 
$$\det(\widetilde{V}_{k+1,i})=\prod_{\scriptstyle 0\leq a<b\leq k\atop \scriptstyle a,b\neq i-1}(q^b-q^a).
$$
So we have
$$
\frac{ \prod_{b=1}^k(q^b-1)  \det(\widetilde{V}_{k+1,i})}{\det(V)}=\frac{ \prod_{b=1}^k(q^b-1)  }{\prod_{j=i}^{k} (q^j-q^{i-1}) \prod_{j=0}^{i-2}(q^{i-1}-q^j)}
$$
which is up to a power of $q$ factor equal to 
$$
\frac{ \prod_{j=1}^k(q^j-1)  }{\prod_{j=1}^{k-i+1} (q^j-1) \prod_{j=1}^{i-1}(q^j-1)}={k\choose i-1}_q,
$$
where the right-hand side is a Gaussian $q$-binomial coefficient which is a sum of ${k\choose i-1}$ powers of $q$. To see this, consider the generating function $\prod_{j=0}^{k-1}(1+q^jt)=\sum_{j=0}^k q^{j(j-1)/2}{k\choose j}_qt^j$. This implies that ${k\choose i-1}_q$ is a sum of ${k\choose i-1}$ 
powers of $q$.

Next, by \cref{claim:vandermonde-quotient} we have that
$$
\frac{\det(\widetilde{V}_{j,i})}{\det(\widetilde{V}_{k+1,i})}=
e_{k+1-j}(1,q,\dots,q^{i-2},q^i,\dots,q^k)
$$
is a sum of ${k\choose j-1}$ powers of $q$. We conclude that
\[
(-1)^{i+j} (V^{-1})_{i,j} \prod_{b=1}^k(q^b-1) =  \frac{\det(\widetilde{V}_{j,i})}{\det(\widetilde{V}_{k+1,i})}\cdot \frac{\prod_{b=1}^k(q^b-1)\det(\widetilde{V}_{k+1,i})}{\det(V)}.
\]
is a sum of ${k\choose i-1}{k\choose j-1}$ powers of $q$.
\end{proof}

\begin{claim} \label{claim:vandermonde-quotient}
Let $X = \operatorname{Vand}(x_0, \dots, x_k)$. Then 
\[
\frac{\det(\widetilde{X}_{j,i})}{\det(\widetilde{X}_{k+1,i})}=
e_{k+1-j}(x_0, \dots, x_{i-2}, x_{i}, \dots, x_k).
\]
\end{claim}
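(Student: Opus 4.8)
\medskip

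The plan is to derive the claim from one clean computation: a single $(k+1)$-variable Vandermonde determinant, expanded in two different ways. Fix the column index $i$ and write $y_1,\dots,y_k$ for the list $x_0,\dots,x_{i-2},x_i,\dots,x_k$ of variables that survive deleting column $i$ from $X$. With this notation, $\widetilde{X}_{r,i}$ is exactly the $k\times k$ matrix whose $c$-th column is the vector of powers of $y_c$ and whose rows range over the exponent set $\{0,1,\dots,k\}\setminus\{r-1\}$ (in increasing order); in particular $\widetilde{X}_{k+1,i}=\operatorname{Vand}(y_1,\dots,y_k)$, whose determinant is $\prod_{1\le a<b\le k}(y_b-y_a)$.

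First I would introduce the auxiliary polynomial $P(t):=\det\operatorname{Vand}(y_1,\dots,y_k,t)$ in a formal variable $t$, i.e.\ the determinant of the $(k+1)\times(k+1)$ matrix with columns $(1,y_c,\dots,y_c^{\,k})^{\top}$ for $c=1,\dots,k$ and last column $(1,t,\dots,t^{\,k})^{\top}$, and evaluate $P(t)$ in two ways. On one hand, the Vandermonde product formula gives $P(t)=\prod_{1\le a<b\le k}(y_b-y_a)\cdot\prod_{c=1}^{k}(t-y_c)$. On the other hand, Laplace expansion along the last column gives $P(t)=\sum_{r=1}^{k+1}(-1)^{r+k+1}\,t^{\,r-1}\det\widetilde{X}_{r,i}$, because deleting row $r$ and the last column of this matrix leaves precisely the matrix $\widetilde{X}_{r,i}$ described above.

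Then I would expand $\prod_{c=1}^{k}(t-y_c)=\sum_{\ell=0}^{k}(-1)^{\ell}e_{\ell}(y_1,\dots,y_k)\,t^{\,k-\ell}$ and compare the coefficient of $t^{\,r-1}$ (so $\ell=k+1-r$) in the two expressions for $P(t)$. Since $(-1)^{r+k+1}=(-1)^{k+1-r}$, the signs cancel and one obtains, for every $r\in\{1,\dots,k+1\}$,
\[
  \det\widetilde{X}_{r,i}=e_{k+1-r}(y_1,\dots,y_k)\cdot\prod_{1\le a<b\le k}(y_b-y_a).
\]
Specializing to $r=j$ and to $r=k+1$ (the latter recovering $e_0=1$ and the ordinary Vandermonde determinant) and dividing yields $\det\widetilde{X}_{j,i}/\det\widetilde{X}_{k+1,i}=e_{k+1-j}(y_1,\dots,y_k)=e_{k+1-j}(x_0,\dots,x_{i-2},x_i,\dots,x_k)$, as claimed.

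There is no genuine obstacle here: the only points needing care are the sign bookkeeping in the Laplace expansion and the verification that the minors produced there coincide \emph{on the nose} (same column order) with the matrices $\widetilde{X}_{r,i}$, so that no stray sign is introduced; both are routine once the surviving-row exponent set is identified as $\{0,\dots,k\}\setminus\{r-1\}$. One could alternatively invoke the bialternant formula for Schur polynomials --- the exponent set $\{0,\dots,k\}\setminus\{j-1\}$ corresponds to the partition $(1^{\,k+1-j})$ and $s_{(1^{\,k+1-j})}=e_{k+1-j}$ --- but the two-way Vandermonde expansion above keeps the argument self-contained.
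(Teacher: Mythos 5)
Your proof is correct, and it takes a genuinely different route from the paper's. The paper factors $\det(\widetilde{X}_{j,i})$ as $p_2\prod_{a<b}(x_b-x_a)$ and argues that the quotient $p_2$ is homogeneous of degree $k+1-j$, symmetric, and squarefree, forcing $p_2$ to be (a multiple of) $e_{k+1-j}$; it does not explicitly pin down the multiplicative constant, so that step is slightly incomplete as written. You instead form the auxiliary determinant $P(t)=\det\operatorname{Vand}(y_1,\dots,y_k,t)$ and equate two expansions of it: the Vandermonde product $\prod_{a<b}(y_b-y_a)\prod_c(t-y_c)$ and the Laplace expansion along the last column, whose minors are exactly the $\widetilde{X}_{r,i}$. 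Comparing the coefficient of $t^{r-1}$ gives the identity $\det\widetilde{X}_{r,i}=e_{k+1-r}(y)\prod_{a<b}(y_b-y_a)$ with the correct sign and normalization built in, from which the claim follows on dividing the $r=j$ and $r=k+1$ cases. Your argument is a single algebraic identity with no degree-counting or squarefreeness subtleties and automatically fixes the constant; the paper's argument is perhaps more flexible as a template (the ``symmetric squarefree of the right degree'' heuristic generalizes) but requires a normalization check that your version renders unnecessary. The one point you correctly flag as needing care — that the minor of $\operatorname{Vand}(y_1,\dots,y_k,t)$ obtained by deleting row $r$ and the last column coincides \emph{with the same column order} as $\widetilde{X}_{r,i}$ — does hold, since deleting column $i$ of $X$ leaves $x_0,\dots,x_{i-2},x_i,\dots,x_k$ in order, matching $y_1,\dots,y_k$.
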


\begin{proof}
Without loss of generality assume $i=1$. Let $X' := \widetilde{X}_{k+1,1}$, and note $X' = \operatorname{Vand}(x_1, \dots, x_k)$.  We first sketch out the polynomial argument which proves that $\det(X' ) = \prod_{1\leq a < b \leq k} (x_b - x_a)$ (see \cite{wiki-vandermonde}).

For $b\neq a, (x_b - x_a)$ is a factor of $\det(X')$, since if we replace $x_b$ with $x_a$ in $X'$ then the determinant becomes 0. Thus, $\det(X') = p_1 \prod_{1 \leq a < b \leq k} (x_b - x_a)$ for some polynomial $p_1$. 

Now by the Leibniz formula for the determinant, since all entries of the $j$th row have degree $j-1$,  $\det(X')$ is a homogeneous polynomial of degree $1 +  \dots + k-1 =  k(k-1)/2$.  This implies that $p_1$ is a constant. 

Finally, $p_1 = 1$ since the product of the diagonal entries of $X'$ is $x_2 x_3^2 \dots x_k^{k-1}$, which is the monomial obtained by taking the first entry of each term in $\prod_{1 \leq a < b \leq k} (x_b - x_a)$.

Next we have that $\det(\widetilde{X}_{j,i})  = p_2 \prod_{1 \leq a < b \leq k} (x_b - x_a)$ for some homogeneous polynomial $p_2$ of degree $ k - (j-1)$. This follows by repeating the start of the previous argument.

Moreover we claim that $p_2$ is symmetric and squarefree. The first claim follows since if we swap $x_b$ with $x_a$ in $\widetilde{X}_{j,i}$ that will only change the sign of the determinant, and this occurs in $\prod_{1 \leq a < b \leq k} (x_b - x_a)$. The second claim follows since there are terms in  $\prod_{1 \leq a < b \leq k} (x_b - x_a)$ where the degree of an individual variable is $k-1$. Thus if $p_2$ is not square free, the degree of these variables becomes $> k$,  which is a contradiction. 

This implies that $p_2 = e_{k+1-j}(x_1, x_2, \dots, x_k)$.
\end{proof}

\paragraph{Acknowledgements.}
We thank Rocco Servedio for pointing us to \cite{Petrov-book}.

\bibliographystyle{alpha}
\bibliography{ref,OmniBib}

\end{document}